\theoremstyle{definition}
\newtheorem{definition}{Definition}[section]
\newtheorem{theorem}{Theorem}[section]
\newtheorem{lemma}{Lemma}[section]
\newtheorem{proposition}{Proposition}[section]
\newtheorem{corollary}{Corollary}[section]
\newtheorem{example}{Example}[section]
\numberwithin{equation}{section}
\newcommand{\D}{\ensuremath{q}} 
\newcommand{\ket}[1]{|{#1}\rangle}
\newcommand{\bra}[1]{\langle{#1}|}
\DeclareMathOperator{\Tr}{Tr}
\renewcommand{\vec}[1]{\mathbf{#1}}
\begin{document}

\title{Quantum weight enumerators and tensor networks}

\author{ChunJun~Cao\thanks{C. Cao is at the Institute for Quantum Information and Matter, California Institute of Technology.} 
and Brad~Lackey\thanks{B. Lackey is at Microsoft Quantum, Microsoft Corporation.}}

\maketitle

\begin{abstract}
We examine the use of weight enumerators for analyzing tensor network constructions, and specifically the quantum lego framework recently introduced. We extend the notion of quantum weight enumerators to so-called tensor enumerators, and prove that the trace operation on tensor networks is compatible with a trace operation on tensor enumerators. This allows us to compute  quantum weight enumerators of larger codes such as the ones constructed through tensor network methods more efficiently. We also provide a general framework for quantum MacWilliams identities that includes tensor enumerators as a special case.
\end{abstract}
\begin{IEEEkeywords}
Quantum codes, weight enumerators, tensor networks, MacWilliams identity
\end{IEEEkeywords}

\noindent{This work has been submitted to the IEEE for possible publication. Copyright may be transferred without notice, after which this version may no longer be accessible.}

\tableofcontents

\section{Introduction}

The weight enumerator of a classical code counts the number of codewords at each Hamming weight, typically presented as a polynomial in a single indeterminate $z$ where the coefficient of $z^w$ is the number of codewords of Hamming weight $w$. The distance of the code can be extracted as the smallest (nonzero) degree where the enumerator has a positive coefficient. The MacWilliams identity \cite{macwilliams1963theorem} famously relates the weight enumerator of a code with that of its dual code.

Shor and Laflamme introduced two quantum weight enumerators \cite{shor1997quantum} related by an analogue of the MacWilliams identity. For stabilizer codes, the $A$-enumerator counts the number of stabilizers at each Hamming weight (i.e. the number of non-identity Pauli operators) while the $B$-enumerator counts the number of logical operators at each Hamming weight. The distance of a stabilizer code is then the smallest degree where the coefficient (up to appropriate normalization) of $B$ strictly exceeds that of $A$. For completeness, we will recount this in Section \ref{section:scalar} below. Note that in terms of error detection, the difference of $B$ and $A$ expresses the probability of an undetected error even when the underlying code is not a stabilizer code \cite{ashikhmin2000quantum}.

Other quantum enumerators have been introduced, such as the ``shadow'' enumerator of \cite{rains1998quantum, rains1998shadow} to create bounds on the length, dimension, and distance for which quantum codes can exist. In \cite{hu2019complete, hu2020weight}, the authors introduce quantum enumerators for asymmetric errors that count the types of Pauli operators that appear in a stabilizer/logical operator. 

In this work we introduce a quantum weight enumerator formalism for tensor network constructions of quantum codes and in particular those from the recent quantum lego formalism \cite{cao2022quantum}\footnote{See also a related approach \cite{Farrelly}.}. We will also examine its detailed applications to quantum error correcting codes using tensor networks in a companion paper \cite{followup}. In particular, the enumerators of Shor and Laflamme (or Hu, et al.) are ``scalar'' enumerators in our theory. We introduce vector and tensor enumerators whose coefficients are scalar enumerators, and show that tracing these enumerators in a tensor network recovers the Shor and Laflamme enumerators. We prove an analogue of the MacWilliams identity for tensor enumerators (Theorem \ref{theorem:tensor-macwilliams-identity}).

We also prove tensor enumerators are well-behaved under tensor product of the underlying networks: the tensor enumerator of a tensor product is the tensor product of the component enumerators (Proposition \ref{proposition:tensor-homomorphism}). We also show that the trace of two legs in a tensor network extends to tensor enumerators, in that the enumerator of the trace can be computed by performing the trace of tensor enumerator along the corresponding legs (Theorem \ref{theorem:enumerator-trace}). In this way we can compute enumerators of large quantum codes constructed through tensor network methods.

\section{Definitions}

A unitary basis on a Hilbert space $\mathfrak{H}$ of $\dim(\mathfrak{H}) = \D$ is a set of unitary operators $\mathcal{E}$ containing $I$ that form an orthonormal basis of $L(\mathfrak{H})$, in that $\Tr(EF^\dagger) = 0$ if $E \not= F$  \cite{werner2001all}. While quantum weight enumerators can be defined relative to a general unitary basis, to obtain important results such as a MacWilliams identity we require the basis have some additional structure. We focus on a class of ``nice error bases'' \cite{knill1996group, knill1996non}, those with Abelian index group \cite{klappenecker2002beyond}. To avoid unnecessary repetition, we will simply refer to these as an ``error basis'' formally defined as follows.

\begin{definition}
    An \emph{error basis} on a Hilbert space is a unitary basis $\mathcal{E}$ that additionally satisfies
    \begin{equation}\label{equation:error-basis}
        EF = \omega(E,F) FE
    \end{equation}
    for all $E,F \in \mathcal{E}$, where $\omega(E,F) \in \mathbb{C}$ satisfies $\omega(E,F)^r = 1$ for some fixed $r$.
\end{definition}

Perhaps the most important example of such a basis in the (generalized) Pauli group. This group is generated by operators $X,Z$, where
$$Z\ket{j} = \zeta^j\ket{j} \text{ and } X\ket{j} = \ket{j+1 \text{ (mod $\D$)}},$$
where $\zeta = e^{2\pi i/\D}$. That is, the eigenvalues of $Z$ are the $\D$-th roots-of-unity, and on its eigenbasis $X$ acts as cyclic shift. Then $XZ = \zeta^{-1} ZX$. The Pauli basis is
$$\mathcal{P} = \{ X^a Z^{a'} \::\: a,a' \in \{0,\dots,\D-1\} \},$$
and $\omega(X^a Z^{a'}, X^b Z^{b'}) = \zeta^{(a'b - ab')}$.

Given an error basis for $\mathfrak{H}$, we obtain one for $\mathfrak{H}^{\otimes n}$ via
$$\mathcal{E}^n = \{ E_1 \otimes \cdots \otimes E_n \::\: \text{ each } E_j \in \mathcal{E}\}.$$
Equation (\ref{equation:error-basis}) holds where we define $\omega$ on $\mathcal{E}^n\times\mathcal{E}^n$ to be the product of the $\omega$ on each component. In particular, for the Pauli group $\omega(X^{\vec{a}} Z^{\vec{a}'}, X^{\vec{b}} Z^{\vec{b}'}) = \zeta^{\vec{a}'\cdot\vec{b} - \vec{a}\cdot\vec{b}'}$, where $X^{\vec{a}} = X^{a_1}\otimes \cdots \otimes X^{a_n}$ and similarly for the the other terms.

The weight of such an operator is the number of non-identity tensor factors it contains, $\mathrm{wt}(E_1 \otimes \cdots \otimes E_n) = |\{r \::\: E_r \not= I \}|$. Write $\mathcal{E}^n[d]$ for the set of weight $d$ operators in the error basis.

Given an error basis $\mathcal{E}$ on a Hilbert space $\mathcal{H}$, Shor and Laflamme \cite{shor1997quantum} define two quantum weight enumerators\footnote{We opt for the normalization in \cite{rains1998quantum} as opposed to that in \cite{shor1997quantum}, and take a different convention in conjugate linearity.} for a pair of Hermitian operators $M_1$ and $M_2$, 
\begin{align}
    \label{equation:scalar-A-weight} A_d(M_1,M_2) &= \sum_{E \in \mathcal{E}^n[d]} \Tr(E^\dagger M_1)\Tr(EM_2),\\
    \label{equation:scalar-B-weight} B_d(M_1,M_2) &= \sum_{E \in \mathcal{E}^n[d]} \Tr(E^\dagger M_1 E M_2).
\end{align}
We will often drop $M_1, M_2$ from this notation when they are understood. While it is not obvious from this definition, both these values are invariant under local unitary transformations. That is, if $U = U_1\otimes \cdots \otimes U_n$ is any local unitary then 
$$A_d(U M_1 U^\dagger, U M_2 U^\dagger) = A_d(M_1, M_2)$$ 
and similarly for $B_d$. For completeness, we provide a proof in \S{\ref{section:scalar}} but will show this in a broader context in later sections.

\begin{definition}
    The \emph{weight enumerator polynomials} $A$ and $B$ are
    \begin{align*}
        A(z) &= \sum_{d=0}^n A_d z^d = \sum_{E\in\mathcal{E}^n}  \Tr(E^\dagger M_1)\Tr(E M_2) z^{\mathrm{wt}(E)}\\
        B(z) &= \sum_{d=0}^n B_d z^d = \sum_{E\in\mathcal{E}^n}  \Tr(E^\dagger M_1 E M_2) z^{\mathrm{wt}(E)}.
    \end{align*}
\end{definition}

We will often find it useful to consider the enumerator polynomials in homogeneous form: 
$$A(w,z) = w^n A(z/w) = \sum_{d=0}^n A_d(M_1,M_2) w^{n-d}z^d,$$
and similar for $B$. These two enumerators are not independent, being linked by the quantum MacWilliams identity \cite{shor1997quantum, rains1998quantum}. In the case of $\mathfrak{H} = (\mathbb{C}^\D)^{\otimes n}$ this reads:
$$B(w,z) = A\left(\tfrac{w + (\D^2-1)z}{\D}, \tfrac{w-z}{\D}\right).$$
We will provide a simple proof of this in \S{\ref{section:macwilliams} below, as a consequence of a more general result. 

For classical codes with an alphabet size greater than two, a so-called complete enumerator may be constructed that also has desirable properties such as a MacWilliams identity \cite{macwilliams1963theorem, macwilliams1977theory}. Hu, Yang, and Yau \cite{hu2019complete} noted that quantum codes can be viewed in this light. For each $R \in \mathcal{E}$ introduce a weight function that counts the number of occurrences of $R$ as a factor, $\mathrm{wt}_R(E_1 \otimes \cdots \otimes E_n) = |\{r \::\: E_r = R \}|$ and an indeterminate $u_R$. The complete enumerator polynomials are
\begin{align*}
    E(z) &= \sum_{G\in\mathcal{E}^n}  \Tr(G^\dagger M_1)\Tr(G M_2) \prod_{R\in\mathcal{E}} u_R^{\mathrm{wt}_R(G)}\\
    F(z) &= \sum_{G\in\mathcal{E}^n}  \Tr(G^\dagger M_1 G M_2) \prod_{R\in\mathcal{E}} u_R^{\mathrm{wt}_R(G)}.
\end{align*}
Note that these polynomials are homogeneous of degree $n$. They are related to the usual Shor-Laflamme enumerators by $A(w,z) = E(w,z,\dots,z)$ and $B(w,z) = F(w,z,\dots,z)$.

In the case of the Pauli basis $\mathcal{P}$ of local dimension $\D$, Hu, Yang, and Yau also introduce a double enumerator \cite{hu2020weight}. We define $X$ and $Z$ weights by
\begin{align*}
    \mathrm{wt}^X(P_1 \otimes \cdots \otimes P_n) &= |\{j \::\: P_j = X^a Z^{a'} \text{ where } a \not= 0 \}|\\
    \mathrm{wt}^Z(P_1 \otimes \cdots \otimes P_n) &= |\{j \::\: P_j = X^a Z^{a'} \text{ where } a' \not= 0 \}|.
\end{align*}
For the Pauli group on $\mathbb{C}^2$ we have $\mathrm{wt}^X(\sigma_x) = \mathrm{wt}^X(\sigma_y) = 1$ and $\mathrm{wt}^Z(\sigma_y) = \mathrm{wt}^Z(\sigma_z) = 1$ (other values being zero). The double enumerator polynomials are
\begin{align*}
    C(x,z) &= \sum_{P\in\mathcal{P}^n}  \Tr(P^\dagger M_1)\Tr(P M_2) x^{\mathrm{wt}^X(P)}z^{\mathrm{wt}^Z(P)}\\
    D(x,z) &= \sum_{P\in\mathcal{P}^n}  \Tr(P^\dagger M_1 P M_2) x^{\mathrm{wt}^X(P)}z^{\mathrm{wt}^Z(P)}.
\end{align*}
To homogenize the double enumerators, we need to introduce two variables $w,y$ and define $C(w,x,y,z) = w^n y^n C(x/y,z/w)$ and similarly for $D$. That is, we homogenize in bidegree $(n,n)$ where $y$ is the homogenizing variable for $x$ and $w$ is that for $z$.

Both the double and complete enumerators are connected with versions of the quantum MacWilliams identity \cite{hu2020weight}. Like the MacWilliams identity for the Shor-Laflamme enumerators, these will follow from a more general approach given in \S{\ref{section:macwilliams}}}.

\section{Scalar enumerators}\label{section:scalar}

Fix an error basis on a Hilbert space $\mathfrak{H}$. The scalars in our theory are weight enumerator polynomials with respect to this basis. We will focus on the Shor-Laflamme enumerator polynomials $A,B \in \mathbb{R}[w,z]$, but most of our results also apply the double and complete enumerator polynomials of Hu, Yang, and Yau, $C,D \in \mathbb{R}[w,x,y,z]$ and $E,F\in \mathbb{R}[\{u_R\}]$.

Write $|E|$ for the support of $E$: the set of indices $j$ with $E_j \not= I$. In particular, the weight of $E$ is the cardinality of $|E|$. Let $U = U_1 \otimes \cdots \otimes U_n$ be a local unitary transformation on $\mathfrak{H}^{\otimes n}$. Note $|U^\dagger E U| = |E|$ for any $E \in \mathcal{E}^n$. As the elements of $\mathcal{E}^n$ form an orthogonal basis (under trace-norm) for $\mathfrak{H}^{\otimes n}$ we can write 
$$U^\dagger E U = \frac{1}{q^n} \sum_{|F| = |E|} \Tr(U^\dagger E U F^\dagger) F$$
as the trace vanishes unless $|F| = |U^\dagger E U| = |E|$. Identically 
$$U^\dagger E^\dagger U = \frac{1}{q^n} \sum_{|F| = |E|} \Tr(U^\dagger E^\dagger UF) F^\dagger.$$

\begin{lemma}\label{lemma:bilinear-sum}
We have 
\begin{align*}
    &\frac{1}{q^n} \sum_{E\in \mathcal{E}^n[d]} \Tr(U^\dagger E^\dagger U F) \Tr(U^\dagger E U G^\dagger)\\
    &\qquad = \left\{\begin{array}{cl} q^n & \text{ if $F = G \in \mathcal{E}^n[d]$}\\ 0 & \text{ otherwise.}\end{array}\right.
\end{align*}
\end{lemma}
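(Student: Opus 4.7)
The plan is to recognize the identity as the unitarity of the change-of-basis matrix between the error basis and its image under conjugation by $U$. First, I would note that the map $A \mapsto U^\dagger A U$ is an isometry on $L(\H^{\otimes n})$ with respect to the Hilbert--Schmidt inner product $\langle A, B\rangle = \Tr(A^\dagger B)$, so the rescaled error basis $\{q^{-n/2} E : E \in \mathcal{E}^n\}$, being orthonormal, is sent to another orthonormal basis. Consequently, the coefficients $a_{E,F} = q^{-n}\Tr(U^\dagger E U F^\dagger)$ appearing in the expansion $U^\dagger E U = \sum_F a_{E,F} F$ displayed in the paper form a unitary matrix.

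Next, I would rewrite the two trace factors on the left-hand side in terms of these coefficients. Using $\overline{\Tr(X)} = \Tr(X^\dagger)$, one checks
$$\Tr(U^\dagger E^\dagger U F) = q^n\,\overline{a_{E,F}}, \qquad \Tr(U^\dagger E U G^\dagger) = q^n\, a_{E,G},$$
so the left-hand side of the lemma equals
$$q^n \sum_{E \in \mathcal{E}^n[d]} \overline{a_{E,F}}\, a_{E,G}.$$

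The third step is to extend the summation from $\mathcal{E}^n[d]$ to all of $\mathcal{E}^n$ without changing its value. The key input is locality: because $U = U_1 \otimes \cdots \otimes U_n$, conjugation preserves the set-valued support, so $U^\dagger E U = \bigotimes_j (U_j^\dagger E_j U_j)$ has the same support as $E$ on each factor. Hence $a_{E,F}$ vanishes unless $|E| = |F|$ as subsets of $\{1,\ldots,n\}$, and in particular unless $\mathrm{wt}(E)=\mathrm{wt}(F)$. If either $F$ or $G$ has weight different from $d$, then every term in the sum already vanishes, consistent with the claimed $0$; if $F,G \in \mathcal{E}^n[d]$, then only $E$ of weight $d$ can contribute and the restriction is free.

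With the sum extended to all of $\mathcal{E}^n$, unitarity of $(a_{E,F})$ yields $\sum_{E} \overline{a_{E,F}}\,a_{E,G} = \delta_{F,G}$, giving $q^n$ when $F = G \in \mathcal{E}^n[d]$ and $0$ otherwise, as claimed. The proof is essentially a reformulation of orthonormality of the conjugated error basis, and the main obstacle is purely notational: tracking the factor of $q^n$ that arises from the rescaling, and correctly placing the complex conjugate produced when converting $E^\dagger$ to $E$ inside the trace.
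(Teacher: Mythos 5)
Your proof is correct, and at its core it runs on the same two facts as the paper's argument: conjugation by $U$ is a Hilbert--Schmidt isometry so the conjugated error basis is again orthonormal, and locality of $U$ preserves supports so only weight-$d$ coefficients can be nonzero, which lets you pass between the restricted and the full sum. The difference is organizational: you package the orthonormality as unitarity of the transition matrix $a_{E,F} = q^{-n}\Tr(U^\dagger E U F^\dagger)$ and read off column orthonormality $\sum_E \overline{a_{E,F}}\,a_{E,G} = \delta_{F,G}$ in one stroke, whereas the paper verifies the same relations by direct trace computations --- expanding $\Tr(E^\dagger E) = q^n$ to get the diagonal case (followed by an ``exchange the roles of $E$ and $F$'' step, i.e.\ passing from row to column orthonormality) and expanding $\Tr(FG^\dagger) = 0$ separately for the off-diagonal case. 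Your version is arguably cleaner, since the appeal to unitarity of a square matrix with orthonormal rows subsumes both cases and avoids the role-exchange step, which is the least transparent part of the paper's proof; the paper's version has the merit of staying entirely at the level of explicit trace identities. Your bookkeeping of the conjugate ($\Tr(U^\dagger E^\dagger U F) = q^n\overline{a_{E,F}}$) and of the $q^n$ normalization checks out, and your case analysis (either some weight differs from $d$, forcing every term to vanish, or $F,G \in \mathcal{E}^n[d]$ and the restriction to weight-$d$ $E$ is free) covers all branches of the claim.
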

\begin{proof}
From above, for any $E \in \mathcal{E}^n[d]$ we have
\begin{align*}
    q^n &= \Tr(I) = \Tr(E^\dagger E)\\
    &= \frac{1}{q^{2n}} \sum_{F,G \in \mathcal{E}^n[d]}  \Tr(U^\dagger E^\dagger U F) \Tr(U^\dagger E U G^\dagger) \Tr(F^\dagger G)\\
    &= \frac{1}{q^n} \sum_{F \in \mathcal{E}^n[d]} \Tr(U^\dagger E^\dagger U F) \Tr(U^\dagger E U F^\dagger).
\end{align*}
Exchanging the role of $E$ and $F$ in this final expression proves the claimed result with $F = G \in \mathcal{E}^n[d]$. If $F=G$ but it is not in $\mathcal{E}^n[d]$ then we have already seen $\Tr(U^\dagger E U F^\dagger) = 0$ for any $E\in\mathcal{E}^n[d]$.

Finally, if $F\not= G$ then similar to the above
\begin{align*}
    0 &= \Tr(F G^\dagger) = \Tr(U F U^\dagger U G^\dagger U^\dagger)\\
    &= \frac{1}{q^{2n}} \sum_{E_1,E_2 \in \mathcal{E}^n[d]}  \Tr(U F U^\dagger E_1^\dagger) \Tr(U G^\dagger U^\dagger E_2) \Tr(E_1 E_2^\dagger)\\
    &= \frac{1}{q^n} \sum_{E \in \mathcal{E}^n[d]} \Tr(U F U^\dagger E^\dagger) \Tr(U G^\dagger U^\dagger E)\\
    &= \frac{1}{q^n} \sum_{E \in \mathcal{E}^n[d]} \Tr(U^\dagger E^\dagger U F) \Tr(U^\dagger E U G^\dagger)
\end{align*}
as claimed.
\end{proof}

\begin{theorem}\label{thm:scalar-unitary-invariance}
    Let $U$ be any local unitary on $\mathfrak{H}^{\otimes n}$. Then $$A(z; UM_1U^\dagger, UM_2U^\dagger) = A(z; M_1, M_2).$$
\end{theorem}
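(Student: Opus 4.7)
The plan is to prove $A_d(UM_1U^\dagger, UM_2U^\dagger) = A_d(M_1, M_2)$ coefficient by coefficient, then reassemble into the polynomial identity. First I would use cyclicity of the trace to move the unitaries off the operator expressions: $\Tr(E^\dagger U M_1 U^\dagger) = \Tr(U^\dagger E^\dagger U M_1)$ and $\Tr(E U M_2 U^\dagger) = \Tr(U^\dagger E U M_2)$. This puts the conjugated error operators next to the unconjugated $M_i$, which is precisely the form appearing in Lemma \ref{lemma:bilinear-sum}.

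Next I would expand $M_1$ and $M_2$ in the error basis. Since $\mathcal{E}^n$ is orthogonal with $\Tr(F^\dagger F) = q^n$, any operator $M$ satisfies $M = q^{-n}\sum_F \Tr(F^\dagger M) F$. Using the Hermiticity of $M_2$, I can rewrite this expansion in the dual form $M_2 = q^{-n}\sum_G \Tr(G M_2) G^\dagger$ (by applying $(\cdot)^\dagger$ to both sides and relabeling), so that after substitution the two inner traces involving $E$ match the shape $\Tr(U^\dagger E^\dagger U F)\,\Tr(U^\dagger E U G^\dagger)$ appearing in the lemma.

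Substituting the expansions into $A_d(UM_1U^\dagger, UM_2U^\dagger)$ yields
\begin{align*}
A_d(UM_1U^\dagger, UM_2U^\dagger) = \frac{1}{q^{2n}} \sum_{F,G \in \mathcal{E}^n} & \Tr(F^\dagger M_1)\Tr(G M_2) \\
& \times \sum_{E\in\mathcal{E}^n[d]} \Tr(U^\dagger E^\dagger U F)\Tr(U^\dagger E U G^\dagger).
\end{align*}
Now apply Lemma \ref{lemma:bilinear-sum}: the inner sum over $E$ equals $q^{2n}$ when $F=G \in \mathcal{E}^n[d]$ and vanishes otherwise. The $q^{2n}$ factors cancel the prefactor, leaving exactly $\sum_{F \in \mathcal{E}^n[d]} \Tr(F^\dagger M_1)\Tr(F M_2) = A_d(M_1, M_2)$. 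Since this holds for every $d$, multiplying by $z^d$ and summing gives the polynomial identity. The unitary invariance of $B(z)$ would follow by a virtually identical argument, with the same double expansion but collecting $\Tr(U^\dagger E^\dagger U M_1 U^\dagger E U M_2)$ into a single trace before applying the lemma.

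The one genuinely delicate step is the expansion of $M_2$ in the $G^\dagger$ basis: one must use Hermiticity of $M_2$ rather than try to invoke closure of the error basis under adjoint (which only holds up to phases in the general nice-error-basis setting). Once that rewriting is in place, the proof is essentially a single application of the lemma.
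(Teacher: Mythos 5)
Your proposal is correct and follows essentially the same route as the paper: cyclicity of the trace, expansion of $M_1$ in the error basis and $M_2$ in the adjoint basis $\{G^\dagger\}$, and then a single application of Lemma \ref{lemma:bilinear-sum} to collapse the double sum to the diagonal $F=G\in\mathcal{E}^n[d]$. (Minor aside: the dual expansion $M_2 = q^{-n}\sum_G \Tr(G M_2)G^\dagger$ does not actually require Hermiticity, since $\{G^\dagger\}$ is itself an orthogonal basis; but your Hermiticity-based derivation is also valid here.)
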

\begin{proof}
    We expand $M_1 = \frac{1}{q^n} \sum_F \Tr(F^\dagger M_1) F$ and similarly $M_2 = \frac{1}{q^n} \sum_G \Tr(G M_2) G^\dagger$. Then from the lemma,
    \begin{align*}
        &A(z; UM_1U^\dagger, UM_2U^\dagger)\\ 
        &\quad = \sum_{d=0}^n \sum_{E \in \mathcal{E}^n[d]} \Tr(U^\dagger E^\dagger U M_1) \Tr(U^\dagger E U M_2) z^d\\
        &\quad = \tfrac{1}{q^{2n}}\sum_{d=0}^n \sum_{F,G} \sum_{E \in \mathcal{E}^n[d]} \Tr(U^\dagger E^\dagger U F) \Tr(U^\dagger E U G^\dagger)\\
        &\quad\qquad\qquad\cdot \Tr(F^\dagger M_1) \Tr(G M_2) z^d\\
        &\quad = \sum_{d=0}^n \sum_{F \in \mathcal{E}^n_d} \Tr(F^\dagger M_1) \Tr(F M_2) z^d = A(z; M_1, M_2).
    \end{align*}
\end{proof}

One could provide a similar proof for the invariance of $B$, however this follows immediately from the quantum MacWilliams identity:
\begin{align*}
    &B(w,z; UM_1U^\dagger, UM_2U^\dagger)\\
    &= A\left(\tfrac{w + (\D^2-1)z}{\D}, \tfrac{w-z}{\D}; UM_1U^\dagger, UM_2U^\dagger\right)\\
    &= A\left(\tfrac{w + (\D^2-1)z}{\D}, \tfrac{w-z}{\D}; M_1,M_2\right)\\
    &= B(w,z; M_1,M_2).
\end{align*}
Note that none of $C$, $D$, $E$, or $F$ are locally unitarily invariant as they depend on counting specific Pauli operators in each factor.

\begin{lemma}\label{lemma:scalar-homomorphism}
    We have
    $$A(z; M_1\otimes M_1', M_2\otimes M_2') = A(z; M_1, M_2)\cdot A(z; M_1', M_2')$$
    and similarly for $B, C, D, E, F$.
\end{lemma}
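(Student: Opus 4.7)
The plan is to use the fact that the error basis on $\mathfrak{H}^{\otimes(n+m)}$ is built from the error bases on the two factors, that the weight functions are additive across the split, and that traces factor over tensor products. All six statements follow the same template, so I would prove the $A$ case carefully and then indicate the trivial modifications needed for the others.

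First I would observe that every $E \in \mathcal{E}^{n+m}$ factors uniquely as $E = E_1 \otimes E_2$ with $E_1 \in \mathcal{E}^n$ and $E_2 \in \mathcal{E}^m$, and that all the relevant weight functions are additive under this factorization: $\mathrm{wt}(E_1 \otimes E_2) = \mathrm{wt}(E_1) + \mathrm{wt}(E_2)$, and similarly $\mathrm{wt}^X$, $\mathrm{wt}^Z$, and each $\mathrm{wt}_R$ are all additive. Hence indeterminates like $z^{\mathrm{wt}(E)}$, or $x^{\mathrm{wt}^X(E)} z^{\mathrm{wt}^Z(E)}$, or $\prod_R u_R^{\mathrm{wt}_R(E)}$ split as a product of the corresponding monomials for $E_1$ and $E_2$.

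Next, for the $A$-type enumerators I would use $\Tr((E_1 \otimes E_2)^\dagger (M_1 \otimes M_1')) = \Tr(E_1^\dagger M_1)\,\Tr(E_2^\dagger M_1')$ and the analogous identity with $M_2, M_2'$, so the summand in $A(z; M_1\otimes M_1', M_2\otimes M_2')$ becomes a product of a summand indexed by $E_1$ and one indexed by $E_2$. Swapping sum and product, the double sum factors as
\begin{align*}
&\sum_{E_1\in\mathcal{E}^n}\Tr(E_1^\dagger M_1)\Tr(E_1 M_2)\,z^{\mathrm{wt}(E_1)}\\
&\quad\cdot \sum_{E_2\in\mathcal{E}^m}\Tr(E_2^\dagger M_1')\Tr(E_2 M_2')\,z^{\mathrm{wt}(E_2)},
\end{align*}
which is $A(z;M_1,M_2)\cdot A(z;M_1',M_2')$.

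For $B$, the key algebraic identity is $(E_1\otimes E_2)^\dagger (M_1\otimes M_1')(E_1\otimes E_2)(M_2\otimes M_2') = (E_1^\dagger M_1 E_1 M_2)\otimes(E_2^\dagger M_1' E_2 M_2')$, so taking $\Tr$ of both sides again produces a product of the two expected summands. For $C, D, E, F$ the proof is word-for-word the same, except that in place of $z^{\mathrm{wt}(E)}$ one uses the appropriate monomial in $x, z$ or in the $u_R$; additivity of $\mathrm{wt}^X, \mathrm{wt}^Z$, and each $\mathrm{wt}_R$ ensures that these monomials also split. There is no real obstacle here beyond bookkeeping; the only thing to be careful about is the degrees in the homogenized versions, but since the tensor factors contribute $w^{n-d}$ and $w^{m-d'}$ to their respective homogeneous forms, these combine correctly to $w^{(n+m)-(d+d')}$, and similarly for the $y$-homogenization of $C, D$.
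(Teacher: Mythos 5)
Your proposal is correct and follows essentially the same argument as the paper: additivity of the weight functions together with multiplicativity of traces over tensor products lets the double sum factor, with the $B,C,D,E,F$ cases being routine variants. The only difference is cosmetic—you expand the tensor-product enumerator into the product, while the paper combines the product into the tensor-product enumerator—and your explicit identity for the $B$ case is a detail the paper simply omits as ``similar.''
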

\begin{proof}
Note that weights are additive in the sense that if $\mathrm{wt}(E_1) = d_1$ and $\mathrm{wt}(E_2) = d_2$ then $\mathrm{wt}(E_1\otimes E_2) = d_1 + d_2$. Then
\begin{align*}
    &A(z; M_1, M_2)\cdot A(z; M_1', M_2')\\
    &\ = \left(\sum_{E_1}  \Tr(E_1^\dagger M_1)\Tr(E_1M_1') z^{\mathrm{wt}(E_1)}\right)\\
    &\qquad \cdot \left(\sum_{E_2}  \Tr(E_2^\dagger M_2)\Tr(E_2M_2')) z^{\mathrm{wt}(E_2)}\right)\\
    &\ = \sum_{E_1,E_2} \Tr((E_1 \otimes E_2)^\dagger (M_1\otimes M_2))\\
    &\qquad\qquad \cdot \Tr((E_1\otimes E_2)(M_1'\otimes M_2')) z^{\mathrm{wt}(E_1) + \mathrm{wt}(E_2)}\\
    &= A(z; M_1\otimes M_1', M_2\otimes M_2').
\end{align*}
The proofs for $B, C, D, E, F$ are similar and so omitted.
\end{proof}

\begin{example}[\cite{shor1997quantum}] Let $\Pi_{\mathcal{C}}$ be the projection onto $[[n,k]]$ binary stabilizer code $\mathfrak{C}$, and write $\mathcal{S} = \mathcal{S}(\mathcal{C}) = \langle S_1, \dots, S_{n-k}\rangle$ for its stabilizer group. Then
$$\Pi_{\mathcal{C}} = \frac{1}{2^{n-k}} \prod_{j=1}^{n-k} (I + S_j) = \frac{1}{2^{n-k}} \sum_{S \in \mathcal{S}} S.$$
For any Pauli operator $P \in \mathcal{P}^n$ we have
$$\sum_{S \in \mathcal{S}} \Tr(PS) = \left\{\begin{array}{cl} 2^n & \text{ if $P \in \mathcal{S}$,}\\ 0 & \text{otherwise.}\end{array}\right.$$
Therefore
$$A_d = \frac{1}{4^{n-k}}\sum_{P \in \mathcal{P}^n[d]} \sum_{S,S'\in \mathcal{S}} \Tr(PS)\Tr(PS')
= 4^k\cdot \left| \mathcal{E}_d \cap \mathcal{S} \right|.$$
That is, properly normalized, the weight enumerator
$$\frac{1}{4^k}\cdot A(z;\mathfrak{C}) = \sum_{d=0}^n \left| \mathcal{P}^n[d] \cap \mathcal{S} \right| z^d$$
counts the number of stabilizers of each weight.

If a Pauli operator $P \in \mathcal{N} = \mathcal{N}(\mathfrak{C})$ (the normalizer of the code) then $P\Pi_\mathfrak{C} = \Pi_\mathfrak{C}P$ and so 
$$\Tr(P\Pi_\mathfrak{C}P\Pi_\mathfrak{C}) = \Tr(\Pi_\mathfrak{C}) = 2^k.$$
On the other hand, if $P \not\in \mathcal{N}$ then there is some $j_0$ where $P S_{j_0} = - S_{j_0} P$. Thus
$$P\Pi_\mathfrak{C}P\Pi_\mathfrak{C} \propto (I - S_{j_0})(I + S_{j_0})\prod_{j\not= j_0}(I \pm S_j)(I + S_j) = 0.$$
Therefore
$$B_d = \sum_{P\in \mathcal{P}^n[d]} \Tr(P\Pi_\mathfrak{C}P\Pi_\mathfrak{C}) = 2^k\cdot \left| \mathcal{E}_d \cap \mathcal{N}\right|,$$
and
$$\frac{1}{2^k} B(z;\mathfrak{C}) = \sum_{d=0}^n \left| \mathcal{P}^n[d] \cap \mathcal{N} \right| z^d$$
counts the number of normalizers of each weight. Hence, the distance of the code $\mathfrak{C}$ is the smallest $d$ such that $A_d \not= B_d$.
\end{example}

Note that the above example carries over to $\D$-ary stabilizer codes \cite{ashikhmin2001nonbinary}, adjusting the normalization appropriately. The double and complete weight enumerators further refine this information allowing separate analysis of $X$ and $Z$ errors \cite{ioffe2007asymmetric, sarvepalli2008asymmetric, stephens2008asymmetric}.

\begin{lemma}
    Let $\Pi = \ket\psi\bra\psi$ be the projector onto any state. Then $A(z;\Pi) = B(z;\Pi)$.
\end{lemma}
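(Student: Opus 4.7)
The plan is to compute $A_d$ and $B_d$ directly by exploiting the very special structure of a rank-one projector. Writing $\Pi = \ket\psi\bra\psi$, any trace of the form $\Tr(X\Pi)$ collapses to a single expectation value: $\Tr(X\Pi) = \bra\psi X \ket\psi$. This should trivialize both enumerators and make the equality manifest term-by-term at each weight $d$.

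Concretely, for the $A$-enumerator I would expand
\[
\Tr(E^\dagger \Pi)\,\Tr(E\Pi) \;=\; \bra\psi E^\dagger \ket\psi \,\bra\psi E \ket\psi \;=\; |\bra\psi E \ket\psi|^2.
\]
For the $B$-enumerator, the same mechanism applies because $\Pi$ sandwiched between two operators inserts a $\ket\psi\bra\psi$ in the middle, breaking the trace into a product of matrix elements:
\[
\Tr(E^\dagger \Pi E \Pi) \;=\; \Tr\bigl(E^\dagger \ket\psi\bra\psi E \ket\psi\bra\psi\bigr) \;=\; \bra\psi E^\dagger \ket\psi\,\bra\psi E \ket\psi \;=\; |\bra\psi E \ket\psi|^2.
\]
Summing over $E \in \mathcal{E}^n[d]$ gives $A_d(\Pi,\Pi) = B_d(\Pi,\Pi) = \sum_{E \in \mathcal{E}^n[d]} |\bra\psi E \ket\psi|^2$, and multiplying by $z^d$ and summing over $d$ yields $A(z;\Pi) = B(z;\Pi)$.

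There is no real obstacle here; the statement is essentially an observation about rank-one operators. The only thing to be careful about is the conjugate-linearity convention the authors adopted in equations \eqref{equation:scalar-A-weight}--\eqref{equation:scalar-B-weight}, which pairs $E^\dagger$ with $M_1$ and $E$ with $M_2$; since $M_1 = M_2 = \Pi$ is Hermitian, this causes no issue and the $A$-summand is manifestly a squared modulus, matching the $B$-summand exactly.
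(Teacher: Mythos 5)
Your proposal is correct and is essentially identical to the paper's own proof: both reduce each summand to $|\bra\psi E \ket\psi|^2$ by using the rank-one structure $\Pi = \ket\psi\bra\psi$ to collapse $\Tr(E^\dagger \Pi E \Pi)$ and $\Tr(E^\dagger \Pi)\Tr(E\Pi)$ to the same quantity weight by weight. Nothing further is needed.
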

\begin{proof} For each $d=0,\dots, n$
    \begin{align*}
        B_d &= \sum_{E \in \mathcal{E}_d} \Tr(E^\dagger\ket\psi\bra\psi E \ket\psi\bra\psi) = \sum_{E \in \mathcal{E}_d} | \bra\psi E \ket\psi |^2  \\
        &= \sum_{E \in \mathcal{E}_d} \Tr(E^\dagger \ket\psi\bra\psi)\Tr(E\ket\psi\bra\psi) = A_d.
    \end{align*}
\end{proof}

\begin{definition}
    Let $\mathfrak{C}$ be a $[[n,k]]$ quantum code. Its \emph{encoding tensor} is the (unnormalized) Choi state 
$$\ket{T_\mathfrak{C}} = \sum_{x\in\mathbb{F}_2^k} \ket{x_L}\otimes \ket{x} = \sum_{x\in\mathbb{F}_2^k} U(\ket{x}\otimes \ket{0}^{\otimes(n-k)})\otimes \ket{x},$$
where $U$ is the unitary encoding map of the code.
\end{definition}

\begin{theorem}\label{thm:scalar-enumerator-encoded}
Let $\mathfrak{C}$ be a $[[n,1]]$-stabilizer code, $\ket{T_\mathfrak{C}}$ its encoding tensor, and $\Pi = \ket{T_\mathfrak{C}}\bra{T_\mathfrak{C}}$. Then
$$A_d(\Pi) = \tfrac{1}{4}A_d(\mathfrak{C}) + \tfrac{1}{2}B_{d-1}(\mathfrak{C}) - \tfrac{1}{4}A_{d-1}(\mathfrak{C}).$$
In particular, the distance of the code is the smallest $d$ such that $A_{d+1}(\Pi) > \tfrac{1}{4}A_{d+1}(\mathfrak{C})$.
\end{theorem}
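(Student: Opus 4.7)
My plan is to evaluate $A_d(\Pi)$ by decomposing each Pauli $E \in \mathcal{P}^{n+1}[d]$ as $E = P \otimes Q$, where $P$ acts on the $n$ physical qubits and $Q \in \{I, X, Y, Z\}$ acts on the single reference qubit. Since $\Pi$ is Hermitian and (up to normalization) rank one, $\Tr(E^\dagger\Pi)\Tr(E\Pi) = |\bra{T_\mathfrak{C}} E \ket{T_\mathfrak{C}}|^2$, so the whole calculation reduces to computing these matrix elements and counting.

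Writing $\ket{T_\mathfrak{C}} = \ket{0_L}\ket{0} + \ket{1_L}\ket{1}$ and expanding using the logical Pauli operators, I expect to obtain
\begin{align*}
\bra{T_\mathfrak{C}}(P \otimes I)\ket{T_\mathfrak{C}} &= \Tr(P\,\Pi_\mathfrak{C}),\\
\bra{T_\mathfrak{C}}(P \otimes Q)\ket{T_\mathfrak{C}} &= \pm\,\Tr(P\,\bar{Q}\,\Pi_\mathfrak{C}) \quad \text{for } Q \in \{X,Y,Z\},
\end{align*}
where $\bar{Q}$ denotes the logical Pauli corresponding to $Q$; the sign disappears after taking the modulus squared.

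I would then stratify the sum for $A_d(\Pi)$ by $Q$. The $Q=I$ stratum picks out $P \in \mathcal{P}^n[d]$ and contributes (after the appropriate $1/4^k$ normalization) the term $\tfrac{1}{4}A_d(\mathfrak{C})$ by the identification of the stabilizer count established in the Example. For each $Q \in \{X,Y,Z\}$, the quantity $|\Tr(P\bar{Q}\Pi_\mathfrak{C})|^2$ is nonzero precisely when $P$ lies, up to sign, in the logical coset $\bar{Q}\mathcal{S}$. These three cosets disjointly exhaust $\mathcal{N}\setminus\mathcal{S}$ as sets of unsigned Paulis, so summing over them with $P \in \mathcal{P}^n[d-1]$ yields $|\mathcal{N}_{d-1}| - |\mathcal{S}_{d-1}| = \tfrac{1}{2}B_{d-1}(\mathfrak{C}) - \tfrac{1}{4}A_{d-1}(\mathfrak{C})$. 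Assembling the two contributions gives the displayed formula.

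The distance claim follows immediately: the identity rewrites $A_{d+1}(\Pi) - \tfrac{1}{4}A_{d+1}(\mathfrak{C})$ as $|\mathcal{N}_d| - |\mathcal{S}_d|$, whose strict positivity is exactly the existence of a weight-$d$ non-trivial logical operator, so the smallest such $d$ is the code distance. The main technical point is verifying the disjoint coset decomposition $\mathcal{N} = \mathcal{S} \sqcup \bar{X}\mathcal{S} \sqcup \bar{Y}\mathcal{S} \sqcup \bar{Z}\mathcal{S}$ at the level of unsigned Paulis, which uses $-I \notin \mathcal{S}$ to rule out collisions between stabilizer and logical cosets.
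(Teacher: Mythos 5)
Your proposal is correct and takes essentially the same route as the paper: the paper identifies the stabilizer group of the Choi state as $(\mathcal{S}(\mathfrak{C})\otimes I)\cup(\bar{X}\mathcal{S}(\mathfrak{C})\otimes X)\cup(-\bar{Y}\mathcal{S}(\mathfrak{C})\otimes Y)\cup(\bar{Z}\mathcal{S}(\mathfrak{C})\otimes Z)$ and counts its weight-$d$ elements, which is exactly the stratification by the reference-qubit Pauli $Q$ that you carry out via the matrix elements $\bra{T_\mathfrak{C}}(P\otimes Q)\ket{T_\mathfrak{C}} = \pm\Tr(P\bar{Q}\,\Pi_\mathfrak{C})$. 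Both arguments rest on the same disjoint coset decomposition of $\mathcal{N}(\mathfrak{C})$ and the identifications $\tfrac{1}{4}A_d(\mathfrak{C}) = |\mathcal{S}(\mathfrak{C})\cap\mathcal{E}_d|$ and $\tfrac{1}{2}B_d(\mathfrak{C}) = |\mathcal{N}(\mathfrak{C})\cap\mathcal{E}_d|$, with the same weight bookkeeping ($d$ versus $d-1$) on the two strata.
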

\begin{proof}
First note that the normalizer of the code is the disjoint union
$$\mathcal{N}(\mathfrak{C}) = \mathcal{S}(\mathfrak{C}) \cup \bar{X}\mathcal{S}(\mathfrak{C}) \cup \bar{Y}\mathcal{S}(\mathfrak{C}) \cup \bar{Z}\mathcal{S}(\mathfrak{C})$$
where $\bar{X},\bar{Y},\bar{Z}$ are any choice of representations for the logical Pauli operators. Recall that $\frac{1}{2}B_d(\mathfrak{C}) = |\mathcal{N}(\mathfrak{C}) \cap \mathcal{E}_d|$ and $\frac{1}{4}A_d(\mathfrak{C}) = |\mathcal{S}(\mathfrak{C}) \cap \mathcal{E}_d|$, and so the number of proper logical operators of weight $d$ is $\tfrac{1}{2}B_{d}(\mathfrak{C}) - \tfrac{1}{4}A_{d}(\mathfrak{C})$. 

The encoding tensor state $\ket{T_\mathfrak{C}}$ is a stabilizer state (namely a $[[n+1,0]]$-stabilizer code). If we write the unencoded qubit as the $(n+1)^\text{st}$ qubit of the state, the stabilizers of this state form the disjoint union
\begin{align*}
    \mathcal{S}(\ket{T_\mathfrak{C}}) &= (\mathcal{S}(\mathfrak{C})\otimes I)  \cup (\bar{X}\mathcal{S}(\mathfrak{C}) \otimes X)\\
    & \qquad \cup (-\bar{Y}\mathcal{S}(\mathfrak{C}) \otimes Y) \cup (\bar{Z}\mathcal{S}(\mathfrak{C}) \otimes Z)
\end{align*}
where we have abused notation slightly to write, say, $\bar{X}\mathcal{S}(\mathfrak{C}) \otimes X = \{ \bar{X}S \otimes X \::\: S \in \mathcal{S}(\mathfrak{C})\}$. The weight $d$ elements of $\mathcal{S}(\mathfrak{C})\otimes I$ are naturally bijective with those of $\mathcal{S}(\mathfrak{C})$, of which there are $\frac{1}{4}A_d(\mathfrak{C})$ of them. The weight $d$ elements of $\bar{X}\mathcal{S}(\mathfrak{C}) \otimes X$ are naturally bijective with the weight $d-1$ elements of $\bar{X}\mathcal{S}(\mathfrak{C})$, and similarly for $\bar{Y}$ and $\bar{Z}$. Hence there are $\tfrac{1}{2}B_{d-1}(\mathfrak{C}) - \tfrac{1}{4}A_{d-1}(\mathfrak{C})$ of these in total.
\end{proof}

To illustrate this theorem, consider the perfect $[[5,1,3]]$ code and its encoding tensor, which is the perfect $[[6,0,4]]$ tensor, that play a prominent role in the constructions of \cite{HaPPY},
\begin{align*}
\tfrac{1}{4} \cdot A_{[[5,1,3]]}(z) &= 1 + 15 z^4,\\
\tfrac{1}{2} \cdot B_{[[5,1,3]]}(z) &= 1 + 30z^3 + 15z^4 + 18z^5\\
A_{[[6,0,4]]}(z) & = 1 + 45 z^4 + 18 z^6.
\end{align*}

Clearly this theorem cannot not hold for $k > 1$. In fact, one would not expect any natural extension of this result to $k > 1$ using the scalar enumerators we have discussed above. Consider for instance $k=2$; the encoding tensor will be stabilized by sets of operators $(\bar{X}\otimes\bar{I})\mathcal{S}(\mathfrak{C})\otimes X\otimes I$ and $(\bar{X}\otimes\bar{Y})\mathcal{S}(\mathfrak{C})\otimes X\otimes Y$, among others. In weight $d$, operators in the first set are counted in $B_d(\mathcal{C})$ and $A_{d+1}(\Pi)$ while those in the second set are also counted in $B_d(\mathcal{C})$ but in $A_{d+2}(\Pi)$. Hence to obtain such a result as in the theorem, we would need to know how to decompose $B_d(\mathcal{C})$ into counts of logical operators based on the weight of the logical operator they represent. To illustrate this point, consider the $2\times 2$ Bacon-Shor codes as a $[[4,2,2]]$ stabilizer code, and it encoding tensor as a $[[6,0,3]]$ stabilizer state, that play a prominent role in the constructions of \cite{cao2022quantum},
\begin{align*}
\tfrac{1}{16}A_{[[4,2,2]]}(z) &= 1 + 3 z^4,\\
\tfrac{1}{4}B_{[[4,2,2]]}(z) &= 1 + 18z^2 + 24z^3 + 21z^4,\\
A_{[[6,0,3]]}(z) &= 1 + 8z^3 + 21z^4 + 24z^5 + 10z^6.
\end{align*}

Nonetheless, as $A(w,z;\ket\psi\bra\psi) = B(w,z;\ket\psi\bra\psi)$ we have the $A$-enumerator of an encoding tensor (or rank-one projection in general) is invariant under the quantum MacWilliams transform, akin to the situation of classical self-dual codes. Thus enumerator polynomials of encoding tensors cannot be arbitrary, but must lie in the invariant ring of the group
$$G = \left\langle\frac{1}{q}\left(\begin{array}{cc} 1 & q^2-1 \\ 1 & -1\end{array}\right)\right\rangle.$$
As $|G| = 2$ this ring is trivial to compute and we conclude
$$A(w,z;\Pi) \in \mathbb{R}[w + (q-1)z, w^2 + (q-2)wz + (q^2 - q + 1)z^2].$$
In the case of $\D = 2$, we can make this more explicit as
\begin{align*}
    A(w,z;\ket{0}\bra{0}) = w + z,\\
    A(w,z;\ket{\beta}\bra{\beta}) = w^2 + 3z^2,
\end{align*}
where $\beta = \frac{1}{\sqrt{2}}(\ket{00} + \ket{11})$ is the Bell state. That is, the enumerator polynomial of any state lies in the ring freely generated by the enumerators of a one-qubit state and a maximally entangled two-qubit state.

Still working with $\D = 2$, we refine this somewhat for codes whose encoding tensors are ``even'' in the sense that all their stabilizers have even Hamming weight, such as the Bell state or the $[[6,0,4]]$ tensor as shown above. Then, the enumerator lies in the invariant ring of 
$$\left\langle\frac{1}{2}\left(\begin{array}{cc} 1 & 3 \\ 1 & -1\end{array}\right), \left(\begin{array}{cc} 1 & 0 \\ 0 & -1\end{array}\right)\right\rangle,$$
which is isomorphic to the dihedral group $D_6$. Again this has small order, so the invariant ring is easily computed using the algebraic software \textsc{Sage} \cite{sagemath}, yielding for any projector onto any even encoding tensor
\begin{align*}
    A(w,z;\Pi) &\in \mathbb{R}[w^2 + 3z^2, w^6 +15w^4z^2 + 15w^2z^4 + 33z^6]\\
    & \quad = \mathbb{R}[A(w,z; \ket{\beta}\bra{\beta}),\ A(w,z;\ket{T_\mathfrak{C}}\bra{T_\mathfrak{C}})]
\end{align*}
where $\ket{T_\mathfrak{C}}$ is the encoding tensor of the code $\mathfrak{C}$ with stabilizer group $\mathcal{S}(\mathfrak{C}) = \langle YYIII, YIYII, YIIYI, YIIIY \rangle$.

\section{Vector enumerators}\label{section:vector}

Ultimately, we will be interested in ``tensor'' enumerators, which are tensors whose coefficients are scalar enumerators as above. We find it illuminating to first start with the case of vector enumerators, which are single index tensor enumerators. In our first two results, we will see these do not behave like vectors, in that local unitary action transforms the vector via an adjoint representation, and that vector enumerators support a natural trace operation to scalar enumerators. Hence a vector enumerator will be presented as the coefficients of a matrix, whose rows and columns are indexed by the elements of our unitary basis $\mathcal{E}$.

Formally, let $\mathcal{E}$ be an error basis on $\mathfrak{H}$, and as before write $\mathcal{E}^{n-1} = \{ E_1 \otimes \dots \otimes E_{n-1} \::\: E_1, \dots, E_{n-1} \in \mathcal{E}\}$ and $\mathcal{E}^{n-1}[d] = \{E \in \mathcal{E}^{n-1}\::\: \mathrm{wt}(E) = d\}$. Fix a ``leg'' $j \in \{1,\dots, n\}$ and for $E \in \mathcal{E}$ and $F \in \mathcal{E}^{n-1}$ define
$$E \otimes_j F = F_1 \otimes \cdots \otimes F_{j-1} \otimes E \otimes F_j \otimes \cdots F_{n-1} \in \mathcal{E}^n.$$
That is, $E \otimes_j F$ has $E$ is inserted as the $j$-th factor into $F$. For a pair of elements of this basis $(E,E')$ define weights
\begin{align}
    &A^{(j)}_d(E,E';M_1,M_2) \label{equation:vector-A-weight}\\
    &\quad = \sum_{F \in \mathcal{E}^{n-1}[d]} \Tr((E \otimes_j F)^\dagger M_1) \Tr((E' \otimes_j F) M_2),\nonumber\\
    &B^{(j)}_d(E,E';M_1,M_2)\label{equation:vector-B-weight}\\
    &\quad = \sum_{F \in \mathcal{E}^{n-1}[d]} \Tr((E \otimes_j F)^\dagger M_1 (E' \otimes_j F) M_2).\nonumber
\end{align}
These enumerators are related to the usual $A$-weight (\ref{equation:scalar-A-weight}) and $B$-weight (\ref{equation:scalar-B-weight}) of $M_1$ and $M_2$, refined by specifying the error operators $E,E'$ that appear in factor $j$. Note that the weights of the operators $E$ and $E'$ do not contribute to $A^{(j)}_d$ or $B^{(j)}_d$.

These enumerators are indexed by $(E,E')$ and so may be viewed as the coefficients of a matrix. Let $e_{E,E'}$ be the matrix unit, which is $1$ in $(E,E')$ position and zero elsewhere. Write $V = \mathrm{span}_\mathbb{R}\{e_{E,E'} \::\: E,E' \in \mathcal{E}\}$ for this matrix space.

\begin{definition}
    The \emph{vector enumerators} of ``leg'' $j$ are
    \begin{align*}
        \mathbf{A}^{(j)}(z; M_1, M_2) &= \sum_{E,E'\in\mathcal{E}} \sum_{d=0}^n A^{(j)}_d(E,E';M_1,M_2) z^d e_{E,E'},\\
        \mathbf{B}^{(j)}(z; M_1, M_2) &= \sum_{E,E'\in\mathcal{E}} \sum_{d=0}^n B^{(j)}_d(E,E';M_1,M_2) z^d e_{E,E'},
    \end{align*}
    where $A^{(j)}_d$ and $B^{(j)}_d$ are defined in (\ref{equation:vector-A-weight}) and (\ref{equation:vector-B-weight}).
\end{definition}

These enumerators are elements of $\mathbb{R}[z]\otimes_\mathbb{R} V$, which is to say vectors in $V$ whose coefficients are real polynomials in $z$. We can trace vector enumerators down to scalar enumerators, which is captured in the following result, whose proof is obvious.

\begin{lemma}\label{lemma:vector-to-scalar}
    Consider $\widetilde{\mathrm{tr}}: \mathbb{R}[z]\otimes_\mathbb{R} V \to \mathbb{R}[z]$ by linearly extending
    $$\widetilde{\mathrm{tr}}(e_{E,E'}) = \left\{\begin{array}{cl} 1 & \text{ if $E = E' = I$,}\\ z & \text{ if $E = E' \not= I$,} \\ 0 & \text{ if $E \not= E'$.}\end{array}\right.$$
    Then $\widetilde{\mathrm{tr}}(\mathbf{A}^{(j)}(z)) = A(z)$ and $\widetilde{\mathrm{tr}}(\mathbf{B}^{(j)}(z)) = B(z)$.
\end{lemma}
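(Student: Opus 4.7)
The proof is essentially bookkeeping, so my plan is to unpack the definition of $\widetilde{\mathrm{tr}}(\mathbf{A}^{(j)}(z))$ termwise and show it coincides with $A(z)$ as defined earlier, then observe that the argument for $\mathbf{B}^{(j)}$ is identical.

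First, by linearity of $\widetilde{\mathrm{tr}}$, only the diagonal coefficients $e_{E,E}$ in the expansion of $\mathbf{A}^{(j)}(z)$ contribute anything nonzero. For those diagonal terms, $\widetilde{\mathrm{tr}}(e_{E,E}) = z^{\mathrm{wt}(E)}$, since $E \in \mathcal{E}$ has weight $0$ when $E = I$ and weight $1$ otherwise, and this matches the two cases $1$ and $z$ in the definition. So I would write
$$\widetilde{\mathrm{tr}}(\mathbf{A}^{(j)}(z;M_1,M_2)) = \sum_{E \in \mathcal{E}} \sum_{d=0}^{n} A^{(j)}_d(E,E;M_1,M_2)\, z^{d+\mathrm{wt}(E)}.$$

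Next, I would substitute the definition (\ref{equation:vector-A-weight}) of $A^{(j)}_d(E,E;M_1,M_2)$ and use additivity of weight, $\mathrm{wt}(E \otimes_j F) = \mathrm{wt}(E) + \mathrm{wt}(F)$, so that $z^{d + \mathrm{wt}(E)} = z^{\mathrm{wt}(E \otimes_j F)}$ whenever $F \in \mathcal{E}^{n-1}[d]$. Because every $G \in \mathcal{E}^n$ decomposes uniquely as $E \otimes_j F$ with $E \in \mathcal{E}$ and $F \in \mathcal{E}^{n-1}$, the double sum over $(E,F)$ reindexes as a single sum over $G \in \mathcal{E}^n$. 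This gives exactly
$$\sum_{G \in \mathcal{E}^n} \Tr(G^\dagger M_1)\Tr(G M_2)\, z^{\mathrm{wt}(G)} = A(z; M_1, M_2).$$

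The argument for $\widetilde{\mathrm{tr}}(\mathbf{B}^{(j)}(z)) = B(z)$ is verbatim the same, with (\ref{equation:vector-B-weight}) in place of (\ref{equation:vector-A-weight}) and the single trace $\Tr((E \otimes_j F)^\dagger M_1 (E \otimes_j F) M_2)$ as the summand. There is no real obstacle here; the conceptual content of the lemma is simply that $\widetilde{\mathrm{tr}}$ reinserts the weight contribution of the $j$-th factor that was pulled out when passing from scalar to vector enumerators.
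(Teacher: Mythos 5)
Your proof is correct: the reindexing of $\sum_{E\in\mathcal{E}}\sum_{d}\sum_{F\in\mathcal{E}^{n-1}[d]}$ as a single sum over $G=E\otimes_j F\in\mathcal{E}^n$, together with $\widetilde{\mathrm{tr}}(e_{E,E})=z^{\mathrm{wt}(E)}$ and additivity of weight, is exactly the computation the paper has in mind when it declares the proof obvious and omits it. Nothing is missing, and the same bookkeeping handles $\mathbf{B}^{(j)}$ as you note.
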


Note that vector enumerators are not invariant under local unitary operations. Rather, they are covariant with respect to an adjoint, or ``Bloch sphere'' style, representation of unitaries on $V$. Specifically, let $U$ be a unitary on $\mathfrak{H}$ and define $c_{EF}\in \mathbb{C}$ by $U^\dagger E U = \sum_{G} G c_{GE}$, where $c_{GE} = \frac{1}{\D}\Tr(U^\dagger E U G^\dagger)$. We also have $U^\dagger E^\dagger U = \sum_{G} G^\dagger c^*_{GE}$, and $UEU^\dagger = \sum_G c^*_{EG} G$ and $UE^\dagger U^\dagger = \sum_G c_{EG} G^\dagger$. Now define the action of $U$ on the matrix space $V$ by
$$\Lambda(U)(e_{E,E'}) = \sum_{G,G'} c^*_{EG} c_{E'G'} e_{G,G'}.$$
We will formalize this in the next section. In particular, the following result is a special case of Theorem \ref{theorem:tensor-A-unitary-covariance} below and so we do not provide a proof here.

\begin{theorem}\label{theorem:vector-LU-covariant}
    Let $U = U_1\otimes \cdots \otimes U_n$ be a local unitary operator. Then $$\mathbf{A}^{(j)}(z;UM_1U^\dagger , UM_2U^\dagger) = \Lambda(U_j)\left(\mathbf{A}^{(j)}(z;M_1,M_2)\right).$$
\end{theorem}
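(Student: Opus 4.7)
The plan is to mimic the proof of Theorem~\ref{thm:scalar-unitary-invariance}, but track the $j$-th tensor factor separately, since this is the slot that picks up the adjoint action $\Lambda(U_j)$. First I push the unitaries inside each trace in the definition of $A^{(j)}_d(E,E';UM_1U^\dagger,UM_2U^\dagger)$, rewriting it as
\[
\sum_{F \in \mathcal{E}^{n-1}[d]} \Tr\bigl(U^\dagger(E\otimes_j F)^\dagger U\, M_1\bigr)\,\Tr\bigl(U^\dagger(E'\otimes_j F) U\, M_2\bigr).
\]
Writing $\tilde U = U_1\otimes\cdots\otimes \widehat{U_j}\otimes\cdots\otimes U_n$ for the product of the remaining local unitaries, the tensor structure of $U$ factors each conjugation as $U^\dagger(E\otimes_j F)^\dagger U = (U_j^\dagger E^\dagger U_j)\otimes_j(\tilde U^\dagger F^\dagger\tilde U)$, and analogously for $E'$ and $F$.

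Next I substitute the defining expansions $U_j^\dagger E^\dagger U_j = \sum_G c^*_{GE}\, G^\dagger$ and $U_j^\dagger E' U_j = \sum_{G'} c_{G'E'}\, G'$ into the $j$-th slot; this pulls the conjugate pair $c^*_{GE}\,c_{G'E'}$ outside the sum over $F$, and these are precisely the coefficients of $\Lambda(U_j)$. For the remaining $n-1$ slots I expand $\tilde U^\dagger F^\dagger \tilde U$ and $\tilde U^\dagger F \tilde U$ in the basis $\mathcal{E}^{n-1}$, using respectively the two dual forms given in the preamble to Lemma~\ref{lemma:bilinear-sum}. This introduces auxiliary basis elements $H,H' \in \mathcal{E}^{n-1}[d]$ (weight preserved because the conjugation is local unitary) along with matching traces $\Tr(\tilde U^\dagger F^\dagger \tilde U H)$ and $\Tr(\tilde U^\dagger F \tilde U H'^\dagger)$.

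The sum over $F$ then acts only on the bilinear product $\Tr(\tilde U^\dagger F^\dagger \tilde U H)\Tr(\tilde U^\dagger F \tilde U H'^\dagger)$, which is exactly the quantity controlled by Lemma~\ref{lemma:bilinear-sum} applied to $n-1$ qudits with $\tilde U$ in place of $U$: it equals $q^{2(n-1)}$ when $H=H'\in\mathcal{E}^{n-1}[d]$ and vanishes otherwise. The two factors of $q^{-(n-1)}$ from the basis expansions cancel this, and what remains is precisely $A^{(j)}_d(G,G';M_1,M_2)$, so
\[
A^{(j)}_d(E,E';UM_1U^\dagger,UM_2U^\dagger) = \sum_{G,G'} c^*_{GE}\,c_{G'E'}\,A^{(j)}_d(G,G';M_1,M_2).
\]
Multiplying by $z^d e_{E,E'}$ and summing over $d, E, E'$, a routine reindexing identifies this with $\Lambda(U_j)\bigl(\mathbf{A}^{(j)}(z;M_1,M_2)\bigr)$.

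The main obstacle is pure bookkeeping: one must choose the two expansion conventions (the ``$F$'' versus ``$F^\dagger$'' forms) for the two $(n-1)$-fold conjugations so that the resulting bilinear form matches the hypothesis of Lemma~\ref{lemma:bilinear-sum} verbatim, and the placement of complex conjugation on $c_{GE}$ versus $c_{G'E'}$ must line up with the definition of $\Lambda(U_j)$ after renaming summation indices. Beyond isolating the $j$-th slot and reducing the other $n-1$ slots to the scalar case already treated, there is no new analytic ingredient required.
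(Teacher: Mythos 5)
Your argument is correct and is essentially the paper's own: the paper proves the general tensor version (Theorem~\ref{theorem:tensor-A-unitary-covariance}) by exactly this split of $U$ into $U_j$ (expanded in the error basis to produce the coefficients $c^*_{GE}c_{G'E'}$ of $\Lambda(U_j)$) and the complementary factor $\tilde U$, whose contribution is killed via Lemma~\ref{lemma:bilinear-sum} as in Theorem~\ref{thm:scalar-unitary-invariance}, and then cites the vector statement as the single-leg special case. Your specialization to one leg, including the bookkeeping of the two dual expansions and the final reindexing onto $\Lambda(U_j)$, matches that proof.
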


Just as scalar enumerators, vector enumerators behave nicely under tensor product. We state this formally as follows. Again, the proof of this result follows immediately from the general case of tensor enumerators, Proposition \ref{proposition:tensor-homomorphism}, and so is omitted.

\begin{proposition}\label{prop:vector-functoral}
    Let $M_1\otimes N_1$ and $M_2\otimes N_2$ be Hermitian operators on $\mathfrak{H} \otimes \mathfrak{K} $ where $\mathfrak{H} = \bigotimes_{j=1}^n \mathfrak{H}_j$ and $\mathfrak{K} = \bigotimes_{k=1}^m \mathfrak{K}_k$. Then for $j = 1, \dots, n$ we have
    \begin{align*}
        &\mathbf{A}^{(j)}(z;M_1\otimes N_1,M_2\otimes N_2)\\
        &\qquad = \mathbf{A}^{(j)}(z;M_1,M_2)A(z;N_1,N_2).
    \end{align*}
    and for $j=n+1, \dots, n+m$ we have
    \begin{align*}
        &\mathbf{A}^{(j)}(z;M_1\otimes N_1,M_2\otimes N_2)\\
        &\qquad = A(z;M_1,M_2)\mathbf{A}^{(j-n)}(z;N_1,N_2).
    \end{align*}
    Analogous formulas hold for $\mathbf{B}^{(j)}$ and $B$.
\end{proposition}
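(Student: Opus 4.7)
The plan is to mimic the proof of Lemma \ref{lemma:scalar-homomorphism} above, exploiting the multiplicativity of trace on tensor products together with additivity of weight. I will focus on $\mathbf{A}^{(j)}$ with $j \in \{1,\ldots,n\}$; the identity for $j \in \{n+1,\ldots,n+m\}$ is obtained by simply swapping the roles of $(M_1,M_2)$ and $(N_1,N_2)$, and the $\mathbf{B}^{(j)}$ statement proceeds along identical lines.

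First I would fix $E, E' \in \mathcal{E}$ and unpack the definition of the $(E,E')$-coefficient of $\mathbf{A}^{(j)}(z; M_1\otimes N_1, M_2\otimes N_2)$. The key observation is that every $F \in \mathcal{E}^{n+m-1}$ decomposes uniquely as $F = F'\otimes F''$ with $F' \in \mathcal{E}^{n-1}$ supported on the remaining factors of $\mathfrak{H}$ and $F'' \in \mathcal{E}^{m}$ supported on all factors of $\mathfrak{K}$. Because the insertion slot $j$ lies in the $\mathfrak{H}$-half, we have $E\otimes_j F = (E\otimes_j F') \otimes F''$, and likewise for $E'$. The trace then splits as
\begin{align*}
\Tr((E\otimes_j F)^\dagger(M_1\otimes N_1)) &= \Tr((E\otimes_j F')^\dagger M_1)\cdot \Tr(F''^\dagger N_1),\\
\Tr((E'\otimes_j F)(M_2\otimes N_2)) &= \Tr((E'\otimes_j F')M_2)\cdot \Tr(F'' N_2),
\end{align*}
and weight is additive, $\mathrm{wt}(F) = \mathrm{wt}(F') + \mathrm{wt}(F'')$.

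Substituting this into the definition and exchanging the order of summation, the $(E,E')$-coefficient becomes the Cauchy-style product
\begin{align*}
&\left(\sum_{F'\in\mathcal{E}^{n-1}}\Tr((E\otimes_j F')^\dagger M_1)\Tr((E'\otimes_j F')M_2)z^{\mathrm{wt}(F')}\right)\\
&\qquad\cdot\left(\sum_{F''\in\mathcal{E}^m}\Tr(F''^\dagger N_1)\Tr(F'' N_2)z^{\mathrm{wt}(F'')}\right),
\end{align*}
which is exactly the $(E,E')$-coefficient of $\mathbf{A}^{(j)}(z;M_1,M_2)$ times the scalar polynomial $A(z;N_1,N_2)$. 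Summing over $(E,E')$ produces the asserted factorization in $\mathbb{R}[z]\otimes_\mathbb{R} V$. For $\mathbf{B}^{(j)}$ the only change is replacing the pair of traces with the single quantity $\Tr((E\otimes_j F)^\dagger(M_1\otimes N_1)(E'\otimes_j F)(M_2\otimes N_2))$, which factors as $\Tr((E\otimes_j F')^\dagger M_1(E'\otimes_j F')M_2)\cdot\Tr(F''^\dagger N_1 F'' N_2)$, and the rest of the bookkeeping is identical.

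There is no real obstacle here; the argument is entirely a matter of checking that the insertion operation $\otimes_j$ commutes cleanly with the split of the total system into its $\mathfrak{H}$- and $\mathfrak{K}$-halves. The mildly delicate point is only notational, namely ensuring that the index $j$ is tracked correctly when passing between $\mathbf{A}^{(j)}$ on the combined system and $\mathbf{A}^{(j-n)}$ on $\mathfrak{K}$ in the second half of the statement; this is why the proposition treats the two ranges of $j$ separately.
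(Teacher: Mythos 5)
Your argument is correct and is essentially the computation the paper uses: the paper derives this proposition as the special case $J=\{j\}$, $K=\emptyset$ of Proposition \ref{proposition:tensor-homomorphism}, whose proof is exactly your direct expansion based on multiplicativity of the trace over tensor factors and additivity of weight. Carrying it out directly for the single-leg case, as you do, is a fine equivalent presentation.
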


When $M_1 = M_2 = \Pi_\mathfrak{C}$ the projection onto a binary stabilizer code $\mathfrak{C}$, the vector enumerators carry refined information about the structure of the stabilizer and normalizer groups of $\mathfrak{C}$. In general, vector enumerators are matrices, however as long as the underlying code has distance $d \geq 2$ these matrices are diagonal except in a degenerate case (see Theorem \ref{theorem:vector-diagonal} below). The coefficients along the diagonal are polynomials that are themselves are a form of scalar enumerator. For $P = I, X, Y, Z$, the coefficient of $e_{P,P}$ in $\mathbf{A}^{(j)}$ (respectively $\mathbf{B}^{(j)}$) counts the number of stabilizers (respectively logical operators) of each weight with the side condition that its $j^{\text{th}}$ factor is $P$.

\begin{example}
Consider the vector enumerator of the logical leg in an encoding of a $[[n,1]]$ binary stabilizer code $\mathfrak{C}$. Recall from the proof of Theorem \ref{thm:scalar-enumerator-encoded} above that if $\ket{T_\mathfrak{C}}$ is the encoding state, then the stabilizer of this state is the disjoint union
\begin{align*}
    \mathcal{S}(\ket{T_\mathfrak{C}}) &= (\mathcal{S}(\mathfrak{C})\otimes I)  \cup (\bar{X}\mathcal{S}(\mathfrak{C}) \otimes X)\\
    & \qquad \cup (-\bar{Y}\mathcal{S}(\mathfrak{C}) \otimes Y) \cup (\bar{Z}\mathcal{S}(\mathfrak{C}) \otimes Z)
\end{align*}
and the projection onto this state is
\begin{align*}
    \ket{T_\mathfrak{C}}\bra{T_\mathfrak{C}} &= \frac{1}{2^{n+1}}\sum_{S\in \mathcal{S}(\mathfrak{C})} S\otimes I + \bar{X}S \otimes X\\
    &\qquad\qquad\qquad - \bar{Y}S\otimes Y + \bar{Z}S\otimes Z.
\end{align*}
Now, for any $P = I,X,Y,Z$ we have
$$\Tr((F\otimes P)\ket{T_\mathfrak{C}}\bra{T_\mathfrak{C}}) = \left\{\begin{array}{cl} \pm 1 & \text{if $F \in \bar{P}\mathcal{S}(\mathfrak{C})$}\\ 0 & \text{otherwise.}\end{array}\right.$$
Therefore
\begin{align*}
    &\mathbf{A}^{(n+1)}(z;\ket{T_\mathfrak{C}}\bra{T_\mathfrak{C}})\\
    &\qquad = \sum_{P=I,X,Y,Z} \sum_{d=0}^n | \bar{P}\mathcal{S}(\mathfrak{C}) \cap \mathcal{E}^n_d | z^d e_{P,P}.
\end{align*}
In particular,
\begin{align*}
    A(\mathfrak{C}) &= \mathbf{A}^{(n+1)}(I,I;\ket{T_\mathfrak{C}}\bra{T_\mathfrak{C}}) \text{ and }\\
    B(\mathfrak{C}) &= \sum_{P=I,X,Y,Z} \mathbf{A}^{(n+1)}(P,P;\ket{T_\mathfrak{C}}\bra{T_\mathfrak{C}}).
\end{align*}
\end{example}

\begin{lemma}\label{lemma:diagonal-terms}
    Let $\mathfrak{C}$ be a $[[n,k]]$ stabilizer code. Then for $P\not= P'\in \mathcal{P}$,
    \begin{enumerate}
        \item $\mathbf{A}^{(j)}(P,P')$ is nonzero if and only if $PP'\otimes_j I^{\otimes(n-1)}$ is a stabilizer;
        \item $\mathbf{B}^{(j)}(P,P')$ is nonzero if and only if $\mathbf{A}^{(j)}(P,P')$ is nonzero.
    \end{enumerate}
\end{lemma}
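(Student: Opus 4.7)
The plan is to exploit the Pauli expansion of the code projector, $\Pi_\mathfrak{C} = \frac{1}{2^{n-k}}\sum_{S\in\mathcal{S}}S$, and the fact that for a Pauli $Q$ on $n$ qubits, $\Tr(Q\Pi_\mathfrak{C})$ equals $\pm 2^k$ or $0$ according to whether $Q$ lies in $\pm\mathcal{S}$ or not. With this, the definition in (\ref{equation:vector-A-weight}) implies that $\mathbf{A}^{(j)}(P,P')$ is a nonzero polynomial if and only if there exists some $F \in \mathcal{P}^{n-1}$ for which both $P\otimes_j F$ and $P'\otimes_j F$ lie, up to sign, in $\mathcal{S}$.

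For the forward direction of (1), I would compute the product of these two putative stabilizers: since Paulis square to identity on each tensor slot, $(P\otimes_j F)(P'\otimes_j F) = PP'\otimes_j I^{\otimes(n-1)}$, and the product of two elements of $\pm\mathcal{S}$ lies in $\pm\mathcal{S}$, so $PP'\otimes_j I^{\otimes(n-1)}$ is (up to sign) a stabilizer. For the reverse direction, given that $PP'\otimes_j I^{\otimes(n-1)} \in \pm\mathcal{S}$, I would simply exhibit the witness $F = I^{\otimes(n-1)}$: then $P\otimes_j F$ equals either $I^{\otimes n}$ (always in $\mathcal{S}$) or $P\otimes_j I^{\otimes(n-1)}$, and similarly for $P'$, and by hypothesis the case that forces both to be nontrivial is resolved. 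One should note that when $P,P'$ are both non-identity and distinct they anticommute, which forces $P\otimes_j F$ and $P'\otimes_j F$ to anticommute, obstructing their simultaneous membership in the abelian group $\mathcal{S}$; in this case $PP'$ also picks up an imaginary phase so $PP'\otimes_j I^{\otimes(n-1)}$ cannot be Hermitian and hence cannot lie in $\pm\mathcal{S}$, so both sides of the \emph{iff} are vacuously false.

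For (2), the plan is to use the standard identity $\Pi_\mathfrak{C} Q\Pi_\mathfrak{C} = Q\Pi_\mathfrak{C}$ when $Q\in \mathcal{N}$ and $\Pi_\mathfrak{C} Q\Pi_\mathfrak{C}=0$ otherwise. Substituting into (\ref{equation:vector-B-weight}) and using $(P\otimes_j F)^\dagger(P'\otimes_j F) = PP'\otimes_j I^{\otimes(n-1)}$, the summand reduces to $\Tr\bigl((PP'\otimes_j I^{\otimes(n-1)})\Pi_\mathfrak{C}\bigr)$ multiplied by the indicator $[P'\otimes_j F \in \mathcal{N}]$. Hence $\mathbf{B}^{(j)}(P,P')$ is nonzero if and only if $PP'\otimes_j I^{\otimes(n-1)}$ is (up to sign) a stabilizer \emph{and} some $F$ makes $P'\otimes_j F$ lie in the normalizer. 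By part (1) the first condition is exactly nonvanishing of $\mathbf{A}^{(j)}(P,P')$, and the witness $F$ guaranteed by part (1) also lies in $\pm\mathcal{S}\subseteq\mathcal{N}$, so the two existence conditions become equivalent.

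The main obstacle will be bookkeeping with signs and phases in the Pauli group, since $PP'$ can carry a factor of $i$ when $P$ and $P'$ are distinct non-identity Paulis, so the phrase ``$PP'\otimes_j I^{\otimes(n-1)}$ is a stabilizer'' has to be read as ``equals a stabilizer up to a sign'' (or vacuously, is impossible when $PP'$ is anti-Hermitian). Separating the argument into the three cases---both of $P,P'$ equal to $I$ (excluded by hypothesis), exactly one equal to $I$, and both non-identity and distinct---and handling the anticommuting case via the abelian property of $\mathcal{S}$ should make the sign analysis transparent.
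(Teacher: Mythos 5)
Your setup and the forward half of (1) are essentially the paper's own argument: expand $\Pi_\mathfrak{C}=\tfrac{1}{2^{n-k}}\sum_{S\in\mathcal{S}}S$, note that a nonvanishing trace forces $P\otimes_j F$ (up to phase) into $\mathcal{S}$, and multiply the two witnesses to place $PP'\otimes_j I^{\otimes(n-1)}$ in $\pm\mathcal{S}$. The genuine gap is in your converse direction of (1). Your case analysis rests on the claim that when $P\neq P'$ are both non-identity, $PP'$ acquires a factor of $i$, so that case is vacuous. In the paper's error basis $\{I,X,Z,XZ\}$ this is false: $X\cdot(XZ)=Z$ and $Z\cdot(XZ)=-X$ are Hermitian (the anticommuting pair contains the anti-Hermitian element $XZ$, so no imaginary phase appears). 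Hence that case is not vacuous, your witness $F=I^{\otimes(n-1)}$ only covers the case where one of $P,P'$ equals $I$, and in the remaining case the implication you are trying to prove actually fails: for the $[[2,1]]$ code with $\mathcal{S}=\langle Z\otimes I\rangle$, take $j=1$, $P=X$, $P'=XZ$; then $PP'\otimes_1 I=Z\otimes I\in\mathcal{S}$, yet $\Tr\bigl((X\otimes F)^\dagger\Pi_\mathfrak{C}\bigr)=0$ for every $F$, so $\mathbf{A}^{(1)}(X,XZ)=0$. So that paragraph cannot be repaired as written. For context, the paper's own proof only derives the ``only if'' implication of (1) (nonzero off-diagonal entry $\Rightarrow$ weight-one stabilizer on leg $j$), and that is the only direction used downstream, in Theorem \ref{theorem:vector-diagonal} and its tensor analogue; your forward direction is the part that matters.

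There is also a circularity in your argument for (2). The reduction of the $\mathbf{B}^{(j)}$ summand via $\Pi_\mathfrak{C}Q\Pi_\mathfrak{C}=Q\Pi_\mathfrak{C}$ for $Q\in\mathcal{N}$ and $0$ otherwise matches the paper, but for the direction $\mathbf{B}^{(j)}(P,P')\neq 0\Rightarrow\mathbf{A}^{(j)}(P,P')\neq 0$ you appeal to ``the witness $F$ guaranteed by part (1)'', which presupposes $\mathbf{A}^{(j)}(P,P')\neq 0$, i.e.\ exactly what is to be shown. What is needed is a conversion of a normalizer witness into a stabilizer witness: since $R\otimes_j I^{\otimes(n-1)}\in\mathcal{S}$ with $R\propto PP'$, every element of $\mathcal{N}$ must act as $I$ or $R$ (up to phase) on leg $j$, which forces one of $P,P'$ to be $I$ and the other proportional to $R$, after which $F=I^{\otimes(n-1)}$ does the job. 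The paper's proof of (2) is itself terse on this point, so you are not alone in leaving it implicit, but as written your step does not close that direction.
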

\begin{proof}
    Recall that if $\Pi = \frac{1}{2^{n-k}} \sum_{S \in \mathcal{S}(\mathfrak{C})} S$ is the projector onto the code space, then $\Tr((P \otimes_j Q)\Pi) \not= 0$ if and only if $P \otimes_j Q \in \mathcal{S}(\mathfrak{C})$. Now $\mathbf{A}^{(j)}(P,P')$ is nonzero if and only if there exists some $Q\in\mathcal{P}^{n-1}$ such that
    $$\Tr((P \otimes_j Q)\Pi)\Tr((P' \otimes_j Q)\Pi) \not= 0,$$
    and hence both $(P \otimes_j Q), (P' \otimes_j Q) \in \mathcal{S}(\mathfrak{C})$ and so is their product $PP' \otimes_j I^{\otimes(n-1}$.

    Now $\mathbf{B}^{(j)}(P,P') \not= 0$ if and only if there is a $Q$ such that
    $$\Tr((P \otimes_j Q)\Pi(P' \otimes_j Q)\Pi) \not= 0,$$
    which happens if and only if $(P \otimes_j Q) \in \mathcal{N}(\mathfrak{C})$, and
    $$(P \otimes_j Q)(P' \otimes_j Q) = (PP' \otimes_j I^{\otimes(n-1}) \in \mathcal{S}(\mathfrak{C}).$$
\end{proof}

\begin{theorem}\label{theorem:vector-diagonal}
    Let $\mathfrak{C}$ be a $[[n,k,d]]$ stabilizer code of distance $d\geq 2$. Then every for every $j=1, \dots, n$ we have either:
    \begin{enumerate}
        \item there is a Pauli operator $P$ and code $\mathfrak{C}'$ such that
        \begin{align*}
            \mathbf{A}^{(j)}(\mathfrak{C}) &= A(\mathfrak{C}')\cdot(e_{I,I} + e_{I,P} + e_{P,I} + e_{P,P}),\\
            \mathbf{B}^{(j)}(\mathfrak{C}) &= B(\mathfrak{C}')\cdot(e_{I,I} + e_{I,P} + e_{P,I} + e_{P,P});
        \end{align*}
        \item or $\mathbf{A}^{(j)}(\mathfrak{C})$ and $\mathbf{B}^{(j)}(\mathfrak{C})$ are diagonal.
    \end{enumerate}
\end{theorem}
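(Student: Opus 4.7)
The plan is to determine exactly which entries of $\mathbf{A}^{(j)}(\mathfrak{C})$ and $\mathbf{B}^{(j)}(\mathfrak{C})$ can be nonzero, and then to identify their common value in the nontrivial block.

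First I would apply Lemma~\ref{lemma:diagonal-terms}: an off-diagonal entry $\mathbf{A}^{(j)}(P,P')$ (respectively $\mathbf{B}^{(j)}(P,P')$) with $P\neq P'$ is nonzero precisely when $PP' \otimes_j I^{\otimes(n-1)} \in \mathcal{S}(\mathfrak{C})$. Thus the off-diagonal structure of both vector enumerators is entirely controlled by which weight-one Pauli operators supported at leg $j$ are stabilizers. If none are, every off-diagonal entry vanishes and case~(2) follows immediately.

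Otherwise there is a non-identity $P$ with $S_0 := P \otimes_j I^{\otimes(n-1)} \in \mathcal{S}(\mathfrak{C})$. Abelianness of $\mathcal{S}(\mathfrak{C})$ forces every stabilizer $Q \otimes_j F$ to commute with $S_0$, so $Q$ commutes with $P$, giving $Q \in \{I, P\}$; in particular $P$ is unique. The same commutation argument applied to $\mathcal{N}(\mathfrak{C}) \supseteq \mathcal{S}(\mathfrak{C})$ shows that every normalizer element $Q \otimes_j F$ also satisfies $Q \in \{I, P\}$. Combined with Lemma~\ref{lemma:diagonal-terms}, this confines the supports of $\mathbf{A}^{(j)}(\mathfrak{C})$ and $\mathbf{B}^{(j)}(\mathfrak{C})$ to the four positions $(I,I)$, $(I,P)$, $(P,I)$, $(P,P)$.

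To identify the common coefficient I would introduce the stabilizer code $\mathfrak{C}'$ on the remaining $n-1$ qubits with stabilizer group
\[
    \mathcal{S}(\mathfrak{C}') := \{F \in \mathcal{P}^{n-1} : I \otimes_j F \in \mathcal{S}(\mathfrak{C})\}.
\]
Multiplication by $S_0$ yields a weight-preserving bijection between $\mathcal{S}(\mathfrak{C}')$ and $\{F : P \otimes_j F \in \mathcal{S}(\mathfrak{C})\}$, which both halves $|\mathcal{S}(\mathfrak{C})|$ (making $\mathfrak{C}'$ an $[[n-1,k]]$ code with the correct $4^k$ normalization) and matches the sets of $F$ counted by each of the four block entries of $\mathbf{A}^{(j)}_d$. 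Substituting $\Pi_\mathfrak{C} = 2^{-(n-k)} \sum_{S} S$ reduces each such entry to $4^k \cdot |\{F \in \mathcal{S}(\mathfrak{C}') : \mathrm{wt}(F) = d\}| = A_d(\mathfrak{C}')$. The analogous calculation for $\mathbf{B}^{(j)}_d$ uses the identification $Q \otimes_j F \in \mathcal{N}(\mathfrak{C}) \iff F \in \mathcal{N}(\mathfrak{C}')$ for $Q \in \{I, P\}$, together with $(I \otimes_j F)(P \otimes_j F) = P \otimes_j F^2 = S_0 \in \mathcal{S}(\mathfrak{C})$, and yields $B_d(\mathfrak{C}')$ uniformly across the block.

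The hard part will be confirming the sign bookkeeping in the bijection $F \mapsto S_0 \cdot (I \otimes_j F)$ between $\mathcal{S}_I$ and $S_0 \cdot \mathcal{S}_I$ (and the corresponding bijection between normalizer cosets), and verifying that $\mathrm{wt}(F)$ is preserved—which it is since $S_0$ is supported only at leg $j$. Once this is in place, the four block evaluations follow from routine applications of $\Tr(R \Pi_\mathfrak{C}) = 2^k [R \in \mathcal{S}(\mathfrak{C})]$ and $\Tr(R \Pi_\mathfrak{C} R' \Pi_\mathfrak{C}) = 2^k [R \in \mathcal{N}(\mathfrak{C}),\ RR' \in \mathcal{S}(\mathfrak{C})]$.
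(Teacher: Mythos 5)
Your argument is correct and reaches the same dichotomy, but it handles case (1) by a genuinely different route than the paper. Both proofs start from Lemma~\ref{lemma:diagonal-terms}: a nonzero off-diagonal entry forces a weight-one stabilizer $S_0 = P\otimes_j I^{\otimes(n-1)}$, and otherwise case (2) holds. From there the paper is structural: it observes that $S_0\in\mathcal{S}(\mathfrak{C})$ forces the code to factor as $\mathfrak{C} = \ket{\psi}\otimes_j\mathfrak{C}'$ with $\ket{\psi}$ the fixed eigenstate of $P$, invokes the tensor-product homomorphism property (Proposition~\ref{prop:vector-functoral}) to get $\mathbf{A}^{(j)}(\mathfrak{C}) = \mathbf{A}^{(j)}(\ket\psi\bra\psi)\,A(\mathfrak{C}')$ and likewise for $\mathbf{B}$, and then only has to compute the single-qubit enumerator $\mathbf{A}^{(j)}(\ket\psi\bra\psi) = \mathbf{B}^{(j)}(\ket\psi\bra\psi) = e_{I,I}+e_{I,P}+e_{P,I}+e_{P,P}$. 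You instead work entirely inside the stabilizer/normalizer groups: commutation with $S_0$ confines the $j$-th factor of every stabilizer and normalizer element to $\{I,P\}$ (which correctly kills entries such as $(X,XZ)$ that the bare ``iff'' of Lemma~\ref{lemma:diagonal-terms} would not obviously exclude), you define $\mathcal{S}(\mathfrak{C}') = \{F : I\otimes_j F\in\mathcal{S}(\mathfrak{C})\}$ as the index-two kernel, and you evaluate all four block entries by the weight-preserving bijection $F\mapsto S_0\cdot(I\otimes_j F)$ together with the trace formulas for $\Pi_\mathfrak{C}$; the normalizer-coset version gives the $\mathbf{B}$ block. Your route is more elementary and makes the identification of $\mathcal{S}(\mathfrak{C}')$ and the equal normalizations ($4^k$, $2^k$) explicit, but it implicitly re-proves the factorization and costs more bookkeeping; the paper's route is shorter because it reuses Proposition~\ref{prop:vector-functoral}. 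The sign/phase issues you flag (e.g.\ $F^2=\pm I$ for $XZ$-type factors, $\Tr(R\Pi_\mathfrak{C})=\pm 2^k$, so the off-diagonal block entries can carry signs depending on which signed representative of $P$ stabilizes qubit $j$) are real, but they afflict the paper's version equally (its ``straightforward'' computation of $\mathbf{A}^{(j)}(\ket\psi\bra\psi)$ has the same caveat), so they do not constitute a gap specific to your argument.
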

\begin{proof}
    From the lemma, if $\mathbf{A}^{(j)}(\mathfrak{C})$ has off-diagonal terms then $P \otimes_j I^{\otimes (n-1)} \in \mathcal{S}(\mathfrak{C})$ for some $P$. This implies $\mathfrak{C} = \ket{\psi}\otimes_j\mathfrak{C}'$ where $\ket{\psi}$ is the $+1$-eigenstate of $P$ and $\mathfrak{C}'$ is a $(n-1)$-qubit stabilizer code. By Proposition \ref{prop:vector-functoral} we have $\mathbf{A}^{(j)}(\mathfrak{C}) = \mathbf{A}^{(j)}(\ket{\psi}\bra{\psi}) A(\mathfrak{C}')$ and $\mathbf{B}^{(j)}(\mathfrak{C}) = \mathbf{B}^{(j)}(\ket{\psi}\bra{\psi}) B(\mathfrak{C}')$, and it is straightforward to show $$\mathbf{A}^{(j)}(\ket{\psi}\bra{\psi}) = \mathbf{B}^{(j)}(\ket{\psi}\bra{\psi}) = e_{I,I} + e_{I,P} + e_{P,I} + e_{P,P}.$$
\end{proof}

Recall the ``cleaning lemma'' \cite[Lemma 1]{bravyi2009no} relates the supports of logical Pauli operators on a stabilizer code. It states: for any subset of qubits $J$ either there exists a nontrivial logical Pauli operators supported on $J$, or every logical Pauli operator has a representation is that is trivial on $J$ (that is it can be ``cleaned'' from $J$). In particular, if $|J| < d$, the distance of the code, then no nontrivial logical operator can be supported on $J$. Thus any subset of qubits of size less than the distance of the code can be cleaned. We are interested in a stronger notion of cleaning: every Pauli operator on $J$ extends to a stabilizer of the code. As there are $4^{|J|}$ Pauli operators on $J$ but only $2^{n-k}$ stabilizers, we must have $|J| \leq \frac{1}{2}(n-k)$. In this case, we can clean logical operators from $J$ as follows: given any logical Pauli operator $L$, we take the Pauli operator on $J$ given by $\left. L \right|_J$, and extend this to a stabilizer $S$. Then $LS$ is equivalent to $L$ and trivial on $J$.

\begin{corollary}
    Let $\mathfrak{C}$ be a stabilizer code and $\ket{T_\mathfrak{C}}$ its encoding tensor. Then for any logical qubit $j$ we have $\mathbf{A}^{(j)}(\ket{T_\mathfrak{C}}\bra{T_\mathfrak{C}})$ is diagonal.
\end{corollary}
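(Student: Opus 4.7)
The plan is to reduce everything to Lemma \ref{lemma:diagonal-terms}, applied to $\ket{T_\mathfrak{C}}$ viewed as an $[[n+k,0]]$-stabilizer state. That lemma says the entry $\mathbf{A}^{(j)}(P,P')$ with $P \neq P'$ is nonzero exactly when $PP' \otimes_j I^{\otimes(n+k-1)}$ lies in the stabilizer of $\ket{T_\mathfrak{C}}$, so the corollary collapses to the assertion that no nontrivial weight-one Pauli supported on a logical leg $j = n+i$ stabilizes the encoding state. Equivalently, this is ruling out case (1) of Theorem \ref{theorem:vector-diagonal} for every logical leg.

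Next I would write down the stabilizer group of $\ket{T_\mathfrak{C}}$ for general $[[n,k]]$-codes, extending the $k=1$ description from the proof of Theorem \ref{thm:scalar-enumerator-encoded}. After picking physical representatives $\bar{X}_i, \bar{Z}_i$ for the logical Paulis of each logical qubit $i$, the stabilizer of $\ket{T_\mathfrak{C}}$ is generated by $\{S\otimes I^{\otimes k} : S\in\mathcal{S}(\mathfrak{C})\}$ together with $\{\bar{X}_i \otimes X_i\}$ and $\{\bar{Z}_i \otimes Z_i\}$ for $i = 1,\dots,k$. Thus every stabilizer has the form $\pm \bar{L}\,S \otimes L$ where $L$ is a Pauli on the $k$-qubit logical register, $\bar{L}$ is its chosen physical representative, and $S \in \mathcal{S}(\mathfrak{C})$.

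The key step is then a one-line matching argument. If $Q \otimes_{n+i} I^{\otimes(n+k-1)}$ equals some $\pm \bar{L}\,S \otimes L$ with $Q \neq I$, then restricting to the logical register forces $L$ to be $Q$ on slot $i$ and identity on the other logical qubits, so $\bar{L}$ is a nontrivial logical Pauli on logical qubit $i$; while restricting to the physical register forces $\bar{L}\,S = I$, i.e.\ $\bar{L} \in \mathcal{S}(\mathfrak{C})$. This contradicts the fact that a nontrivial logical operator is never in the stabilizer, so no such weight-one element can exist, and Lemma \ref{lemma:diagonal-terms} gives diagonality.

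The only place requiring care, and what I would flag as the main (though minor) obstacle, is phase bookkeeping in the generator description of $\mathcal{S}(\ket{T_\mathfrak{C}})$ for $k > 1$: the $-\bar{Y}\mathcal{S}(\mathfrak{C})\otimes Y$ sign visible in the $k=1$ formula generalizes to a sign that depends on the $Y$-content of $L$. Since phases never affect group membership, they do not disturb the matching argument, and the contradiction goes through uniformly.
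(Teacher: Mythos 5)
Your proof is correct and takes essentially the same route as the paper: both reduce to the observation (Lemma \ref{lemma:diagonal-terms}) that an off-diagonal entry would force a weight-one stabilizer $P\otimes_j I^{\otimes(n+k-1)}$ on the logical leg, and rule this out because any stabilizer of $\ket{T_\mathfrak{C}}$ acting nontrivially on a logical leg must carry a nontrivial logical representative on the physical legs. Your explicit general-$k$ description of $\mathcal{S}(\ket{T_\mathfrak{C}})$ and the phase caveat are simply a more detailed rendering of the paper's terse argument.
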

\begin{proof}
    By construction, along every logical leg $j$ every local Pauli operator gives rise to a stabilizer of $\ket{T_\mathfrak{C}}$, and thus logical legs can be cleaned. Moreover, these stabilizers are not of the form $P\otimes_j I^{\otimes(n-1)}$ and therefore in the theorem we are in the case of $\mathbf{A}^{(j)}(\ket{T_\mathfrak{C}}\bra{T_\mathfrak{C}})$ being diagonal.
\end{proof}

\begin{definition}
    The \emph{reduced} enumerator $\tilde{\mathbf{A}}^{(j)}$ is the diagonal of $\mathbf{A}^{(j)}$. Namely, if 
    $\textbf{A}^{(j)} = \sum_{E,E'} \sum_{d=0}^{n-1} A^{(j)}_d(E,E')z^d e_{E,E'}$
    then $$\tilde{\textbf{A}}^{(j)} = \sum_{E} \sum_{d=0}^{n-1} A^{(j)}_d(E)z^d e_{E,E}.$$
    Identically, $\tilde{\mathbf{B}}^{(j)}$ is the restriction of $\textbf{B}^{(j)}$ to the diagonal.  
\end{definition}

The content of the above theorem can be captured as: if $d \geq 2$ then, except in a trivial case, $\textbf{A}^{(j)} = \tilde{\textbf{A}}^{(j)}$ (and similarly $\textbf{B}^{(j)} = \tilde{\textbf{B}}^{(j)}$) for each $j=1, \dots, n$. To simplify the notation, we will write $e_E = e_{E,E}$ when working with reduced enumerators.

Next we present the analogue of the quantum MacWilliams identity for vector enumerators. This theorem is a special case of that for tensor enumerators and hence we omit the proof. We hasten to point out that the components of $\mathbf{A}^{(j)}$ and $\mathbf{B}^{(j)}$ are polynomials of degree at most $n-1$, hence their homogenization are of degree exactly $n-1$. That the overall degree is one less than that of the scalar enumerators is the genesis of an additional factor of $\frac{1}{q}$, which has been incorporated into the map $\Psi$.

\begin{theorem}\label{theorem:vector-macwilliams-identity}
    For an error basis $\mathcal{E}$ on $\mathfrak{H} = \mathbb{C}^\D$ and $M_1,M_2$ being Hermitian operators on $\mathfrak{H}^{\otimes n}$, each leg $j=1,\dots,n$ has
    $$\mathbf{B}^{(j)}(w,z;M_1,M_2) = \Psi\left[ \mathbf{A}^{(j)}(\tfrac{w+(q^2-1)z}{q},\tfrac{w-z}{q};M_1,M_2) \right]$$
    where $\Psi(e_{E,E'}) = \frac{1}{q^2}\sum_{F,F'} \Tr(F^\dagger E F' (E')^\dagger) e_{F,F'}$.
\end{theorem}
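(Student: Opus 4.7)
My plan is to prove this directly by paralleling the scalar MacWilliams proof sketched earlier, since the authors defer to the tensor enumerator case but the vector case is illuminating enough to do on its own. The core idea is to expand $M_1$ and $M_2$ in the error basis, separate the $j$-th leg from the remaining $n-1$ legs, and observe that the character sum over those $n-1$ legs produces the usual Krawtchouk substitution $(w,z) \mapsto \bigl(\tfrac{w+(q^2-1)z}{q},\tfrac{w-z}{q}\bigr)$, while the single residual trace on the $j$-th leg assembles into the kernel of $\Psi$.

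Concretely, first I would write $M_1 = q^{-n}\sum_{G} \Tr(G^\dagger M_1) G$ and $M_2 = q^{-n}\sum_{H} \Tr(H M_2) H^\dagger$, decompose each summation index as $G = G_j \otimes_j \tilde G$ and $H = H_j \otimes_j \tilde H$, and plug into the definition of $B^{(j)}_d(E,E')$. The trace then factors as
$$\Tr\bigl((E\otimes_j F)^\dagger G (E'\otimes_j F) H^\dagger\bigr) = \Tr(E^\dagger G_j E' H_j^\dagger)\cdot \Tr(F^\dagger \tilde G F \tilde H^\dagger).$$
Using the error basis commutation $F^\dagger \tilde G F = \omega(F,\tilde G)^*\, \tilde G$, the second factor collapses to $q^{n-1}\,\omega(F,\tilde G)^*\,\delta_{\tilde G,\tilde H}$, forcing $\tilde G = \tilde H$ and leaving only an $n$-qubit-style character sum over $F\in\mathcal{E}^{n-1}$.

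Next I would homogenize, i.e.\ multiply by $w^{(n-1)-d}z^d$ and sum over $d$. The same per-leg calculation that yields the scalar identity shows
$$\sum_{F\in\mathcal{E}^{n-1}} \omega(F,\tilde G)^*\, w^{(n-1)-\mathrm{wt}(F)} z^{\mathrm{wt}(F)} = \bigl(w+(q^2-1)z\bigr)^{(n-1)-\mathrm{wt}(\tilde G)}(w-z)^{\mathrm{wt}(\tilde G)},$$
by splitting into $F_\ell = I$ (contributes $w$) and $F_\ell \neq I$ (contributes $-z$ if $\tilde G_\ell \neq I$ via character orthogonality, and $(q^2-1)z$ if $\tilde G_\ell = I$). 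Substituting $w' = \tfrac{w+(q^2-1)z}{q}$ and $z' = \tfrac{w-z}{q}$ absorbs a factor $q^{n-1}$ and leaves an overall $q^{-2}$ out front. The coefficient of $e_{E,E'}$ in $\mathbf{B}^{(j)}(w,z;M_1,M_2)$ then reads
$$\tfrac{1}{q^2}\sum_{G_j,H_j,\tilde G}\Tr\bigl((G_j\otimes_j\tilde G)^\dagger M_1\bigr)\Tr\bigl((H_j\otimes_j\tilde G)M_2\bigr)\Tr(E^\dagger G_j E' H_j^\dagger)(w')^{(n-1)-\mathrm{wt}(\tilde G)}(z')^{\mathrm{wt}(\tilde G)}.$$

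Finally, I would compare this against the right-hand side by expanding $\mathbf{A}^{(j)}(w',z';M_1,M_2) = \sum_{F,F',\tilde G}\Tr((F\otimes_j\tilde G)^\dagger M_1)\Tr((F'\otimes_j\tilde G)M_2)(w')^{(n-1)-\mathrm{wt}(\tilde G)}(z')^{\mathrm{wt}(\tilde G)} e_{F,F'}$ and then applying $\Psi$ term-by-term. After relabeling $F\leftrightarrow G_j$ and $F'\leftrightarrow H_j$, the coefficient of $e_{E,E'}$ in $\Psi[\mathbf{A}^{(j)}(w',z')]$ matches the display above exactly. The main obstacle is not any single conceptual step but the careful bookkeeping: tracking which powers of $q$ come from expanding $M_1, M_2$ versus from the character sum, and verifying that the $q^{-2}$ prefactor in $\Psi$ is the correct one when the residual polynomial has degree $n-1$ rather than $n$. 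Writing out both sides as double sums over $(G_j, H_j, \tilde G)$ and comparing coefficients is the cleanest way to dispatch this.
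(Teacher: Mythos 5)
Your proof is correct and is essentially the paper's own argument: the paper obtains this statement as a special case of the general weight-function MacWilliams theorem of \S\ref{section:macwilliams} (expand $M_1,M_2$ in the error basis, factor $\Tr(E^\dagger D E' (D')^\dagger)$ leg by leg, and apply per-leg character orthogonality), and your computation is exactly that argument specialized to one distinguished leg, with the off-leg character sum producing the substitution $(w,z)\mapsto\bigl(\tfrac{w+(q^2-1)z}{q},\tfrac{w-z}{q}\bigr)$ and the residual leg-$j$ trace assembling into the kernel of $\Psi$. Your normalization bookkeeping also checks out: $q^{-2n}$ from the two expansions, $q^{n-1}$ from the collapse $\tilde G=\tilde H$, and $q^{n-1}$ absorbed by the substitution leave precisely the $\tfrac{1}{q^2}$ appearing in $\Psi$, consistent with the paper's remark that the degree-$(n-1)$ homogenization pushes an extra factor of $\tfrac{1}{q}$ into $\Psi$.
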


While this theorem focuses on vector enumerator analogues of the Shor-Laflamme enumerator polynomials, there are vector enumerator extensions of the double and complete enumerators by Hu, Yang, and Yau, $C$, $D$, $E$, and $F$. Similarly $\mathbf{C}^{(j)}$ and $\mathbf{D}^{(j)}$, and $\mathbf{E}^{(j)}$ and $\mathbf{F}^{(j)}$, are related by MacWilliams identities. The form of these identities, and their proofs, is covered in \S\ref{section:macwilliams} below.

Note that the diagonal elements $e_{E,E}$ are left invariant under $\Psi$. Namely we compute
\begin{align*}
    \Psi(e_{E,E}) &= \tfrac{1}{\D^2} \sum_{F,F'}\Tr(F^\dagger E F' E^\dagger)e_{F,F'}\\
    &= \tfrac{1}{\D^2}\sum_{F,F'}\omega(E,F')\Tr(F^\dagger F')e_{F,F'}\\
    &= \tfrac{1}{\D}\sum_F \omega(E,F)e_{F,F}.
\end{align*}
This is the discrete Wigner transform on $\mathcal{E}$, and hence we have the following result.

\begin{corollary}
    Let $\mathcal{E}$ be an error basis on $\mathfrak{H} = (\mathbb{C}^\D)$. Then for any Hermitian operators $M_1, M_2$ on $\mathfrak{H}^{\otimes n}$ and leg $j=1,\dots,\D$ we have
    $$\tilde{\mathbf{B}}^{(j)}(w,z;M_1,M_2) = \Psi\left[ \tilde{\mathbf{A}}^{(j)}(\tfrac{w+(q^2-1)z}{q},\tfrac{w-z}{q};M_1,M_2) \right]$$
    where $\Psi(e_E) = \tfrac{1}{\D}\sum_{F\in\mathcal{E}} \omega(E,F)e_{F}$ is the discrete Wigner transform on $\mathcal{E}$.
\end{corollary}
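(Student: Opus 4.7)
The strategy is to deduce this as a corollary of Theorem \ref{theorem:vector-macwilliams-identity} by restricting both sides of the vector MacWilliams identity to the diagonal subspace of $V$ spanned by $\{e_{E,E} : E\in\mathcal{E}\}$. Writing $\pi$ for the projection onto this subspace, the reduced enumerators are by definition $\tilde{\mathbf{A}}^{(j)} = \pi(\mathbf{A}^{(j)})$ and $\tilde{\mathbf{B}}^{(j)} = \pi(\mathbf{B}^{(j)})$, so applying $\pi$ to both sides of Theorem \ref{theorem:vector-macwilliams-identity} reduces the statement to showing $\pi\circ\Psi = \Psi\circ\pi$ on $\mathbb{R}[w,z]\otimes V$, after which the computation of $\Psi(e_{E,E})$ in the paragraph preceding the corollary identifies the restricted $\Psi$ with the discrete Wigner transform.

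The one nontrivial step is therefore to verify that $\Psi$ preserves the off-diagonal complement, i.e., that $\Psi(e_{E,E'})$ has vanishing diagonal for $E\not= E'$. Extracting the coefficient of $e_{F,F}$ from the definition
$$\Psi(e_{E,E'}) = \tfrac{1}{q^2}\sum_{F,F'}\Tr(F^\dagger E F' (E')^\dagger)\,e_{F,F'}$$
yields $\tfrac{1}{q^2}\Tr(F^\dagger E F (E')^\dagger)$. Invoking the commutation law (\ref{equation:error-basis}) in the form $F^\dagger E F = \omega(E,F)\, E$, this simplifies to $\tfrac{1}{q^2}\omega(E,F)\,\Tr(E(E')^\dagger)$, which vanishes by orthogonality of the unitary basis whenever $E\not= E'$. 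Hence the diagonal of $\Psi(\mathbf{A}^{(j)})$ receives contributions only from the diagonal of $\mathbf{A}^{(j)}$, which is precisely the commutation $\pi\circ\Psi = \Psi\circ\pi$ we need.

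Combining this with $\Psi(e_{E,E}) = \tfrac{1}{q}\sum_F \omega(E,F)\,e_{F,F}$ already established, the induced map on the diagonal subspace is exactly $e_E \mapsto \tfrac{1}{q}\sum_F\omega(E,F)\,e_F$, i.e.\ the discrete Wigner transform on $\mathcal{E}$. No serious obstacle is expected: the argument is a short chase through the definitions, with the only substantive ingredients being the commutation phase $\omega$ and the trace-orthogonality of the error basis, both built into the hypotheses.
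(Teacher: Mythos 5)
Your proposal is correct and takes essentially the same route as the paper: restrict the vector MacWilliams identity (Theorem \ref{theorem:vector-macwilliams-identity}) to the diagonal subspace and identify the restricted $\Psi$ with the discrete Wigner transform via the computation $\Psi(e_{E,E}) = \tfrac{1}{q}\sum_F\omega(E,F)e_{F,F}$. Your extra verification that $\Psi(e_{E,E'})$ has vanishing diagonal for $E\neq E'$ (using $F^\dagger E F = \omega(E,F)E$ and trace orthogonality) is a correct and welcome filling-in of a step the paper leaves implicit, since the corollary does require that the off-diagonal part of $\mathbf{A}^{(j)}$ not contaminate the diagonal of $\mathbf{B}^{(j)}$.
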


The vector analogue of the quantum MacWilliams identity above allows us prove that the vector enumerators $\mathbf{B}^{(j)}$ behave contravariantly, analogous to the covariance Theorem \ref{theorem:vector-LU-covariant} above for $\mathbf{A}^{(j)}$.

\begin{lemma}\label{lemma:Wigner-transform-covariance}
    $\Psi\circ \Lambda(U) = \Lambda(U^\dagger)\circ\Psi$ for any unitary $U$.
\end{lemma}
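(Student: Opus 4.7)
The plan is to verify the identity on each basis element $e_{E,E'}$ and compare the coefficient of every $e_{F,F'}$ on both sides; linearity of $\Psi$ and $\Lambda(U)$ then extends the result.

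For the left-hand side, $\Psi\bigl(\Lambda(U)(e_{E,E'})\bigr)$, first apply $\Lambda(U)$ to get $\sum_{G,G'}c^*_{EG}c_{E'G'}\,e_{G,G'}$, then apply $\Psi$ termwise, and interchange the order of summation so that $e_{F,F'}$ becomes the outer variable. The coefficient of $e_{F,F'}$ is
$$\tfrac{1}{q^2}\sum_{G,G'}c^*_{EG}c_{E'G'}\Tr(F^\dagger G F' G'^\dagger).$$
Using the expansion identities recorded just after the definition of $\Lambda$, namely $\sum_G c^*_{EG}G=UEU^\dagger$ and $\sum_{G'} c_{E'G'}G'^\dagger=UE'^\dagger U^\dagger$, the inner sums collapse inside the trace and the coefficient simplifies to
$$\tfrac{1}{q^2}\Tr\bigl(F^\dagger\cdot UEU^\dagger\cdot F'\cdot UE'^\dagger U^\dagger\bigr).$$

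For the right-hand side, $\Lambda(U^\dagger)\bigl(\Psi(e_{E,E'})\bigr)$, expand $\Psi$ first to obtain $\tfrac{1}{q^2}\sum_{H,H'}\Tr(H^\dagger E H' E'^\dagger)\,e_{H,H'}$, then apply $\Lambda(U^\dagger)$. The key bookkeeping relation is $c^{U^\dagger}_{HF}=\overline{c^U_{FH}}$, which is immediate from the definition $c^V_{GE}=\tfrac1q\Tr(V^\dagger E V G^\dagger)$ together with $(U^\dagger)^\dagger=U$ and cyclicity of trace. Converting $\Lambda(U^\dagger)(e_{H,H'})$ back into $c^U$'s and applying the dual expansions $\sum_H c^U_{FH}H^\dagger = UF^\dagger U^\dagger$ and $\sum_{H'}\overline{c^U_{F'H'}}H' = UF'U^\dagger$ again collapses the inner sum inside a trace.

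The heart of the proof is then matching the two resulting trace expressions. This should follow from cyclic invariance of $\Tr$ together with $UU^\dagger = I$: one rotates the unitary factors around the trace and pairs them off so that the two sides coincide. The main obstacle will be the careful tracking of complex conjugates through the passage between $c^U$ and $c^{U^\dagger}$, combined with the asymmetric placement of $E$ versus $E'^\dagger$ in the definition of $\Psi$; this bookkeeping is the step most likely to trip one up, and warrants being written out index by index before being recast in operator form.
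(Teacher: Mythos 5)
Your setup and the collapse of the left-hand side are fine and agree with the paper's own first steps: the coefficient of $e_{F,F'}$ in $\Psi\bigl(\Lambda(U)(e_{E,E'})\bigr)$ is $\tfrac{1}{q^2}\Tr\bigl(F^\dagger\, UEU^\dagger\, F'\, U(E')^\dagger U^\dagger\bigr)$, and your bookkeeping relation $c^{U^\dagger}_{HF}=\overline{c^U_{FH}}$ is correct, so your collapsed right-hand coefficient is $\tfrac{1}{q^2}\Tr\bigl(UF^\dagger U^\dagger\, E\, UF'U^\dagger\,(E')^\dagger\bigr)$. The genuine gap is the deferred "heart of the proof": these two traces are \emph{not} related by cyclic invariance plus $UU^\dagger=I$. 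Cyclicity turns the left side into $\tfrac{1}{q^2}\Tr\bigl(U^\dagger F^\dagger U\, E\, U^\dagger F'U\,(E')^\dagger\bigr)$, i.e.\ $F^\dagger,F'$ conjugated by $U^\dagger$, while the right side has them conjugated by $U$; because $E$ and $(E')^\dagger$ sit between the conjugated factors, the unitaries cannot be paired off and cancelled. Concretely, take $q=2$, $U$ the Clifford with $UXU^\dagger=Y$, $UYU^\dagger=Z$, $UZU^\dagger=X$, and $E=E'=X$, $F=F'=Y$: your left-hand coefficient is $\tfrac14\Tr(XXXX)=\tfrac12$, while your right-hand coefficient is $\tfrac14\Tr(ZXZX)=-\tfrac12$. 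So the trace identity you have reduced the lemma to fails, and no amount of rotating factors around the trace will complete the argument as planned.

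Note also that the paper's proof does not take your route on the right-hand side: it collapses only the left side, rewrites it cyclically, and then \emph{re-expands} $U^\dagger F^\dagger U$ and $U^\dagger F'U$ in the error basis, so that the resulting coefficient pattern reassembles into a $\Lambda(\cdot)$ applied to $\Psi(e_{E,E'})$; the comparison is made between $c$-coefficient patterns, never between two closed traces. Your instinct that the conjugate/index bookkeeping is the dangerous part is exactly right, and it is worth carrying that reassembly out explicitly rather than collapsing both sides: with the identities listed after Definition \ref{definition:unitary-representation} the left side becomes $\tfrac{1}{q^2}\sum_{G,G'} c^*_{GF}\,c_{G'F'}\Tr\bigl(G^\dagger E G' (E')^\dagger\bigr)$ as the coefficient of $e_{F,F'}$, and one must then decide whether the pattern $\sum_{F,F'} c^*_{GF}c_{G'F'}e_{F,F'}$ is $\Lambda(U^\dagger)(e_{G,G'})$ or $\Lambda(U)(e_{G,G'})$ under the stated conventions. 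In the Clifford test case above one finds $\Psi\bigl(\Lambda(U)(e_{X,X})\bigr)=\Lambda(U)\bigl(\Psi(e_{X,X})\bigr)\neq\Lambda(U^\dagger)\bigl(\Psi(e_{X,X})\bigr)$, so the placement of the dagger is the entire content of the statement: it has to be settled by pinning down the $\Lambda$ convention and the index order in the reassembly, not by an appeal to cyclicity and unitarity, and your proposal as written does not close this step.
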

\begin{proof}
By direct computation:
    \begin{align*}
        &\Psi\left[\Lambda(U)(e_{E,E'}))\right] = \sum_{G,G'} c_{EG}^* c_{E'G'} \Psi(e_{G,G'})\\
        &\quad = \sum_{G,G'} \sum_{F,F'} \Tr(F^\dagger G F' (G')^\dagger) c_{EG}^*c_{E'G'} e_{F,F'}\\
        &\quad = \sum_{F,F'} \Tr\left[F^\dagger \left(\sum_G c_{EG}^* G\right) F' \left(\sum_{G'} c_{E'G'} (G')^\dagger\right)\right] e_{F,F'}\\
        &\quad = \sum_{F,F'} \Tr[F^\dagger (UEU^\dagger) F' (U(E')^\dagger U^\dagger) ] e_{F,F'}\\
        &\quad = \sum_{F,F'} \Tr[(U^\dagger F^\dagger U) E (U^\dagger F' U) (E')^\dagger) ] e_{F,F'}\\
        &\quad = \sum_{F,F'}\sum_{G,G'} \Tr[G^\dagger E G' (E')^\dagger ] c_{FG}c_{F'G'}^* e_{F,F'}\\
        &\quad = \sum_{G,G'} \Tr(G^\dagger E G' (E')^\dagger ] \Lambda(U^\dagger)(e_{G,G'})\\
        &\quad = \Lambda(U^\dagger)(\Psi(e_{G,G'})).
    \end{align*}
\end{proof}

\begin{theorem}
    Let $U = U_1\otimes \cdots \otimes U_n$ be a local unitary transformation. Then $$\mathbf{B}^{(j)}(z;UM_1U^\dagger , UM_2U^\dagger) = \Lambda(U_j^\dagger)\left(\mathbf{B}^{(j)}(z;M_1,M_2)\right).$$
\end{theorem}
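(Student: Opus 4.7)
The plan is to chain together three previously established facts: the vector MacWilliams identity (Theorem \ref{theorem:vector-macwilliams-identity}), the covariance of $\mathbf{A}^{(j)}$ under local unitaries (Theorem \ref{theorem:vector-LU-covariant}), and the commutation relation $\Psi \circ \Lambda(U) = \Lambda(U^\dagger) \circ \Psi$ just established in Lemma \ref{lemma:Wigner-transform-covariance}. Since $\mathbf{B}^{(j)}$ is defined (via MacWilliams) in terms of $\mathbf{A}^{(j)}$, and $\mathbf{A}^{(j)}$ transforms covariantly by $\Lambda(U_j)$, applying $\Psi$ converts this to contravariant transformation by $\Lambda(U_j^\dagger)$.

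First I would work in the homogeneous form and write
\[
\mathbf{B}^{(j)}(w,z;UM_1U^\dagger,UM_2U^\dagger) = \Psi\!\left[\mathbf{A}^{(j)}\!\left(\tfrac{w+(q^2-1)z}{q},\tfrac{w-z}{q};UM_1U^\dagger,UM_2U^\dagger\right)\right]
\]
using Theorem \ref{theorem:vector-macwilliams-identity}. Next, I would apply Theorem \ref{theorem:vector-LU-covariant} to pull the conjugation by $U$ out of $\mathbf{A}^{(j)}$ as an action of $\Lambda(U_j)$:
\[
= \Psi\!\left[\Lambda(U_j)\!\left(\mathbf{A}^{(j)}\!\left(\tfrac{w+(q^2-1)z}{q},\tfrac{w-z}{q};M_1,M_2\right)\right)\right].
\]
Then I would apply Lemma \ref{lemma:Wigner-transform-covariance} to interchange $\Psi$ and $\Lambda(U_j)$ at the cost of replacing $\Lambda(U_j)$ by $\Lambda(U_j^\dagger)$, giving
\[
= \Lambda(U_j^\dagger)\!\left[\Psi\!\left(\mathbf{A}^{(j)}\!\left(\tfrac{w+(q^2-1)z}{q},\tfrac{w-z}{q};M_1,M_2\right)\right)\right].
\]
Finally, by Theorem \ref{theorem:vector-macwilliams-identity} applied in the opposite direction, the bracketed quantity equals $\mathbf{B}^{(j)}(w,z;M_1,M_2)$, yielding the claim. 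Dehomogenization (setting $w=1$) recovers the statement in the single variable $z$.

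The proof is essentially a three-line composition and there is no genuine obstacle, since all the nontrivial content has been absorbed into the three preceding results. The only subtle point worth flagging is that $\Lambda$ is defined leg-by-leg, so one must verify that Theorem \ref{theorem:vector-LU-covariant} and Lemma \ref{lemma:Wigner-transform-covariance} interact correctly on the single tensor factor at leg $j$ while the other $n-1$ factors of $U$ act trivially at the level of the vector enumerator (they are absorbed into the definition of $\mathbf{A}^{(j)}$ via the local-unitary invariance already present on the summed-over legs). This is implicit in the statement of Theorem \ref{theorem:vector-LU-covariant}, so no additional work is required.
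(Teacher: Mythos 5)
Your argument is exactly the paper's proof: both chain the vector MacWilliams identity, the covariance of $\mathbf{A}^{(j)}$ under $\Lambda(U_j)$, and the intertwining relation $\Psi\circ\Lambda(U)=\Lambda(U^\dagger)\circ\Psi$, then apply MacWilliams in reverse. The reasoning and the order of steps match the paper, so nothing further is needed.
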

\begin{proof}
    From the lemma, and Theorems \ref{theorem:vector-LU-covariant} and \ref{theorem:vector-macwilliams-identity}:
    \begin{align*}
        &\mathbf{B}^{(j)}(z;UM_1U^\dagger, UM_2U^\dagger)\\
        &\quad = \Psi\left[ \mathbf{A}^{(j)}(\tfrac{w+(q^2-1)z}{q},\tfrac{w-z}{q};UM_1U^\dagger, UM_2U^\dagger) \right]\\
        &\quad = \Psi\left[ \Lambda(U_j) \mathbf{A}^{(j)}(\tfrac{w+(q^2-1)z}{q},\tfrac{w-z}{q};M_1,M_2) \right]\\
        &\quad = \Lambda(U_j^\dagger) \Psi\left[\mathbf{A}^{(j)}(\tfrac{w+(q^2-1)z}{q},\tfrac{w-z}{q};M_1,M_2) \right]\\
        &\quad = \Lambda(U_j^\dagger) \mathbf{B}^{(j)}(z;M_1, M_2).
    \end{align*}
\end{proof}

\section{Tensor enumerators}\label{section:tensor}

Tensor enumerators are the natural extension of the vector enumerators of the previous section to multiple legs. Consider a subset $J = \{ j_1,\dots, j_m\} \subseteq \{1,\dots,n\}$. Then extending the notation above, for $E\in\mathcal{E}^m$ and $F\in\mathcal{E}^{n-m}$ write $E \otimes_J F$ for the operator where for each $k=1, \dots, m$ we insert $E_k$ as the $j_k$-th factor into $F$. Our tensor enumerators will be associated to the space $V^{\otimes m}$, where $V = \mathrm{span}_\mathbb{R}\{e_{E,E'} \::\: E,E'\in\mathcal{E}\}$; we will find it convenient to reorder the indexing of these formal basis elements so that for $E,E'\in\mathcal{E}^m$ we have
$$e_{E,E'} = e_{E_1,E'_1} \otimes \cdots \otimes e_{E_m,E'_m} \in V^{\otimes m}.$$
Tensor enumerators will be elements of $\mathbb{R}[z]\otimes_\mathbb{R}V^{\otimes m}$. That is, they are tensors whose coefficients are weight enumerators. 

Formally define the weights
\begin{align}
    &A^{(J)}_d(E,E';M_1,M_2)\label{equation:tensor-A-weight}\\
    &\quad = \sum_{F \in \mathcal{E}^{n-m}[d]} \Tr((E \otimes_J F)^\dagger M_1) \Tr((E' \otimes_J F) M_2),\nonumber\\
    &B^{(J)}_d(E,E';M_1,M_2)\label{equation:tensor-B-weight}\\
    &\quad = \sum_{F \in \mathcal{E}^{n-m}[d]} \Tr((E \otimes_J F)^\dagger M_1 (E' \otimes_J F) M_2).\nonumber
\end{align}
As with the vector enumerators, weights of $E$ and $E'$ are not included in these expressions.

\begin{definition}
    The \emph{tensor enumerators} of set of legs $J$ are then
    \begin{align*}
        &\mathbf{A}^{(J)}(z; M_1, M_2) \\
        &\qquad = \sum_{E,E'\in\mathcal{E}^m} \sum_{d=0}^{n-m} A^{(J)}_d(E,E';M_1,M_2) z^d e_{E,E'},\\
        &\mathbf{B}^{(J)}(z; M_1, M_2) \\
        &\qquad = \sum_{E,E'\in\mathcal{E}^m} \sum_{d=0}^{n-m} B^{(J)}_d(E,E';M_1,M_2) z^d e_{E,E'},
    \end{align*}
    where $A^{(J)}_d$ and $B^{(J)}_d$ are defined in (\ref{equation:tensor-A-weight}) and (\ref{equation:tensor-B-weight}).
\end{definition}

We can extend the weighted trace of Lemma \ref{lemma:vector-to-scalar} above to link tensor enumerators of different degrees. Let $K \subset J$, with $|K| = k$ and $|J| = m$. Then any element of $\mathcal{E}^m$ can be written as $F \otimes_K E$ for $F\in\mathcal{E}^k$ and $E \in \mathcal{E}^{m-k}$. We define $\widetilde{\mathrm{tr}}^J_K: \mathbb{R}[z]\otimes_\mathbb{R} V^{\otimes m} \to \mathbb{R}[z]\otimes_\mathbb{R} V^{\otimes k}$ by linearly extending
$$\widetilde{\mathrm{tr}}^J_K(e_{F \otimes_K E,F' \otimes_K E'}) = \left\{\begin{array}{cl} z^{\mathrm{wt}(E)}e_{F,F'} & \text{if $E = E'$}\\ 0 & \text{if $E \not= E'$.}\end{array}\right.$$
The proof of the following result is a straightforward computation and so is left for the reader.

\begin{proposition}
    Let $K \subset J$. Then $\widetilde{\mathrm{tr}}^J_K(\mathbf{A}^{(J)}(z)) = \mathbf{A}^{(K)}(z)$, and similarly for $\mathbf{B}^{(J)}$.
\end{proposition}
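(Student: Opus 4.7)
The plan is to unpack $\widetilde{\mathrm{tr}}^J_K(\mathbf{A}^{(J)}(z))$ using the definitions and exhibit it as a relabeling of the sum defining $\mathbf{A}^{(K)}(z)$. I would first write every index $E\in\mathcal{E}^m$ in the decomposed form $F\otimes_K E_0$ with $F\in\mathcal{E}^k$ on the $K$-legs and $E_0\in\mathcal{E}^{m-k}$ on the $(J\setminus K)$-legs, and similarly $E' = F'\otimes_K E_0'$. Applying $\widetilde{\mathrm{tr}}^J_K$ then discards all terms with $E_0 \neq E_0'$ and, on the surviving diagonal $E_0 = E_0'$, contributes a factor $z^{\mathrm{wt}(E_0)}$ while retaining $e_{F,F'}$.

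Next I would substitute this simplification into the definition of $A^{(J)}_d$. The remaining inner sum runs over $G\in\mathcal{E}^{n-m}[d]$ on the legs of $\{1,\dots,n\}\setminus J$, carrying the factor $z^d$, which multiplies the factor $z^{\mathrm{wt}(E_0)}$ introduced by the trace. The key step is then the change of summation variable $H = E_0\otimes G$, viewed as an element of $\mathcal{E}^{n-k}$ placed on the legs $\{1,\dots,n\}\setminus K$. As $E_0$ ranges over $\mathcal{E}^{m-k}$ and $G$ ranges over $\mathcal{E}^{n-m}$ (with the outer sum on $d$ absorbed into the weight grading of $G$), the pair $(E_0,G)$ bijects onto $\mathcal{E}^{n-k}$, and $\mathrm{wt}(H) = \mathrm{wt}(E_0) + \mathrm{wt}(G) = \mathrm{wt}(E_0) + d$. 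The identity $F\otimes_K E_0 \otimes_J G = F\otimes_K H$ (and similarly with $F'$) then converts the whole expression into
$$\sum_{F,F'\in\mathcal{E}^k}\sum_{H\in\mathcal{E}^{n-k}} \Tr((F\otimes_K H)^\dagger M_1)\,\Tr((F'\otimes_K H) M_2)\, z^{\mathrm{wt}(H)}\, e_{F,F'},$$
which is exactly the defining formula of $\mathbf{A}^{(K)}(z)$.

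The $\mathbf{B}^{(J)}$ case is handled identically: the only modification is replacing the product of traces by the single sandwich $\Tr((F\otimes_K H)^\dagger M_1 (F'\otimes_K H) M_2)$, and the same substitution $H = E_0\otimes G$ passes through because both copies of the ``bath'' operator on either side of $M_1$ share the same $E_0$ (forced diagonal by $\widetilde{\mathrm{tr}}^J_K$) and the same $G$. I do not anticipate a genuine obstacle beyond careful bookkeeping of which subset of $\{1,\dots,n\}$ each tensor factor occupies; the proposition is essentially the statement that a sum over $\mathcal{E}^{n-k}$ can be split according to the decomposition $\{1,\dots,n\}\setminus K = (J\setminus K) \sqcup (\{1,\dots,n\}\setminus J)$, with the weight grading respected.
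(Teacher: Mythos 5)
Your proof is correct and is precisely the ``straightforward computation'' that the paper leaves to the reader: decompose the indices along $K$ and $J\setminus K$, note that $\widetilde{\mathrm{tr}}^J_K$ forces the $J\setminus K$ factors to coincide and converts their weight into a power of $z$, and then merge the sums over $\mathcal{E}^{m-k}$ and $\mathcal{E}^{n-m}$ into a single sum over $\mathcal{E}^{n-k}$ using additivity of the weight. Nothing further is needed; the $\mathbf{B}$ case goes through verbatim as you indicate.
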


As in the previous section, the action of the unitary group on our error basis naturally defines a representation on the vector space $V$. This extends naturally to local unitary operators on $V^{\otimes m}$. In fact, the following definition makes sense and the theorem holds for any $U\in \mathcal{U}(\mathfrak{H}^{\otimes m})$, not just local ones.

\begin{definition}\label{definition:unitary-representation}
    Let $\mathcal{E}$ be an error group on a Hilbert space $\mathcal{H}$, and $V = \mathrm{span}\{e_{E,E'}\}_{E,E'\in\mathcal{E}}$. Define $\Lambda: \mathcal{U}(\mathfrak{H}) \to L(V)$ by
    \begin{equation*}
        \Lambda(U)(e_{E,E'}) = \sum_{G,G'\in\mathcal{E}} c_{EG}^* c_{E'G'} e_{G,G'},
    \end{equation*}
    where $U^\dagger E U = \sum_G G c_{GE}$. Then (without additional decoration) $\Lambda:\mathcal{U}(\mathfrak{H})^{\otimes m} \to L(V^{\otimes m})$ as the $m$-fold tensor product of $\Lambda$.
\end{definition}

Note that in this definition, $c_{GE} = \frac{1}{\D}\Tr(G^\dagger U^\dagger E U)$. By exchanging the roles of $E$ and $G$, and conjugating, we get $c^*_{EG} = \frac{1}{\D}\Tr(G^\dagger U E U^\dagger)$ and thus
\begin{equation*}
    \Lambda(U^\dagger)(e_{E,E'}) = \sum_{G,G'\in\mathcal{E}} e_{G,G'} c_{GE} c^*_{G'E'}.
\end{equation*}

\begin{theorem}\label{theorem:tensor-A-unitary-covariance}
    Let $U = U_1 \otimes \cdots \otimes U_n$ be a local unitary operator on $\mathfrak{H}^{\otimes n}$, and let $J \subseteq \{1, \dots, n\}$ be any subset of legs. Then
    $$\mathbf{A}^{(J)}(z; UM_1 U^\dagger, UM_2 U^\dagger) = \Lambda(U_J)\left(\mathbf{A}^{(J)}(z; M_1,M_2)\right),$$
    where $U_J = \bigotimes_{j\in J} U_j$.
\end{theorem}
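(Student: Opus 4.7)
The plan is to imitate the proof of Theorem \ref{thm:scalar-unitary-invariance}, but to separate the legs in $J$ (which will pick up the representation $\Lambda(U_J)$) from those in $J^c$ (whose contribution we collapse using Lemma \ref{lemma:bilinear-sum}). The locality of $U$ is exactly what permits this: writing $U = U_J \otimes_J U_{J^c}$, we have $U^\dagger(E\otimes_J F)U = (U_J^\dagger E U_J)\otimes_J (U_{J^c}^\dagger F U_{J^c})$ for any $E\in\mathcal{E}^m$ and $F\in\mathcal{E}^{n-m}$, so the $J$ and $J^c$ parts can be expanded independently.

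First I would expand using $U_J^\dagger E U_J = \sum_G G\, c_{GE}$ on the legs $J$, and $U_{J^c}^\dagger F U_{J^c} = \sum_{G'} G'\, d_{G'F}$ on the legs $J^c$ (and analogously for $E'$ and for the daggered operators). Substituting into the definition (\ref{equation:tensor-A-weight}) of $A^{(J)}_d(E,E';UM_1U^\dagger,UM_2U^\dagger)$ produces a quadruple sum over $G,H \in \mathcal{E}^m$ and $G', H' \in \mathcal{E}^{n-m}$, in which the only factor depending on $F$ is $\sum_{F\in\mathcal{E}^{n-m}[d]} d^*_{G'F}\, d_{H'F}$. After a cyclic reordering inside each trace this is exactly the quantity computed by Lemma \ref{lemma:bilinear-sum} applied to the local unitary $U_{J^c}$ on $\mathfrak{H}^{\otimes(n-m)}$, so it equals $1$ when $G'=H' \in \mathcal{E}^{n-m}[d]$ and vanishes otherwise. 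After this collapse, the surviving $G'$-sum is exactly $A^{(J)}_d(G,H;M_1,M_2)$, yielding
\[
A^{(J)}_d(E,E';UM_1U^\dagger,UM_2U^\dagger) = \sum_{G,H\in\mathcal{E}^m} c^*_{GE}\,c_{HE'}\,A^{(J)}_d(G,H;M_1,M_2).
\]

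Finally I would recognize the right-hand side as the $z^d e_{E,E'}$-coefficient of $\Lambda(U_J)\mathbf{A}^{(J)}(z;M_1,M_2)$: by Definition \ref{definition:unitary-representation} we have $\Lambda(U_J)(e_{G,H}) = \sum_{E,E'} c^*_{GE}\,c_{HE'}\, e_{E,E'}$, which after a relabeling of indices matches the expression just obtained exactly. Summing over $d$, $E$, and $E'$ then gives the theorem. The main obstacle is really just careful index bookkeeping — specifically, verifying that the factorization cleanly isolates the $J$-part coefficients $c_{GE}$ (which will assemble into $\Lambda(U_J)$) from the $J^c$-part coefficients $d_{G'F}$ (which are killed off by Lemma \ref{lemma:bilinear-sum}), and checking that the cyclic reorderings inside the traces reproduce the precise form required by the lemma.
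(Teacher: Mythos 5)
Your proposal is correct and follows essentially the same route as the paper: factor the conjugation through the locality of $U$ into the $J$ and $J^c$ parts, use Lemma \ref{lemma:bilinear-sum} to collapse the $J^c$ contribution (the paper does this by citing the argument of Theorem \ref{thm:scalar-unitary-invariance}, you by expanding the conjugated complementary operators directly, which is the same mechanism), and then recognize the surviving coefficients $c^*_{GE}c_{HE'}$ as the action of $\Lambda(U_J)$ from Definition \ref{definition:unitary-representation}. The index bookkeeping and normalizations in your sketch all check out.
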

\begin{proof}
We write
\begin{align*}
    &\mathbf{A}^{(J)}(z;U M_1 U^\dagger, U M_2 U^\dagger)\\
    &\quad = \sum_{E,E'} \sum_{F} \Tr((E^\dagger \otimes_J F^\dagger) U M_1 U^\dagger )\\
    &\qquad\qquad \cdot \Tr((E' \otimes_J F) U M_2 U^\dagger) z^{\mathrm{wt}(F)} e_{E,E'}\\
    &\quad = \ \sum_{E,E'} \sum_{F} \Tr((U_J^\dagger E U_J \otimes_J \hat{U} F \hat{U})^\dagger M_1)\\
    &\qquad\qquad \cdot \Tr((U_J^\dagger E' U_J \otimes_J \hat{U}^\dagger F \hat{U}) M_2) z^{\mathrm{wt}(F)} e_{E,E'}
\end{align*}
where $U_J = \bigotimes_{j\in J} U_j$ and $\hat{U} = \bigotimes_{j \not\in J} U_j$. The invariance of this expression under $\hat{U}$ follows exactly as in Theorem \ref{thm:scalar-unitary-invariance}. For $U_J$ we expand as above
\begin{align*}
    &\mathbf{A}^{(J)}(z;U M_1 U^\dagger, U M_2 U^\dagger)\\
    &= \sum_{E,E',G,G'} \sum_{F} c^*_{GE} c_{G'E'} \Tr((G \otimes_J F)^\dagger M_1)\\
    &\qquad \qquad \cdot \Tr((G' \otimes_J F) M_2) z^{\mathrm{wt}(F)} e_{E,E'}\\
    &= \sum_{G,G'} \sum_{F}\Tr((G \otimes_J F)^\dagger M_1) \Tr((G' \otimes_J F) M_2) \\
    &\qquad \qquad \cdot z^{\mathrm{wt}(F)} \Lambda(U_J)(e_{G,G'})\\
    &= \Lambda(U_J)\left(\mathbf{A}^{(J)}(z;M_1,M_2)\right).
\end{align*}
\end{proof}

The weighted trace operation is compatible with local unitary action, as the following result shows.

\begin{proposition}
    Let $U = U_1 \otimes \cdots \otimes U_n$ be a local unitary operator on $\mathfrak{H}^{\otimes n}$, and let $K \subset J \subseteq \{1, \dots, n\}$. Then $$\Lambda(U_K) \circ \widetilde{\mathrm{tr}}^J_K = \widetilde{\mathrm{tr}}^J_K\circ \Lambda(U_J).$$
\end{proposition}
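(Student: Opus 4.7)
\begin{hproof}
The plan is to reduce the statement to a single-qudit identity using the tensor product structure built into both $\Lambda$ and $\widetilde{\mathrm{tr}}^J_K$. Since $U = U_1 \otimes \cdots \otimes U_n$ is local, we have $\Lambda(U_J) = \Lambda(U_K) \otimes \Lambda(U_{J\setminus K})$ via Definition \ref{definition:unitary-representation} (after reordering tensor factors so that $K$-legs come first). Moreover, on basis elements of the form $e_{F,F'} \otimes e_{E,E'}$ with $F,F' \in \mathcal{E}^{|K|}$ and $E,E' \in \mathcal{E}^{|J\setminus K|}$, the map $\widetilde{\mathrm{tr}}^J_K$ acts as the identity on the first tensor factor and as the ``partial weighted trace'' $\tau: e_{H,H'} \mapsto [H=H']\,z^{\mathrm{wt}(H)}$ on the second. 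So it suffices to verify that $\tau \circ \Lambda(U_{J\setminus K}) = \tau$, since then
$$\widetilde{\mathrm{tr}}^J_K \circ \Lambda(U_J) = (\mathrm{id} \otimes \tau) \circ (\Lambda(U_K) \otimes \Lambda(U_{J\setminus K})) = \Lambda(U_K) \otimes \tau = \Lambda(U_K) \circ \widetilde{\mathrm{tr}}^J_K.$$

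First I would prove the single-qudit version: for any one-qudit unitary $V$ and the induced $\Lambda(V)$,
$$\tau(\Lambda(V)(e_{E,E'})) = \sum_{H \in \mathcal{E}} c^*_{EH}\, c_{E'H}\, z^{\mathrm{wt}(H)} = [E = E']\, z^{\mathrm{wt}(E)}.$$
The key inputs are: (i) $c_{EI} = \tfrac{1}{q}\Tr(E^\dagger) = \delta_{E,I}$ because $I$ is orthogonal to every other element of the error basis, and (ii) the unitarity identity $\sum_H c^*_{EH}\, c_{E'H} = \delta_{E,E'}$, which follows from expanding $\Tr(V E' V^\dagger\, (V E V^\dagger)^\dagger) = \Tr(E' E^\dagger) = q\,\delta_{E,E'}$ using $V E V^\dagger = \sum_H c^*_{EH} H$. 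Splitting the sum $\sum_H c^*_{EH} c_{E'H} z^{\mathrm{wt}(H)}$ as the $H=I$ term plus the remainder, and substituting these identities, gives exactly $[E=E']\,z^{\mathrm{wt}(E)}$ in each of the cases $E=E'=I$, $E=E'\neq I$, and $E\neq E'$.

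Next I would lift this to $\mathcal{E}^{|J\setminus K|}$ by using that on a tensor product basis element $e_{E,E'}$ with $E = E_1\otimes\cdots\otimes E_{m-k}$, the coefficients factorize $c^*_{EH} = \prod_j c^{(j)*}_{E_j H_j}$, weights add $\mathrm{wt}(H) = \sum_j \mathrm{wt}(H_j)$, and the sum over $H$ decomposes as a product of single-qudit sums. Thus
$$\tau(\Lambda(U_{J\setminus K})(e_{E,E'})) = \prod_{j\in J\setminus K} \bigl([E_j = E'_j]\,z^{\mathrm{wt}(E_j)}\bigr) = [E=E']\,z^{\mathrm{wt}(E)} = \tau(e_{E,E'}),$$
establishing $\tau \circ \Lambda(U_{J\setminus K}) = \tau$ and completing the proof by the reduction above.

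The main obstacle is really just the single-qudit case, which one might worry fails because $V$ can rotate non-identity error-basis elements into combinations that have $I$-components. The identity $c_{EI} = \delta_{E,I}$ rules this out: no matter what $V$ is, the coefficient of $I$ in $V^\dagger E V$ vanishes unless $E = I$, and this is precisely what makes the polynomial $z^{\mathrm{wt}(\cdot)}$ invariant under the adjoint action through $\tau$.
\end{hproof}
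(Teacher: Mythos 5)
Your proof is correct, and it reaches the same reduction as the paper but closes the argument with a different key ingredient. Like the paper, you split $\widetilde{\mathrm{tr}}^J_K = \mathrm{id}\otimes\tau$ and $\Lambda(U_J)=\Lambda(U_K)\otimes\Lambda(U_{J\setminus K})$ (valid by Definition \ref{definition:unitary-representation} since $U$ is local), so that everything hinges on showing $\sum_{H} c^*_{FH}c_{F'H}\,z^{\mathrm{wt}(H)} = [F=F']\,z^{\mathrm{wt}(F)}$ over the traced legs. The paper gets this from Lemma \ref{lemma:bilinear-sum}, the weight-resolved orthogonality $\sum_{H\in\mathcal{E}^{m-k}[d]}c^*_{FH}c_{F'H}=[F=F'\in\mathcal{E}^{m-k}[d]]$, whose proof uses that local conjugation preserves supports. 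You instead prove a single-qudit identity, $\sum_H c^*_{EH}c_{E'H}z^{\mathrm{wt}(H)} = \delta_{E,I}\delta_{E',I}(1-z)+z\,\delta_{E,E'} = [E=E']z^{\mathrm{wt}(E)}$, from just two facts --- unitarity of the $c$-matrix and $c_{EI}=c^*_{EI}=\delta_{E,I}$ --- and then tensor it up using factorization of the coefficients and additivity of weights; locality of $U$ enters precisely in that factorization step. This is a legitimate and somewhat more elementary route: on one qudit ``weight-resolved'' just means separating the $H=I$ term from the rest, so your leg-by-leg computation quietly reproves the portion of Lemma \ref{lemma:bilinear-sum} that is actually needed, at the cost of not having the general multi-qudit, fixed-weight statement available for reuse elsewhere (the paper leans on that lemma again, e.g.\ in Theorem \ref{thm:scalar-unitary-invariance}). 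The only caution is notational: with the paper's convention $U^\dagger E U=\sum_G G\,c_{GE}$ one has $c_{EI}=\tfrac{1}{q}\Tr(E^\dagger V^\dagger I V)=\tfrac1q\Tr(E^\dagger)=\delta_{E,I}$, which is what you use, so your indexing is consistent, but it is worth stating explicitly which index labels the basis element and which the conjugated operator.
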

\begin{proof}
    Write $k = |K|$ and $m = |J|$. Then the left side of this equation on a basis element of $V^{\otimes m}$ is 
    \begin{align*}
        &\Lambda(U_K)[\widetilde{\mathrm{tr}}^J_K(e_{E\otimes_K F, E'\otimes_K F})] = z^{\mathrm{wt}(F)} \Lambda(U_K)(e_{E,E'})\\
        &\quad = z^{\mathrm{wt}(F)} \sum_{G,G'\in\mathcal{E}^k} c^*_{EG} c_{E'G'} e_{G,G'},
    \end{align*}
    while $\Lambda(U_K)[\widetilde{\mathrm{tr}}^J_K(e_{E\otimes_K F, E'\otimes_K F'})] = 0$ when $F\not= F'$.

    Now we compute
    \begin{align*}
        &\widetilde{\mathrm{tr}}^J_K[\Lambda(U_J)(e_{E\otimes_K F, E\otimes_K F'})]\\
        &\quad = \sum_{G,G',H,H'} c^*_{EG}c^*_{FH}c_{E'G'}c_{F'H'} \widetilde{\mathrm{tr}}^J_K(e_{G\otimes_K H, G'\otimes H'}),\\
        &\quad = \sum_{H\in\mathcal{E}^{m-k}} z^{\mathrm{wt}(H)} c^*_{FH}c_{F'H}\cdot \sum_{G,G'\in\mathcal{E}^k} c^*_{EG}c_{E'G'} e_{G,G'}.
    \end{align*}
    From Lemma \ref{lemma:bilinear-sum}, 
    \begin{align*}
        &\sum_{H\in\mathcal{E}^{m-k}[d]} \Tr(U^\dagger H^\dagger U F) \Tr(U^\dagger H U (F')^\dagger)\\
        &\quad = q^{2(m-k)}\sum_{H\in\mathcal{E}^{m-k}[d]} c^*_{FH} c_{F'H}\\
        &\quad = \left\{ \begin{array}{cl} q^{2(m-k)} & \text{if $F = F'\in\mathcal{E}^{m-k}[d]$} \\ 0 & \text{otherwise.}\end{array}\right.
    \end{align*}
    Therefore
    $$\sum_{H\in\mathcal{E}^{m-k}} z^{\mathrm{wt}(H)} c^*_{FH}c_{F'H} = \left\{ \begin{array}{cl} z^{\mathrm{wt}(F)} & \text{if $F = F'$} \\ 0 & \text{otherwise,}\end{array}\right.$$
    and so the right side of the equation coincides with the left on every element of $V^{\otimes m}$.
\end{proof}

A crucial property we will need is that tensor enumerators are homomorphic under the tensor product. That is, the tensor enumerator of a tensor product is the tensor product of the individual enumerators. Formally this captured in the following result.

\begin{proposition}\label{proposition:tensor-homomorphism}
    Let $M_1,M_2$ and $N_1,N_2$ be Hermitian operators on Hilbert spaces $\mathfrak{H}$ and $\mathfrak{K}$ respectively, and let $J$ and $K$ be subsets of legs on each of these spaces. Then
    \begin{align*}
       & \mathbf{A}^{(J\cup K)}(z;M_1\otimes N_1, M_2\otimes N_2)\\
       &\qquad =\  \mathbf{A}^{(J)}(z;M_1, M_2) \otimes \mathbf{A}^{(K)}(z;N_1,N_2),
    \end{align*}
    and similarly for $\mathbf{B}$.
\end{proposition}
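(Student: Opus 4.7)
The plan is to compute $\mathbf{A}^{(J\cup K)}(z;M_1\otimes N_1, M_2\otimes N_2)$ directly from the defining formula (\ref{equation:tensor-A-weight}) and show that every factor splits cleanly into one part depending only on $(M_1,M_2,J)$ and another depending only on $(N_1,N_2,K)$. The three ingredients driving the splitting are: (i) multiplicativity of trace across a tensor product, $\Tr((X_1\otimes Y_1)(X_2\otimes Y_2)) = \Tr(X_1 X_2)\Tr(Y_1 Y_2)$; (ii) additivity of weight, $\mathrm{wt}(G_1\otimes G_2) = \mathrm{wt}(G_1) + \mathrm{wt}(G_2)$; and (iii) the fact that the error basis on $\mathfrak{H}\otimes\mathfrak{K}$ is the tensor product of the error bases on each factor, so summing $G$ over $\mathcal{E}^{(n-m)+(n'-\ell)}$ is the same as summing $G_1$ over $\mathcal{E}^{n-m}$ and $G_2$ over $\mathcal{E}^{n'-\ell}$ independently (where $n,n'$ are the total number of legs of $\mathfrak{H},\mathfrak{K}$ and $m=|J|$, $\ell=|K|$).

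First I would set up notation. I write $E = E_J\otimes E_K$ and $E' = E'_J \otimes E'_K$ for an element of $\mathcal{E}^{m+\ell}$, with $E_J,E'_J \in \mathcal{E}^m$ and $E_K,E'_K \in \mathcal{E}^\ell$, and similarly $G = G_1 \otimes G_2$ for an element of $\mathcal{E}^{(n+n')-(m+\ell)}$ with $G_1\in\mathcal{E}^{n-m}$, $G_2\in\mathcal{E}^{n'-\ell}$. Viewing $J\cup K$ as a subset of the legs of $\mathfrak{H}\otimes\mathfrak{K}$ (with $K$ placed on the $\mathfrak{K}$ factor), the key identity is
\[
(E\otimes E')\otimes_{J\cup K} (G_1\otimes G_2) \;=\; (E_J\otimes_J G_1) \otimes (E_K\otimes_K G_2),
\]
and likewise on the primed side. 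Using this together with (i), each of the two traces in (\ref{equation:tensor-A-weight}) splits as a product of a trace against $M_1$ (resp.\ $M_2$) and a trace against $N_1$ (resp.\ $N_2$). Using (ii), $z^{\mathrm{wt}(G)} = z^{\mathrm{wt}(G_1)}z^{\mathrm{wt}(G_2)}$. Under the reindexing $e_{E,E'} = e_{E_J,E'_J}\otimes e_{E_K,E'_K}$ in $V^{\otimes m}\otimes V^{\otimes\ell}$ introduced at the start of this section, and using (iii) to factor the sum over $G$, the whole expression becomes the tensor product
\[
\Bigl(\sum_{E_J,E'_J}\sum_{G_1} \Tr((E_J\otimes_J G_1)^\dagger M_1)\Tr((E'_J\otimes_J G_1) M_2)z^{\mathrm{wt}(G_1)} e_{E_J,E'_J}\Bigr)
\]
tensored with the analogous sum over $E_K,E'_K,G_2$, which is exactly $\mathbf{A}^{(J)}(z;M_1,M_2)\otimes \mathbf{A}^{(K)}(z;N_1,N_2)$.

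The $\mathbf{B}$ case is the same argument, with the single product $(E\otimes_{J\cup K} G)^\dagger(M_1\otimes N_1)(E'\otimes_{J\cup K} G)(M_2\otimes N_2)$ factoring as $\bigl((E_J\otimes_J G_1)^\dagger M_1 (E'_J\otimes_J G_1)M_2\bigr)\otimes \bigl((E_K\otimes_K G_2)^\dagger N_1 (E'_K\otimes_K G_2)N_2\bigr)$ before taking trace.

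There is no genuine analytical obstacle here; the one place to be careful is purely bookkeeping, namely verifying that the ``$\otimes_{J\cup K}$'' operation on the combined system really does decompose into separate ``$\otimes_J$'' and ``$\otimes_K$'' operations on the two factors, and that the reordering conventions on $V^{\otimes(m+\ell)} = V^{\otimes m}\otimes V^{\otimes\ell}$ match the order in which legs are indexed. Once the indexing is set up consistently, the remainder is a direct application of $\Tr(X\otimes Y) = \Tr(X)\Tr(Y)$ and additivity of weight.
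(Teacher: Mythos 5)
Your proposal is correct and follows essentially the same route as the paper's proof: a direct expansion of the defining sum, using multiplicativity of trace over tensor products, additivity of weights, factorization of the error basis, and the identification $e_{E,E'} = e_{E_J,E'_J}\otimes e_{E_K,E'_K}$. The paper's proof is exactly this computation (with the $\mathbf{B}$ case likewise handled by factoring the single trace), so there is nothing to add.
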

\begin{proof}
    Just as in the proof of Lemma \ref{lemma:scalar-homomorphism} this is a direct computation:
    \begin{align*}
        &\mathbf{A}^{(J\cup K)}(z;M_1\otimes N_1, M_2\otimes N_2)\\
        &= \sum_{\scriptsize\begin{array}{c@{,\:}c} E & E'\\ F & F'\end{array}} \sum_{G,H} \Tr\left[((E \otimes_J G) \otimes (F\otimes_{K} H))^\dagger (M_1\otimes N_1)\right]\\
        &\qquad\qquad \cdot \Tr\left[((E' \otimes_J G) \otimes (F' \otimes_{K} H))(M_2\otimes N_2)\right]\\
        &\qquad\qquad \cdot z^{\mathrm{wt}(G)+\mathrm{wt}(H)} e_{E,E',F,F'}\\
        & \quad = \sum_{E,E',G} \Tr((E\otimes_J G)^\dagger M_1) \Tr((E'\otimes_J G) M_2) z^{\mathrm{wt}(G)}\\
        &\qquad \cdot \sum_{F,F',H} \Tr((F\otimes_{K} H)^\dagger N_1) \Tr((F' \otimes_{K} H) M'_2) z^{\mathrm{wt}(H)}\\
        &\qquad\qquad \cdot e_{E,E'}\otimes e_{F,F'}\\
        &\quad = \mathbf{A}^{(J)}(z; M_1, M_2) \otimes \mathbf{A}^{(K)}(z; N_1, N_2).
    \end{align*}
\end{proof}

In Lemma \ref{lemma:diagonal-terms} above, we showed that a nonzero off-diagonal term in a vector enumerator indicates the existence of a stabilizer supported on the leg of the vector enumerator. This results carries over to general tensor enumerators with an identical proof, which we state below for completeness. However unlike in Theorem \ref{theorem:vector-diagonal} for vector enumerators, we do not expect tensor enumerators with off-diagonal terms to factor; the proof of that theorem relied critically on the fact that any code stabilized by a Pauli of the form $P\otimes_j I^{\otimes (n-1)}$ must have a separable factor.

\begin{proposition}
    Let $\mathfrak{C}$ be a $[[n,k]]$ stabilizer code and $J$ be a set of legs of size $m$. Then for $P\not= P'\in \mathcal{P}^m$,
    \begin{enumerate}
        \item $\mathbf{A}^{(J)}(P,P')$ is nonzero if and only if $PP'\otimes_J I^{\otimes(n-m)}$ is a stabilizer;
        \item $\mathbf{B}^{(J)}(P,P')$ is nonzero if and only if $\mathbf{A}^{(J)}(P,P')$ is nonzero.
    \end{enumerate}
\end{proposition}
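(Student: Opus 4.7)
The proposition is a direct generalization of Lemma~\ref{lemma:diagonal-terms} from a single leg to a subset $J$ of legs, and as the paper itself already notes in the paragraph preceding the statement, the argument is ``identical.''  My approach is therefore to transcribe the earlier proof almost verbatim, replacing $\otimes_j$ by $\otimes_J$ everywhere, and using that $(P\otimes_J Q)(P'\otimes_J Q)=(PP')\otimes_J I^{\otimes(n-m)}$ factor-by-factor (up to an inessential phase coming from the Pauli relations $EF=\omega(E,F)FE$, which is harmless since stabilizer membership is a statement about operators up to global phase).

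For part (1), I would start from $\Pi=2^{k-n}\sum_{S\in\mathcal{S}(\mathfrak{C})}S$ and the standard fact $\Tr(R^\dagger\Pi)\ne 0 \iff R\in\mathcal{S}(\mathfrak{C})$ for $R\in\mathcal{P}^n$. Unpacking the definition,
$$\mathbf{A}^{(J)}(P,P')=\sum_{Q\in\mathcal{P}^{n-m}}\Tr((P\otimes_J Q)^\dagger\Pi)\,\Tr((P'\otimes_J Q)\Pi)\,z^{\mathrm{wt}(Q)},$$
so this polynomial is nonzero iff some $Q\in\mathcal{P}^{n-m}$ satisfies both $P\otimes_J Q,\ P'\otimes_J Q\in\mathcal{S}(\mathfrak{C})$, in which case their product $PP'\otimes_J I^{\otimes(n-m)}$ is also in $\mathcal{S}(\mathfrak{C})$; the converse direction is obtained by running the same chain of equivalences backwards, exactly as in Lemma~\ref{lemma:diagonal-terms}.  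For part (2), I would invoke the analogous fact that $\Tr(R_1^\dagger\Pi R_2\Pi)$ vanishes unless $R_1\in\mathcal{N}(\mathfrak{C})$ (otherwise $\Pi R_1\Pi=0$) and $R_1^\dagger R_2\in\mathcal{S}(\mathfrak{C})$.  Applied to $R_1=P\otimes_J Q$, $R_2=P'\otimes_J Q$, the second condition again reduces to $PP'\otimes_J I^{\otimes(n-m)}\in\mathcal{S}(\mathfrak{C})$, the stabilizer condition of part (1).  Finally, any witness $Q$ for $\mathbf{A}^{(J)}(P,P')\ne 0$ automatically witnesses $\mathbf{B}^{(J)}(P,P')\ne 0$ (since stabilizers lie in the normalizer and their product $PP'\otimes_J I^{\otimes(n-m)}$ is a stabilizer), and the reverse implication follows because the stabilizer forced by $\mathbf{B}^{(J)}(P,P')\ne 0$ is precisely the condition of part (1).

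\textbf{Expected obstacle.} I do not expect a genuine obstacle; the work is almost entirely notational bookkeeping around $\otimes_J$.  The only mildly delicate point is keeping track of the phase relating $(P\otimes_J Q)^\dagger(P'\otimes_J Q)$ and $PP'\otimes_J I^{\otimes(n-m)}$, which differ by a scalar coming from the $\omega$-cocycle.  Because all the equivalences being proved are statements about whether certain traces vanish, these phases cancel out of the vanishing criterion and do not interfere with the argument.
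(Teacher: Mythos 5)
Your proposal coincides with the paper's treatment: the paper offers no separate proof of this proposition, asserting only that the proof of Lemma \ref{lemma:diagonal-terms} carries over verbatim with $\otimes_j$ replaced by $\otimes_J$, which is exactly what you do (same expansion $\Pi = \tfrac{1}{2^{n-k}}\sum_{S\in\mathcal{S}(\mathfrak{C})}S$, same trace criterion for stabilizer membership in part (1), and the same normalizer-plus-stabilizer criterion for $\Tr(R_1^\dagger\Pi R_2\Pi)\neq 0$ in part (2)). The phase issue you flag, and the brevity of the converse direction, are handled at the same level of detail as in the paper's own argument for the vector case.
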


Exactly as with vector enumerators, the logical legs of the encoding map of a quantum code can be cleaned and therefore its tensor enumerator must be diagonal. This follows from the above proposition. In fact, the components of the enumerator contain the Shor-Laflamme enumerators of the code.

\begin{proposition}
    Let $\mathfrak{C}$ be a $[[n,k]]$ binary quantum code and $\ket{T_\mathfrak{C}}$ the state associated to its encoding map. Let $L$ be the legs of $\ket{T_\mathfrak{C}}$ corresponding to the logical operators. Then
    \begin{align*}
        A(\mathfrak{C}) &= 4^k \cdot \mathbf{A}^{(L)}(I,I;\ket{T_\mathfrak{C}}\bra{T_\mathfrak{C}}), \text{ and}\\
        B(\mathfrak{C}) &= 2^k \cdot \sum_{P\in\mathcal{P}^k} \mathbf{A}^{(L)}(P,P;\ket{T_\mathfrak{C}}\bra{T_\mathfrak{C}}).
    \end{align*}
\end{proposition}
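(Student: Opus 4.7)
My plan is to generalize the $k=1$ argument used inside the proof of Theorem~\ref{thm:scalar-enumerator-encoded} to arbitrary $k$, by first writing down the stabilizer group of the encoding state explicitly, then evaluating the traces in the definition of $\mathbf{A}^{(L)}$.

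First, I would identify the stabilizer of $\ket{T_\mathfrak{C}}$ as the $n+k$-qubit stabilizer state on legs $\{1,\dots,n\}\cup L$. For each $P\in\mathcal{P}^k$, fix a logical representative $\bar{P}\in\mathcal{N}(\mathfrak{C})$. Then the stabilizer group decomposes as the disjoint union
$$\mathcal{S}(\ket{T_\mathfrak{C}}) \;=\; \bigsqcup_{P\in\mathcal{P}^k} \pm\bar{P}\,\mathcal{S}(\mathfrak{C})\otimes_L P,$$
where the sign $\pm$ is needed only to absorb the phases arising from the identification $\ket{y_L} = \bar{P}\ket{0_L}$ when $P$ contains $Y$-factors (this is a direct generalization of the $-\bar{Y}\mathcal{S}(\mathfrak{C})\otimes Y$ term in Theorem~\ref{thm:scalar-enumerator-encoded}). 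Thus $\ket{T_\mathfrak{C}}\bra{T_\mathfrak{C}} = \frac{1}{2^{n+k}}\sum_{g\in\mathcal{S}(\ket{T_\mathfrak{C}})} g$.

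Next I would evaluate $\Tr\bigl((P\otimes_L F)\,\ket{T_\mathfrak{C}}\bra{T_\mathfrak{C}}\bigr)$ for $P\in\mathcal{P}^k$ and $F\in\mathcal{P}^n$: by orthonormality of the Pauli basis and the decomposition above, this trace vanishes unless $F$ lies in (a sign twist of) the coset $\bar{P}\mathcal{S}(\mathfrak{C})$, in which case it equals $\pm 1$. Because $\mathbf{A}^{(L)}_d(P,P;\ket{T_\mathfrak{C}}\bra{T_\mathfrak{C}})$ is a sum of squares of such traces, the signs drop out and we get
$$\mathbf{A}^{(L)}_d(P,P;\ket{T_\mathfrak{C}}\bra{T_\mathfrak{C}}) \;=\; \bigl|\bar{P}\mathcal{S}(\mathfrak{C})\cap\mathcal{P}^n[d]\bigr|.$$

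For the first claim, specialize to $P=I^{\otimes k}$: then $\bar{P}\mathcal{S}(\mathfrak{C}) = \mathcal{S}(\mathfrak{C})$ and the Example in Section~\ref{section:scalar} gives $|\mathcal{S}(\mathfrak{C})\cap\mathcal{P}^n[d]| = \tfrac{1}{4^k}A_d(\mathfrak{C})$. For the second claim, summing over all $P\in\mathcal{P}^k$ uses the disjoint coset decomposition $\mathcal{N}(\mathfrak{C}) = \bigsqcup_{P\in\mathcal{P}^k} \bar{P}\mathcal{S}(\mathfrak{C})$, together with $|\mathcal{N}(\mathfrak{C})\cap\mathcal{P}^n[d]| = \tfrac{1}{2^k}B_d(\mathfrak{C})$, to yield the stated identity after multiplying through by $2^k$.

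The only real subtlety is the careful bookkeeping of the $\pm$ signs on logical coset representatives, which never affects the final count because $\mathbf{A}^{(L)}$ only sees $|\mathrm{Tr}(\cdot)|^2$-like quantities on the diagonal; no deep step is needed beyond the tensor-enumerator definition and the Example. I do not expect this proof to meet any serious obstacle.
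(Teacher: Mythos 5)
The paper states this proposition without proof---it is the $k$-leg analogue of the vector-enumerator example following Theorem~\ref{thm:scalar-enumerator-encoded}---and your argument is exactly that generalization, so in the stabilizer setting it is correct and follows the intended route. Your decomposition of $\mathcal{S}(\ket{T_\mathfrak{C}})$ into the $4^k$ disjoint cosets $\epsilon_P\,\bar{P}\mathcal{S}(\mathfrak{C})\otimes_L P$ is right (the operator fixing the Choi state is $\bar{P}\otimes P^{*}$, so $\epsilon_P=(-1)^{\#\{Y\text{-factors of }P\}}$), and your observation that the signs cannot matter because each diagonal summand has the form $\Tr(E^{\dagger}M)\Tr(EM)=|\Tr(EM)|^{2}$ correctly disposes of the phase bookkeeping. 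One convention you should state explicitly: the constants $4^{k}$ and $2^{k}$ require the \emph{normalized} Choi state, i.e. $\ket{T_\mathfrak{C}}\bra{T_\mathfrak{C}}=2^{-(n+k)}\sum_{g\in\mathcal{S}(\ket{T_\mathfrak{C}})}g$ as you implicitly wrote, rather than the unnormalized state of the paper's definition (which would rescale the right-hand sides by $4^{k}$).

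The only substantive caveat is scope. The proposition is phrased for an arbitrary $[[n,k]]$ binary quantum code, while every step of your argument (stabilizer cosets, $\tfrac{1}{4^{k}}A_d(\mathfrak{C})=|\mathcal{S}(\mathfrak{C})\cap\mathcal{P}^{n}[d]|$, $\tfrac{1}{2^{k}}B_d(\mathfrak{C})=|\mathcal{N}(\mathfrak{C})\cap\mathcal{P}^{n}[d]|$) uses that $\mathfrak{C}$ is a stabilizer code. If the statement is meant in full generality, replace the counting by a direct computation: with the normalized Choi state,
\begin{align*}
\Tr\bigl((I^{\otimes k}\otimes_L F)\ket{T_\mathfrak{C}}\bra{T_\mathfrak{C}}\bigr) &= \tfrac{1}{2^{k}}\Tr(F\,\Pi_{\mathfrak{C}}),\\
\sum_{P\in\mathcal{P}^{k}}\bigl|\Tr\bigl((P\otimes_L F)\ket{T_\mathfrak{C}}\bra{T_\mathfrak{C}}\bigr)\bigr|^{2} &= \tfrac{1}{2^{k}}\Tr\bigl(F^{\dagger}\Pi_{\mathfrak{C}}F\,\Pi_{\mathfrak{C}}\bigr),
\end{align*}
the second identity following from orthogonality of the Pauli basis on the logical factor; multiplying the first by its conjugate and summing over $F\in\mathcal{P}^{n}[d]$ gives the $A$ identity, and summing the second over $F\in\mathcal{P}^{n}[d]$ gives the $B$ identity, for any code. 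So your proof is complete for stabilizer codes but needs this (short) substitute to cover the statement as written.
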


Finally, we state the MacWilliams identity for tensor enumerators. This includes as a special case the MacWilliams identities for scalar and vector enumerators, and is in turn a special case of a general framework for quantum MacWilliams identities covered in the next section. Specifically the following theorem is a special case of Corollary \ref{corollary:tensor-macwilliams} below.

\begin{theorem}\label{theorem:tensor-macwilliams-identity}
    Let $\mathcal{E}$ be an error basis on $\mathfrak{H} = \mathbb{C}^\D$, and $M_1, M_2$ Hermitian operators on $\mathfrak{H}^{\otimes n}$. Let $J \subseteq \{1,\dots, n\}$ be a subset of size $m$. Then
    $$\mathbf{B}^{(J)}(w,z;M_1,M_2) = \Psi\left[\mathbf{A}^{(J)}\left(\tfrac{w+(q^2-1)z}{q},\tfrac{w-z}{q};M_1,M_2\right)\right]$$
    where $\Psi(e_{E,E'}) = \frac{1}{q^{2m}} \sum_{F,F'\in\mathcal{E}^m}\Tr(F^\dagger E F'(E')^\dagger) e_{F,F'}$.
\end{theorem}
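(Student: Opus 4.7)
The plan is to derive the identity directly, following the template of the classical MacWilliams proof adapted to the tensor-valued setting, and — as the statement hints — present it as a specialization of the more general framework of Section~\ref{section:macwilliams}.

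First, I expand the Hermitian operators in the error basis:
\[
M_1 = \frac{1}{q^n}\sum_{P\in\mathcal{E}^n}\Tr(P^\dagger M_1)\,P,\qquad
M_2 = \frac{1}{q^n}\sum_{Q\in\mathcal{E}^n}\Tr(Q^\dagger M_2)\,Q,
\]
and substitute into the defining sum for $\mathbf{B}^{(J)}$. Writing each basis element as $P = P_J\otimes_J P_{J^c}$ and $Q = Q_J\otimes_J Q_{J^c}$, the operator $(E\otimes_J F)^\dagger P(E'\otimes_J F)Q$ factors across $J$ and its complement, so
\[
\Tr\bigl((E\otimes_J F)^\dagger P(E'\otimes_J F)Q\bigr)
= \Tr(E^\dagger P_J E' Q_J)\cdot \Tr(F^\dagger P_{J^c} F\,Q_{J^c}).
\]

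Second, I apply the commutation identity $F^\dagger P_{J^c} F = \omega(F,P_{J^c})^{-1} P_{J^c}$, reducing the $J^c$-factor to $\omega(F,P_{J^c})^{-1}\Tr(P_{J^c}Q_{J^c})$, and $\Tr(P_{J^c}Q_{J^c})$ forces $Q_{J^c}$ to be proportional to $P_{J^c}^\dagger$, contributing a factor of $q^{n-m}$ (with a phase depending on $P_{J^c}$). The resulting homogeneous sum
\[
\sum_{F\in\mathcal{E}^{n-m}} \omega(F,P_{J^c})^{-1}\, w^{n-m-\mathrm{wt}(F)} z^{\mathrm{wt}(F)}
\]
factors leg by leg across $J^c$: on a leg with $(P_{J^c})_j = I$ it gives $w + (q^2-1)z$, and on a leg with $(P_{J^c})_j\neq I$ it gives $w - z$ via the character orthogonality $\sum_{R\in\mathcal{E}}\omega(R,P)=0$. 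Combining with the overall $q^{-2n}$ from the expansions and the $q^{n-m}$ from $\Tr(P_{J^c}Q_{J^c})$ reproduces exactly the scalar MacWilliams substitution $w\mapsto (w+(q^2-1)z)/q$, $z\mapsto (w-z)/q$ acting on the $n-m$ non-$J$ legs and leaves an overall $q^{-2m}$.

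Third, I collect the residual sums, now indexed by $(E,E',P_J,Q_J)$ and weighted by $\Tr(P^\dagger M_1)\Tr(Q^\dagger M_2)\Tr(E^\dagger P_J E' Q_J)$. Relabeling $P_J = F$ and $Q_J = (F')^\dagger$ (and propagating the $Q_{J^c}\propto P_{J^c}^\dagger$ constraint) repackages this as $\Psi$ applied to $\mathbf{A}^{(J)}$ at the substituted variables, with the kernel $\Tr(F^\dagger E F'(E')^\dagger)$ appearing as required. The main obstacle, as usual for MacWilliams-style arguments, is the meticulous bookkeeping of phase factors from $\omega$ and from adjoints of error basis elements; in particular I must verify that the reindexing converting $\Tr(E^\dagger P_J E' Q_J)$ into $\Tr(F^\dagger E F'(E')^\dagger)$ is exact rather than merely up to phase, since any residual phase would distort $\Psi$. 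I would calibrate first in the Pauli case, where phases are explicit roots of unity, and then invoke the abstract framework of Section~\ref{section:macwilliams}: once the tensor-valued MacWilliams transform is set up there, Corollary~\ref{corollary:tensor-macwilliams} yields the present identity directly and uniformly across all nice error bases.
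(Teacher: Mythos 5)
Your proposal is correct and follows essentially the paper's route: the paper proves this theorem as a special case of the general framework in Section \ref{section:macwilliams} (the general theorem plus Corollary \ref{corollary:tensor-macwilliams}, with the Shor--Laflamme substitution of Corollary \ref{corollary:scalar-macwilliams}), and your direct computation is exactly that argument specialized — expand $M_1,M_2$ in the error basis, factor leg by leg, get $w+(q^2-1)z$ and $w-z$ on the non-$J$ legs and the kernel $\Tr(F^\dagger E F'(E')^\dagger)$ on the $J$ legs. The phase bookkeeping you worry about is handled in the paper by expanding $M_2$ as $\frac{1}{q^n}\sum_{D'}\Tr(D'M_2)(D')^\dagger$ rather than in undaggered basis elements, which removes the need to reindex $Q_J\mapsto(F')^\dagger$ up to phase.
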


Just as we did with vector enumerators, we can use the MacWilliams identity to prove the contravariant behavior of the $\mathbf{B}$ tensor enumerator.

\begin{corollary}
    Let $U = U_1 \otimes \cdots \otimes U_n$ be a local unitary operator on $\mathfrak{H}^{\otimes n}$, and let $J \subseteq \{1, \dots, n\}$ be any subset of legs. Then
    $$\mathbf{B}^{(J)}(z; UM_1 U^\dagger, UM_2 U^\dagger) = \Lambda(U_J^\dagger)\left(\mathbf{B}^{(J)}(z; M_1,M_2)\right),$$
    where $U_J = \bigotimes_{j\in J} U_j$.
\end{corollary}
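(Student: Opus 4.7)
The plan is to mirror the vector-enumerator argument (Theorem before the corollary in Section \ref{section:vector}) almost line for line, substituting tensor versions of the three ingredients that were available there. Specifically, I would chain together (i) the tensor MacWilliams identity of Theorem \ref{theorem:tensor-macwilliams-identity}, which converts $\mathbf{B}^{(J)}$ to $\mathbf{A}^{(J)}$ via $\Psi$; (ii) the covariance of $\mathbf{A}^{(J)}$ under $\Lambda(U_J)$ from Theorem \ref{theorem:tensor-A-unitary-covariance}; and (iii) a multi-leg version of the Wigner-transform covariance $\Psi\circ\Lambda(U_J)=\Lambda(U_J^\dagger)\circ\Psi$ that extends Lemma \ref{lemma:Wigner-transform-covariance}. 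The display would then read
\begin{align*}
    \mathbf{B}^{(J)}(z; UM_1U^\dagger, UM_2U^\dagger) &= \Psi\!\left[\mathbf{A}^{(J)}\!\left(\tfrac{w+(q^2-1)z}{q},\tfrac{w-z}{q}; UM_1U^\dagger, UM_2U^\dagger\right)\right]\\
    &= \Psi\!\left[\Lambda(U_J)\,\mathbf{A}^{(J)}\!\left(\tfrac{w+(q^2-1)z}{q},\tfrac{w-z}{q}; M_1, M_2\right)\right]\\
    &= \Lambda(U_J^\dagger)\,\Psi\!\left[\mathbf{A}^{(J)}\!\left(\tfrac{w+(q^2-1)z}{q},\tfrac{w-z}{q}; M_1, M_2\right)\right]\\
    &= \Lambda(U_J^\dagger)\,\mathbf{B}^{(J)}(z; M_1, M_2),
\end{align*}
with the three equalities justified respectively by the ingredients above.

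The only nontrivial step is the multi-leg version of step (iii), which is not explicitly stated in the excerpt. To establish it I would first observe that both operators factor across the legs of $J$. For $\Lambda(U_J)$ this is Definition \ref{definition:unitary-representation}, which already declares $\Lambda(U_J)=\bigotimes_{j\in J}\Lambda(U_j)$. For $\Psi$ on $V^{\otimes m}$, the key is that the trace $\Tr(F^\dagger E F'(E')^\dagger)$ appearing in Theorem \ref{theorem:tensor-macwilliams-identity} factorizes as $\prod_{k=1}^{m}\Tr(F_k^\dagger E_k F_k'(E_k')^\dagger)$ for $E,E',F,F'\in\mathcal{E}^m$, and the matrix unit $e_{F,F'}$ factorizes as $\bigotimes_k e_{F_k,F_k'}$. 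Hence $\Psi$ on $V^{\otimes m}$ coincides with the $m$-fold tensor product of the single-leg $\Psi$ from Section \ref{section:vector}. With both maps expressed as tensor products over $j\in J$, the identity $\Psi\circ\Lambda(U_J)=\Lambda(U_J^\dagger)\circ\Psi$ reduces to the single-leg Lemma \ref{lemma:Wigner-transform-covariance} applied on each factor and then tensored.

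The main obstacle, such as it is, lies in being careful with this factorization argument: one must confirm that the reordering of the basis $e_{E,E'}=e_{E_1,E_1'}\otimes\cdots\otimes e_{E_m,E_m'}$ adopted at the start of Section \ref{section:tensor} is consistent with treating $\Psi$ and $\Lambda(U_J)$ as honest tensor products of their single-leg counterparts; once that is granted, the rest is mechanical. After establishing the extended covariance lemma, the four-line computation above completes the proof, exactly paralleling the vector-enumerator contravariance theorem at the end of Section \ref{section:vector}.
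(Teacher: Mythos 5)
Your proposal is correct and is essentially the paper's own proof: the same four-step chain using the tensor MacWilliams identity (Theorem \ref{theorem:tensor-macwilliams-identity}), the covariance of $\mathbf{A}^{(J)}$ (Theorem \ref{theorem:tensor-A-unitary-covariance}), and the Wigner-transform covariance of Lemma \ref{lemma:Wigner-transform-covariance}. Your extra care in justifying the multi-leg version of $\Psi\circ\Lambda(U_J)=\Lambda(U_J^\dagger)\circ\Psi$ via factorization over the legs of $J$ is a detail the paper leaves implicit, and it is handled correctly.
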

\begin{proof}
    From the theorem, Theorem \ref{theorem:tensor-A-unitary-covariance}, and Lemma \ref{lemma:Wigner-transform-covariance} we have
    \begin{align*}
        &\mathbf{B}^{(J)}(w,z; UM_1 U^\dagger, UM_2 U^\dagger)\\
        &\quad = \Psi\left[\mathbf{A}^{(J)}\left(\tfrac{w+(q^2-1)z}{q},\tfrac{w-z}{q};UM_1 U^\dagger, UM_2 U^\dagger\right)\right]\\
        &\quad = \Psi\left[\Lambda(U_J)\mathbf{A}^{(J)}\left(\tfrac{w+(q^2-1)z}{q},\tfrac{w-z}{q};M_1, M_2\right)\right]\\
        &\quad = \Lambda(U^\dagger_J) \Psi\left[\mathbf{A}^{(J)}\left(\tfrac{w+(q^2-1)z}{q},\tfrac{w-z}{q};M_1, M_2\right)\right]\\
        &\quad = \Lambda(U_J^\dagger)\left(\mathbf{B}^{(J)}(z; M_1,M_2)\right).
    \end{align*}
\end{proof}

\section{All quantum MacWilliams identities}\label{section:macwilliams}

Let $\mathcal{E}$ be an error basis on $\mathfrak{H}$. 

\begin{definition}
    A \emph{weight function} is defined as any function $\mathrm{wt}:\mathcal{E}^2 \to \mathbb{Z}_{\geq 0}^k \cup \{\bot\}$. A weight function is \emph{scalar} if it is supported on the diagonal: $\mathrm{wt}(E,E') = \bot$ whenever $E \not= E'$. Given a tuple $(\mathrm{wt}_1,\dots, \mathrm{wt}_n)$ of such functions, we define $\mathbf{wt}:\mathcal{E}^{2n} \to \mathbb{Z}_{\geq 0}^k \cup \{\bot\}$ by
    \begin{align*}
        &\mathbf{wt}(E_1 \otimes E'_1 \otimes \cdots \otimes E_n \otimes E'_n)\\
        &\quad = \mathrm{wt}_1(E_1,E'_1) + \cdots + \mathrm{wt}_n(E_n,E'_n),
    \end{align*}
    where we define $t + \bot = \bot$ for any $t\in \mathbb{Z}_{\geq 0}^k \cup \{\bot\}$.
\end{definition}

 For a $k$-tuple of indeterminates $\mathbf{u} = (u_1, \dots, u_k)$ we write
$$\mathbf{u}^{\mathbf{wt}(E,E')} =\left\{\begin{array}{cl} \prod_{j=1}^k u_j^{\mathbf{wt}(E,E')_j} & \text{if $\mathbf{wt}(E,E') \not= \bot$}\\
0 & \text{if $\mathbf{wt}(E,E') = \bot$.}\end{array}\right.$$
Here $\mathbf{wt}(E,E')_j$ is the $j$-th coordinate of $\mathbf{wt}(E,E')$.

 For example, the usual (homogeneous) quantum weight enumerator uses variables $(w,z)$ and has weight function
$$\mathrm{wt}(E,E') = \left\{\begin{array}{cl} (1,0) & \text{if $E = E' = I$}\\ (0,1) & \text{if $E = E' \not= I$}\\ \bot & \text{if $E\not= E'$.}\end{array}\right.$$
The (homogeneous) double enumerator for the usual Pauli group uses variables $(w,x,y,z)$ and has
$$\mathrm{wt}(E,E') = \left\{\begin{array}{cl} 
(1,0,1,0) & \text{if $E = E' = I$}\\ 
(0,1,1,0) & \text{if $E = E' = X$}\\ 
(0,1,0,1) & \text{if $E = E' = Y$}\\ 
(1,0,0,1) & \text{if $E = E' = Z$}\\
\bot & \text{if $E \not= E'$.}
\end{array}\right.$$
The complete enumerator of an error basis $\mathcal{E}$ has variables $\{u_E\}_{E \in \mathcal{E}}$ and the weight $\mathrm{wt}(E,E)$ has a $1$ in the $E$-th position and zero elsewhere, while for $E\not=E'$ the weight 
$\mathrm{wt}(E,E') = \bot$.

For vector enumerators, their weight functions are composed from two weight functions. In all but one coordinate we have a scalar weight function, such as any of those above. In the one remaining coordinate we have variables $\{e_{E,E'}\}_{E,E'\in\mathcal{E}}$ and the weight function where $\mathrm{wt}(E,E')$ equal to $1$ in the $(E,E')$ position and zero elsewhere. Hence, the overall set of variables consists of the scalar variables and $\{e_{E,E'}\}$. The latter set of variables has the feature that every monomial has precisely one of these appearing linearly, hence our interpretation as a vector with polynomial coefficients.

Tensor enumerators are natural extensions of vector enumerators, where in each leg of the tensor we take a \emph{distinct} copy of the weight function $\mathrm{wt}(E,E')$ above. That is for an order $m$ tensor, we have variables $\{e_{E_1,E'_1}, \dots, e_{E_m,E'_m}\}$ and the weight function in the $j$-th leg of the tensor is $\mathrm{wt}(E_j,E'_j)$ that has a $1$ is $(E_j,E'_j)$ position for variable $e_{E_j,E'_j}$ and zero elsewhere. 

We define enumerators of Hermitian operators $M_1,M_2$ over a weight function $\mathbf{wt}$ as
\begin{align*}
    A(\mathbf{u}; M_1, M_2) &= \sum_{E,E' \in \mathcal{E}^n} \Tr(E^\dagger M_1) \Tr(E' M_2) \mathbf{u}^{\mathbf{wt}(E,E')}\\
    B(\mathbf{u}; M_1, M_2) &= \sum_{E,E' \in \mathcal{E}^n} \Tr(E^\dagger M_1 E' M_2) \mathbf{u}^{\mathbf{wt}(E,E')}.
\end{align*}
This recovers scalar, vector, and tensor enumerators over any of weight functions given above as special cases.

Given a weight function $\mathrm{wt}:\mathcal{E}^2 \to \mathbb{Z}_{\geq 0}^k\cup\{\bot\}$, consider the function $f(E,E') = \mathbf{u}^{\mathrm{wt}(E,E')}$. We define a form of generalized discrete Wigner transform of this function as
\begin{align*}
    \hat{f}(D,D') &= \frac{1}{q^2}\sum_{E,E'} \Tr(E^\dagger D E' (D')^\dagger) f(E,E')\\
    &= \frac{1}{q^2}\sum_{E,E'} \Tr(E^\dagger D E' (D')^\dagger) \mathbf{u}^{\mathrm{wt}(E,E')}.
\end{align*}

\begin{theorem}
    Let $\mathcal{E}$ be an error basis on $\mathfrak{H}$, and let $\mathbf{wt}$ be associated to weight functions $(\mathrm{wt}_1, \dots, \mathrm{wt}_n)$, where each $\mathrm{wt}_j:\mathcal{E}^2 \to \mathbb{Z}_{\geq 0}^k \cup \{\bot\}$. Suppose there exists an algebraic mapping $\Phi(\mathbf{u}) = (\Phi_1(\mathbf{u}), \dots, \Phi_k(\mathbf{u}))$ where each $j=1, \dots, n$ has:
    \begin{equation}\label{equation:condition}
    \Phi(\mathbf{u})^{\mathrm{wt}_j(D,D')} = \tfrac{1}{q^2}\sum_{E,E'} \Tr(E^\dagger D E' (D')^\dagger) \mathbf{u}^{\mathrm{wt}_j(E,E')}.
    \end{equation}
    Then the enumerators for this weight function satisfy
    $$B(\mathbf{u};M_1,M_2) = A(\Phi(\mathbf{u});M_1, M_2).$$
\end{theorem}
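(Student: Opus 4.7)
The plan is to expand $M_1$ and $M_2$ in the error basis, insert the expansion into the definition of $B(\mathbf{u};M_1,M_2)$, and exploit the tensor product structure to reduce the hypothesis (\ref{equation:condition}) to a statement about a single leg, then apply it leg by leg.

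First I would write $M_1 = \frac{1}{q^n}\sum_{D\in\mathcal{E}^n} \Tr(D^\dagger M_1) D$ and $M_2 = \frac{1}{q^n}\sum_{D'\in\mathcal{E}^n} \Tr(D' M_2) (D')^\dagger$, which is valid since $\mathcal{E}^n$ is an orthonormal basis for $L(\mathfrak{H}^{\otimes n})$ under the trace inner product. Substituting into the definition of $B$ and exchanging the order of summation gives
\begin{align*}
    &B(\mathbf{u}; M_1, M_2)\\
    &\quad = \tfrac{1}{q^{2n}} \sum_{D,D'\in\mathcal{E}^n} \Tr(D^\dagger M_1) \Tr(D' M_2) \cdot \\
    &\qquad\qquad\qquad \cdot \sum_{E,E'\in\mathcal{E}^n} \Tr(E^\dagger D E' (D')^\dagger)\, \mathbf{u}^{\mathbf{wt}(E,E')}.
\end{align*}
The point is that the inner sum over $(E,E')$ no longer involves $M_1,M_2$.

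Next I would factor the inner sum across the $n$ legs. Since $D = D_1\otimes\cdots\otimes D_n$ and $E = E_1 \otimes \cdots \otimes E_n$ (and similarly for primes), the trace factors as $\prod_{j=1}^n \Tr(E_j^\dagger D_j E'_j (D'_j)^\dagger)$, and by definition $\mathbf{u}^{\mathbf{wt}(E,E')} = \prod_j \mathbf{u}^{\mathrm{wt}_j(E_j,E'_j)}$ (with the convention that a single $\bot$ annihilates the product, matching the fact that the trace factor is paired with a zero monomial in that coordinate). Hence the inner sum factors:
\begin{align*}
    &\sum_{E,E'} \Tr(E^\dagger D E' (D')^\dagger)\, \mathbf{u}^{\mathbf{wt}(E,E')}\\
    &\quad = \prod_{j=1}^n \sum_{E_j,E'_j\in\mathcal{E}} \Tr(E_j^\dagger D_j E'_j (D'_j)^\dagger)\, \mathbf{u}^{\mathrm{wt}_j(E_j,E'_j)}.
\end{align*}

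Now I invoke the hypothesis (\ref{equation:condition}): each single-leg factor equals $q^2 \Phi(\mathbf{u})^{\mathrm{wt}_j(D_j,D'_j)}$. Multiplying these together yields $q^{2n} \prod_j \Phi(\mathbf{u})^{\mathrm{wt}_j(D_j,D'_j)} = q^{2n}\, \Phi(\mathbf{u})^{\mathbf{wt}(D,D')}$, and the factor $q^{2n}$ cancels the $\frac{1}{q^{2n}}$ from the expansion. Substituting back gives
\begin{align*}
    B(\mathbf{u};M_1,M_2) &= \sum_{D,D'} \Tr(D^\dagger M_1)\Tr(D' M_2)\, \Phi(\mathbf{u})^{\mathbf{wt}(D,D')}\\
    &= A(\Phi(\mathbf{u}); M_1, M_2),
\end{align*}
which is the desired identity.

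The only subtle point is bookkeeping around the sentinel value $\bot$: one must verify that when some $\mathrm{wt}_j(D_j,D'_j) = \bot$, the identity $(\ref{equation:condition})$ forces the corresponding single-leg trace sum to vanish, so both sides of the final equation are zero on such $(D,D')$. This is automatic from the hypothesis, so I do not expect a real obstacle; the main conceptual step is recognizing that the whole theorem reduces to a single-leg identity via the factorization of both the trace and the weight monomial, which is exactly what the additive structure of $\mathbf{wt}$ was designed to give.
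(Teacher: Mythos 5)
Your proposal is correct and is essentially the paper's own argument: expand $M_1,M_2$ in the error basis, pull the sum over $(E,E')$ inside, factor both the trace $\Tr(E^\dagger D E'(D')^\dagger)$ and the monomial $\mathbf{u}^{\mathbf{wt}(E,E')}$ leg by leg, and apply the hypothesis on each factor so that the $q^{2n}$ cancels and the sum over $(D,D')$ reassembles into $A(\Phi(\mathbf{u});M_1,M_2)$. Your remark on the $\bot$ bookkeeping is also consistent with how the paper (implicitly) treats it, since the hypothesis is assumed for all pairs $(D,D')$ including those with weight $\bot$.
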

\begin{proof}
    We begin by expressing
    \begin{align*}
        M_1 &= \frac{1}{q^n} \sum_{D \in \mathcal{E}^n} \Tr(D^\dagger M_1) D, \\
        M_2 &= \frac{1}{q^n} \sum_{D' \in \mathcal{E}^n} \Tr(D' M_2) (D')^\dagger.
    \end{align*}
    Then
    \begin{align*}
        &B(\mathbf{u};M_1,M_2) = \sum_{E,E' \in \mathcal{E}^n} \Tr(E^\dagger M_1 E' M_2) \mathbf{u}^{\mathbf{wt}(E,E')}\\
        &\quad= \frac{1}{q^{2n}} \sum_{D,D',E,E' \in \mathcal{E}^n} \left(\Tr(D^\dagger M_1)\Tr(D' M_2)\right.\\
        &\qquad\qquad\qquad\qquad\qquad \left. \cdot \Tr(E^\dagger D E' (D')^\dagger) \mathbf{u}^{\mathbf{wt}(E)}\right).
    \end{align*}
    Now,
    \begin{align*}
        &\frac{1}{q^{2n}} \sum_{E,E'\in\mathcal{E}^n} \Tr(E^\dagger DE' (D')^\dagger) \mathbf{u}^{\mathbf{wt}(E,E')}\\
        &\qquad = \frac{1}{q^{2n}} \sum_{E,E'\in\mathcal{E}^n} \prod_{j=1}^n \Tr(E_j^\dagger D_j E'_j (D'_j)^\dagger) \mathbf{u}^{\mathrm{wt}_j(E_j,E'_j)}\\
        &\qquad = \prod_{j=1}^n \frac{1}{q^2} \sum_{E,E'\in \mathcal{E}} \Tr(E^\dagger D_j E' (D'_j)^\dagger) \mathbf{u}^{\mathrm{wt}_j(E,E')}\\
        &\qquad = \prod_{j=1}^n \Phi(\mathbf{u})^{\mathrm{wt}_j(D_j,D'_j)} = \Phi(\mathbf{u})^{\mathbf{wt}(D,D')}.
    \end{align*}
    Therefore,
    \begin{align*}
    &B(\mathbf{u};M_1,M_2)\\
    &\qquad = \sum_{D,D' \in \mathcal{E}^n} \Tr(D^\dagger M_1)\Tr(D' M_2) \Phi(\mathbf{u})^{\mathbf{wt}(D,D')}\\
    &\qquad = A(\Phi(\mathbf{u});M_1, M_2).
    \end{align*}
\end{proof}

So to obtain a quantum MacWilliams identity, we need to provide an algebraic map that satisfies condition (\ref{equation:condition}) for each tensor factor. In all of the cases we examine below, the form of this map is determined by simplifying the generalized discrete Wigner transform when applied to the weight function monomial.

\begin{corollary}
    Let $\mathcal{E}$ be any error basis on $\mathfrak{H} = \mathbb{C}^\D$, variables $\mathbf{u} = \{u_E\}_{E\in\mathcal{E}}$, and the complete weight function $\mathrm{wt}$ given implicitly by $\mathbf{u}^{\mathrm{wt}(E,E')} = u_E$ when $E = E'$ and zero when $E \not= E'$. Let $A$ and $B$ be the complete weight enumerators associated to this weight function. Then
    $$B(\{u_D\};M_1,M_2) = A\left(\frac{1}{q}\sum_{E\in\mathcal{E}}\omega(D,E)u_E; M_1, M_2\right).$$
\end{corollary}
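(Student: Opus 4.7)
The plan is to invoke the theorem directly, reducing the task to producing an algebraic map $\Phi$ that satisfies condition (\ref{equation:condition}) for this particular weight function. Since the complete weight function is scalar (supported on the diagonal), only the $E = E'$ terms will contribute to the right-hand side of (\ref{equation:condition}), which simplifies matters considerably.

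First I would compute the right-hand side of (\ref{equation:condition}). For the complete weight, $\mathbf{u}^{\mathrm{wt}_j(E,E')} = u_E$ when $E = E'$ and zero otherwise, so
$$\frac{1}{q^2}\sum_{E,E' \in \mathcal{E}} \Tr(E^\dagger D E' (D')^\dagger)\,\mathbf{u}^{\mathrm{wt}(E,E')} = \frac{1}{q^2}\sum_{E\in\mathcal{E}} \Tr(E^\dagger D E (D')^\dagger)\,u_E.$$
Next I would use the error basis commutation relation to simplify the trace. From $DE = \omega(D,E)\,ED$ and unitarity of $E$, we obtain $E^\dagger D E = \omega(D,E)\,D$. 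Therefore
$$\Tr(E^\dagger D E (D')^\dagger) = \omega(D,E)\,\Tr(D (D')^\dagger) = q\,\omega(D,E)\,\delta_{D,D'},$$
using orthogonality of the error basis. Substituting, the right-hand side of (\ref{equation:condition}) equals $\tfrac{1}{q}\sum_E \omega(D,E)\,u_E$ when $D = D'$ and $0$ otherwise.

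Now I would define the map $\Phi$ coordinate-wise on the variables $\{u_D\}_{D\in\mathcal{E}}$ by
$$\Phi(\mathbf{u})_D = \frac{1}{q}\sum_{E\in\mathcal{E}} \omega(D,E)\,u_E.$$
With this definition, $\Phi(\mathbf{u})^{\mathrm{wt}(D,D')} = \Phi(\mathbf{u})_D = \tfrac{1}{q}\sum_E \omega(D,E) u_E$ when $D = D'$ and $\Phi(\mathbf{u})^{\mathrm{wt}(D,D')} = 0$ when $D \not= D'$, matching the computation above exactly. Hence condition (\ref{equation:condition}) holds for each tensor factor (note the weight function is the same on every leg, so the same $\Phi$ works for all $j$). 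The claim then follows immediately by applying the theorem.

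The only potentially delicate point is getting the sign/phase conventions in the commutation relation right, namely verifying $E^\dagger D E = \omega(D,E)\, D$ rather than its conjugate; this is a one-line check once one remembers that $\omega(D,E)$ is a root of unity and $E$ is unitary. Everything else is a direct substitution into the theorem.
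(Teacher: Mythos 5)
Your proposal is correct and follows essentially the same route as the paper: restrict the generalized Wigner transform to the diagonal, use $E^\dagger D E = \omega(D,E)D$ together with orthogonality $\Tr(D(D')^\dagger) = q\,\delta_{D,D'}$, and read off $\Phi_D(\mathbf{u}) = \tfrac{1}{q}\sum_E \omega(D,E)u_E$ as the map satisfying condition (\ref{equation:condition}) of the theorem. The phase-convention check you flag works out exactly as you state, matching the paper's computation.
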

\begin{proof}
    For any scalar weight function
    \begin{align*}
        &\frac{1}{q^2}\sum_{E,E'}\Tr(E^\dagger D E' (D')^\dagger) \mathbf{u}^{\mathrm{wt}(E,E')}\\
        &\quad = \frac{1}{q^2}\sum_{E}\Tr(E^\dagger D E (D')^\dagger) \mathbf{u}^{\mathrm{wt}(E,E)}\\
        &\quad = \frac{1}{q^2}\sum_{E} \omega(D,E) \Tr(D(D')^\dagger) \mathbf{u}^{\mathrm{wt}(E,E)}\\
        &\quad = \left\{\begin{array}{cl} \tfrac{1}{q} \sum_{E} \omega(D,E) \mathbf{u}^{\mathrm{wt}(E,E)} & \text{if $D=D'$}\\
        0 & \text{if $D\not= D'$.}\end{array}\right.
    \end{align*}
    Hence for $D = D'$ we take
    $$\Phi(\mathbf{u})^{\mathrm{wt}(D,D)} = \Phi_D(\mathbf{u}) = \frac{1}{q} \sum_{E} \omega(D,E) u_E,$$
    which then satisfies condition (\ref{equation:condition}) of the theorem. When $D \not= D'$ we have $0=0$, which is always satisfied.
\end{proof}

\begin{corollary}[\cite{shor1997quantum},\cite{rains1998quantum}]\label{corollary:scalar-macwilliams}
For any error basis $\mathcal{E}$ on $\mathfrak{H} = \mathbb{C}^\D$ the Shor-Laflamme enumerators satisfy
$$B(w,z;M_1,M_2) = A\left(\tfrac{w+(q^2-1)z}{q},\tfrac{w-z}{q};M_1,M_2\right).$$
\end{corollary}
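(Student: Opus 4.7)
The plan is to derive this corollary as a direct specialization of the preceding corollary on complete enumerators by collapsing the variable set $\{u_E\}_{E \in \mathcal{E}}$ into two variables via $u_I = w$ and $u_E = z$ for every $E \neq I$. Under this specialization, each monomial $\prod_{E \in \mathcal{E}} u_E^{\mathrm{wt}_E(G)}$ appearing in the complete enumerators becomes $w^{n - \mathrm{wt}(G)} z^{\mathrm{wt}(G)}$, so the complete $A$ and $B$ enumerators reduce exactly to the Shor--Laflamme homogeneous enumerators $A(w,z;M_1,M_2)$ and $B(w,z;M_1,M_2)$.

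I would then apply the same specialization to the MacWilliams transform $\tfrac{1}{q}\sum_E \omega(D,E) u_E$ furnished by the preceding corollary. Splitting off the $E = I$ term and using $\omega(D,I) = 1$, this becomes $\tfrac{1}{q}\bigl(w + z(\sum_E \omega(D,E) - 1)\bigr)$. To simplify the remaining sum, I would invoke the character-orthogonality identity $\sum_{E \in \mathcal{E}} \omega(D,E) = q^2\, \delta_{D,I}$, which holds because the index group of a nice error basis is abelian and $E \mapsto \omega(D,E)$ is a character of it (for the Pauli basis this specializes to additive-character orthogonality on $\mathbb{F}_q \times \mathbb{F}_q$). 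This yields $\tfrac{w + (q^2-1)z}{q}$ when $D = I$ and $\tfrac{w - z}{q}$ when $D \neq I$.

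Finally, since the complete enumerator $A$ from the preceding corollary is evaluated at $u_D$, each leg whose indicator is $D = I$ contributes a factor $\tfrac{w+(q^2-1)z}{q}$ and each leg with $D \neq I$ contributes a factor $\tfrac{w-z}{q}$; these are precisely the substitutions needed to view the specialized $A$ as the Shor--Laflamme $A$ evaluated at $\bigl(\tfrac{w+(q^2-1)z}{q}, \tfrac{w-z}{q}\bigr)$. Reassembling the specialized identity gives the claimed MacWilliams transform. The only substantive ingredient is the character-orthogonality identity $\sum_E \omega(D,E) = q^2\, \delta_{D,I}$; once this is in hand the rest is bookkeeping and a two-case evaluation.
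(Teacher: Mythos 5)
Your proposal is correct and follows essentially the same route as the paper: the paper likewise derives this corollary by specializing the complete-enumerator MacWilliams identity via $u_I = w$, $u_D = z$ for $D \neq I$, computing $\Phi_I = \tfrac{w+(q^2-1)z}{q}$ and $\Phi_D = \tfrac{w-z}{q}$ using the same character-sum fact $\sum_{E}\omega(D,E) = q^2\delta_{D,I}$ (which you spell out slightly more explicitly than the paper does). No gaps.
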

\begin{proof}
    We can reduce from the complete enumerator of the previous result by setting $u_I = w$ and $u_D = z$ for $D\not= I$. So in the case $D = D' = I$ we have
    $$\Phi_I(w,z) = \frac{1}{q}\sum_{E\in\mathcal{E}} u_E = \frac{w + (q^2-1)z}{q}.$$
    While for any $D = D' \not= I$ we have
    \begin{align*}
        \Phi_D(w,z) &= \frac{1}{q} \sum_E \omega(D,E)u_E\\
        &= \frac{1}{q}\left(w - z + z\cdot\sum_{E} \omega(D,E)\right) = \frac{w - z}{q}.
    \end{align*}
    All $\Phi_D(w,z)$ are equal, so $\Phi(w,z) = (\Phi_I(w,z),\Phi_D(w,z))$ is well defined and satisfies condition (\ref{equation:condition}).
\end{proof}

The double weight enumerator of Hu, Yang, and Yau, for both the binary \cite{hu2019complete} and non-binary \cite{hu2020weight} Pauli bases,
are defined over homogeneous variables $(w,x,y,z)$. When $\D > 2$, we can define a refined double weight enumerator over variables $\mathbf{u} = (\mathbf{x},\mathbf{z}) = (x_0, \cdots x_{\D-1},z_0, \cdots, z_{\D-1})$ using the scalar weight function implicitly defined via
$$(\mathbf{x},\mathbf{z})^{\mathrm{wt}(X^a Z^b,X^a Z^b)} = x_a z_b,$$
and zero otherwise. 

\begin{corollary}
    For the Pauli basis $\mathcal{P}$ on $\mathfrak{H} = \mathbb{C}^\D$, the refined double quantum weight enumerators defined by the weight function above satisfy
    \begin{align*}
        &B(x_a,z_b;M_1,M_2)\\
        &\qquad = A(\tfrac{1}{\sqrt{q}} \sum_{d} \zeta^{ad} z_d, \tfrac{1}{\sqrt{q}} \sum_c \zeta^{-bc} x_c; M_1,M_2),
    \end{align*}
    where $\zeta = e^{2\pi i/\D}$.
\end{corollary}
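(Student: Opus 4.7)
The plan is to apply the main theorem of this section to the specified weight function. Since this weight function is scalar (supported on the diagonal $E = E'$), the generalized discrete Wigner transform in condition (\ref{equation:condition}) collapses to a single sum. Concretely, for a Pauli $D = D' = X^a Z^b$, I expand the right-hand side of (\ref{equation:condition}) as
\begin{align*}
\frac{1}{q^2}\sum_{E\in\mathcal{P}} \Tr(E^\dagger D E D^\dagger) x_c z_d,
\end{align*}
where $E = X^c Z^d$ runs over $\mathcal{P}$, so my first step is to simplify this expression using the Pauli commutation relation.

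The key observation is that $DE = \omega(D,E) ED$ with $\omega(D,E) = \zeta^{bc - ad}$, so $E^\dagger D E = \omega(D,E) D$ and hence $E^\dagger D E D^\dagger = \omega(D,E) I$. Taking traces on $\mathbb{C}^q$ produces $\Tr(E^\dagger D E D^\dagger) = q\cdot \zeta^{bc-ad}$, and the sum factorizes:
\begin{align*}
\frac{1}{q^2}\sum_{c,d=0}^{q-1} q\,\zeta^{bc-ad}\, x_c z_d
= \frac{1}{\sqrt{q}}\!\left(\sum_{c} \zeta^{-bc} x_c\right)\! \cdot \frac{1}{\sqrt{q}}\!\left(\sum_{d} \zeta^{ad} z_d\right),
\end{align*}
where I have split the factor of $1/q$ symmetrically between the $X$- and $Z$-sums. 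This factorization is the heart of the argument: it reflects the fact that $\omega$ on Paulis is itself multiplicative across $X$ and $Z$ labels, so the generalized Wigner transform of the product monomial $x_c z_d$ separates into two independent discrete Fourier transforms.

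From this computation I would read off the algebraic map required by the theorem: define $\Phi(x_a) = \tfrac{1}{\sqrt{q}}\sum_{d} \zeta^{ad} z_d$ and $\Phi(z_b) = \tfrac{1}{\sqrt{q}}\sum_{c} \zeta^{-bc} x_c$. Then $\Phi(\mathbf{u})^{\mathrm{wt}(D,D)} = \Phi(x_a)\Phi(z_b)$ is exactly the right-hand side computed above, so condition (\ref{equation:condition}) holds at each tensor factor. Substituting $\Phi$ into the conclusion of the main theorem yields the claimed MacWilliams identity.

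I do not expect any serious obstacle here, since the work is essentially bookkeeping once the commutation identity $E^\dagger D E D^\dagger = \omega(D,E) I$ is in hand. The only subtlety is the normalization: the factor $1/q$ produced by the Wigner transform must be split as $\tfrac{1}{\sqrt{q}}\cdot\tfrac{1}{\sqrt{q}}$ so that $\Phi$ genuinely factors through the two indeterminates $x_a$ and $z_b$, and one should double-check that the sign conventions for $\zeta^{\pm bc}$ and $\zeta^{\pm ad}$ match those in the statement (i.e., that the ``$X$ variable'' maps to a Fourier transform of the ``$Z$ variables'' under the correct phase, and vice versa, as dictated by $\omega(D,E) = \zeta^{bc-ad}$).
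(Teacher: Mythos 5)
Your proposal is correct and follows essentially the same route as the paper: both verify condition (\ref{equation:condition}) of the general theorem for this scalar weight function by collapsing the generalized Wigner transform to the diagonal via $E^\dagger D E D^\dagger = \omega(D,E)I$, factoring $\tfrac{1}{q}\sum_{c,d}\zeta^{\pm(ad-bc)}x_c z_d$ into two discrete Fourier transforms with the $\tfrac{1}{\sqrt{q}}\cdot\tfrac{1}{\sqrt{q}}$ split, and reading off $\Theta_a$ and $\Psi_b$ (the paper additionally records the trivial $0=0$ check when $D \neq D'$). The only wrinkle is the sign issue you already flag: your displayed sum carries $\zeta^{bc-ad}$ (which is $\omega(D,E)$ in the paper's convention) while your factored right-hand side carries $\zeta^{ad-bc}$, the phase appearing in the corollary statement and in the paper's own proof, so the convention should be reconciled, but the argument itself is unchanged.
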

\begin{proof}
    Write $\Phi(\mathbf{x},\mathbf{z}) = (\Theta_a(\mathbf{x},\mathbf{z}),\Psi_b(\mathbf{x},\mathbf{z}))$. When $D = D' = X^a Z^b$ we require
    \begin{align*}
        &(\Phi(\mathbf{x},\mathbf{z}))^{\mathrm{wt}(X^aZ^b,X^a Z^b)} = \Theta_a(\mathbf{x},\mathbf{z})\cdot \Psi_b(\mathbf{x},\mathbf{z})\\
        &\qquad = \frac{1}{q} \sum_{c,d} \zeta^{ad - bc} (\mathbf{x},\mathbf{z})^{\mathrm{wt}(X^c Z^d,X^c Z^d)} \\        
        &\qquad = \frac{1}{q} \sum_{c,d} \zeta^{ad - bc} x_c z_d \\
        &\qquad = \left(\frac{1}{\sqrt{q}} \sum_c \zeta^{-bc} x_c\right) \left(\frac{1}{\sqrt{q}} \sum_{d} \zeta^{ad} z_d\right).
    \end{align*}
    Therefore, the condition (\ref{equation:condition}) is satisfied for
    $$\Theta_a(\mathbf{x},\mathbf{z}) = \frac{1}{\sqrt{q}} \sum_{d} \zeta^{ad} z_d \text{ and } \Psi_b(\mathbf{x},\mathbf{z}) = \frac{1}{\sqrt{q}} \sum_c \zeta^{-bc} x_c.$$
    When $D \not= D'$ we need $0=0$, which always holds.
\end{proof}

\begin{corollary}[\cite{hu2020weight}]\label{corollary:double-macwilliams}
    For the Pauli basis $\mathcal{P}$ on $\mathfrak{H} = \mathbb{C}^\D$, then the double quantum weight enumerators of Hu, Yang, and Yau satisfy:
    \begin{align*}
        &D(w,x,y,z;M_1,M_2)\\
        &\qquad = C\left(\tfrac{y+(\D-1)x}{\sqrt{\D}}, \tfrac{w-z}{\sqrt{\D}}, \tfrac{w+(\D-1)}{\sqrt{\D}}, \tfrac{y-x}{\sqrt{\D}}; M_1, M_2\right)
    \end{align*}
\end{corollary}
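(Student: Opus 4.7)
The plan is to deduce this identity from the refined double enumerator MacWilliams identity (the previous corollary) via a specialization of variables, using the fact that the non-refined $C(w,x,y,z)$ is obtained from the refined double enumerator by identifying variables according to whether Pauli indices are zero or nonzero.

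First I would verify the specialization: setting $x_0 \to y$, $x_a \to x$ for $a \neq 0$, $z_0 \to w$, $z_b \to z$ for $b \neq 0$ sends the refined per-factor monomial $x_a z_b$ for a tensor factor $X^a Z^b$ to one of $yw, xw, yz, xz$, which matches the homogenized double enumerator factor $x^{\mathrm{wt}^X}y^{n-\mathrm{wt}^X}z^{\mathrm{wt}^Z}w^{n-\mathrm{wt}^Z}$ factor by factor. Hence this specialization takes the refined $A$- and $B$-enumerators of the previous corollary to $C(w,x,y,z)$ and $D(w,x,y,z)$ respectively.

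Next, I would apply the refined MacWilliams identity. Under it, $x_a$ is replaced by $\Theta_a = \frac{1}{\sqrt{q}}\sum_d \zeta^{ad}z_d$ and $z_b$ by $\Psi_b = \frac{1}{\sqrt{q}}\sum_c \zeta^{-bc} x_c$. Specializing via the map above and using the standard root-of-unity identity $\sum_{d=0}^{q-1}\zeta^{ad} = 0$ for $a \neq 0$ yields
\begin{align*}
\Theta_0 &\mapsto \tfrac{w+(q-1)z}{\sqrt{q}}, & \Theta_a &\mapsto \tfrac{w-z}{\sqrt{q}} \quad (a \neq 0),\\
\Psi_0 &\mapsto \tfrac{y+(q-1)x}{\sqrt{q}}, & \Psi_b &\mapsto \tfrac{y-x}{\sqrt{q}} \quad (b \neq 0).
\end{align*}

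The key observation that makes the reduction go through is that after specialization the value of $\Theta_a$ (resp.\ $\Psi_b$) depends only on whether $a$ (resp.\ $b$) is zero, not on its specific nonzero value. This collapses the refined substitution $x_a \mapsto \Theta_a,\ z_b \mapsto \Psi_b$ into a well-defined substitution on just the four variables $(w,x,y,z)$: the slot of $y$ (namely $x_0$) takes the value $\Theta_0$, the slot of $x$ (any $x_a$ with $a \neq 0$) takes the value $\frac{w-z}{\sqrt{q}}$, the slot of $w$ (namely $z_0$) takes $\Psi_0$, and the slot of $z$ (any $z_b$ with $b \neq 0$) takes $\frac{y-x}{\sqrt{q}}$. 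Assembling these in the standard argument order of $C$ reproduces the claimed identity. I do not anticipate a substantive obstacle beyond careful bookkeeping; the essential content is precisely the independence-from-nonzero-index fact above, which is what allows the refined MacWilliams transform to descend to a transform on the four non-refined variables.
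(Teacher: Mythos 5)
Your proposal is correct and follows essentially the same route as the paper: reduce from the refined double enumerator by the specialization $x_0\mapsto y$, $x_a\mapsto x$, $z_0\mapsto w$, $z_b\mapsto z$, compute the specialized transforms $\Theta_a,\Psi_b$ using $\sum_d\zeta^{ad}=0$ for $a\neq 0$, and invoke the independence of the result from the particular nonzero index to get a well-defined transform on $(w,x,y,z)$. The paper phrases this as verifying condition (5.1) of the general theorem for the collapsed variables rather than substituting into the previous corollary's identity, but the content is identical.
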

\begin{proof}
    We can reduce from the refined double enumerator above by setting $x_0 = y$ and $x_a = x$ for $a > 0$, and $z_0 = w$ and $z_b = z$ for $b > 0$. Then
    \begin{align*}
        \Theta_0(w,x,y,z) &= \tfrac{1}{\sqrt{\D}}(w + (\D-1)z)\\
        \Theta_a(w,x,y,z) &= \tfrac{1}{\sqrt{\D}}(w - z + z\sum_{d} \zeta^{ad})\\
        &= \tfrac{1}{\sqrt{\D}}(w - z) \text{ for any $a > 0$}\\
        \Psi_0(w,x,y,z) &= \tfrac{1}{\sqrt{\D}}(y + (\D-1)x)\\
        \Psi_b(w,x,y,z) &= \tfrac{1}{\sqrt{\D}}(y - x + x\sum_{d} \zeta^{ad})\\
        &= \tfrac{1}{\sqrt{\D}}(y - x) \text{ for any $b > 0$.}
    \end{align*}
    Since $\Theta_a$ and $\Psi_b$ remain the same for any $a,b > 0$, any choice of these provide a well defined transform that satisfies (\ref{equation:condition}).
\end{proof}

\begin{corollary}[\cite{hu2020weight}]\label{corollary:complete-macwilliams}
    For the Pauli basis $\mathcal{P}$ on $\mathfrak{H} = \mathbb{C}^\D$, the complete quantum enumerators of Hu, Yang, and Yau satisfy
    $$F(u_{ab};M_1,M_2) = E(\tfrac{1}{q} \sum_{c,d} \zeta^{ad-bc} u_{cd};M_1,M_2)$$
    where $\zeta = e^{2\pi i/\D}$.
\end{corollary}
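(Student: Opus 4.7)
The plan is to derive this as a direct specialization of the general complete enumerator MacWilliams identity (the first corollary of this section) to the Pauli basis $\mathcal{P}$. That earlier corollary already gives, for any error basis, the coordinate-wise transform $\Phi_D(\mathbf{u}) = \frac{1}{q}\sum_{E\in\mathcal{E}}\omega(D,E)\,u_E$; everything substantive has been handled there via the generalized discrete Wigner framework, so no new analytic work is required.

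First I would relabel variables along the Pauli parameterization, writing $u_{cd}$ for $u_{X^c Z^d}$, so that the transform applied to $u_{ab}$ reads $\frac{1}{q}\sum_{c,d}\omega(X^a Z^b,\, X^c Z^d)\,u_{cd}$. Second I would evaluate $\omega$ on Pauli operators explicitly, using the commutation relation $ZX = \zeta XZ$ stated in Section~2: iterating gives $Z^b X^c = \zeta^{bc} X^c Z^b$, hence $X^a Z^b \cdot X^c Z^d = \zeta^{bc-ad}\, X^c Z^d \cdot X^a Z^b$, which identifies the phase $\omega(X^a Z^b, X^c Z^d) = \zeta^{ad-bc}$ (matching the sign convention used in the statement).

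Third, substituting this phase into the transform from step one gives $u_{ab}\mapsto \frac{1}{q}\sum_{c,d}\zeta^{ad-bc}\,u_{cd}$, and the general identity then delivers $F(u_{ab};M_1,M_2) = E(\Phi(\mathbf{u});M_1,M_2)$ as claimed. The only point requiring care is consistent bookkeeping of the sign in the $\omega$ exponent between the definition $EF = \omega(E,F)FE$ in Section~2 and the target form in the corollary; since $(c,d)$ ranges over all of $(\mathbb{Z}/q)^2$, either orientation produces the same MacWilliams transform, so this is a convention check rather than a genuine obstacle. The expected main difficulty is therefore purely notational — everything mathematically substantive already sits in the general theorem.
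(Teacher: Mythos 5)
Your overall route is the right one and is essentially the paper's: the corollary is just the general complete-enumerator MacWilliams identity (equivalently, condition \eqref{equation:condition} applied to the scalar complete weight function) specialized to the Pauli basis, so the only remaining content is evaluating the phase. The problem is precisely in that evaluation. From your own displayed relation $X^aZ^b\,X^cZ^d=\zeta^{bc-ad}\,X^cZ^d\,X^aZ^b$ and the definition $EF=\omega(E,F)FE$ of Section~2, the identification is $\omega(X^aZ^b,X^cZ^d)=\zeta^{bc-ad}$, not $\zeta^{ad-bc}$; the general corollary therefore produces the substitution $u_{ab}\mapsto\tfrac{1}{q}\sum_{c,d}\zeta^{bc-ad}u_{cd}$, the complex conjugate of the transform displayed in the statement. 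Writing $\zeta^{ad-bc}$ amounts to using $\omega(E,D)$ in place of $\omega(D,E)$, i.e.\ the opposite ordering of the commutation phase (for what it is worth, the paper's own proof simply asserts this phase rather than deriving it from Section~2's $\omega$). That discrepancy has to be flagged and resolved as a convention choice; it cannot be derived the way you did.

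The step you use to bridge it --- ``since $(c,d)$ ranges over all of $(\mathbb{Z}/q)^2$, either orientation produces the same MacWilliams transform'' --- is false. Summing over all $(c,d)$ does not symmetrize the phase, because the $u_{cd}$ are independent indeterminates: the two substitutions differ by complex conjugation of the phase and in general give different polynomials after composition with $E$. Concretely, for $\D=3$, $n=1$, $M_1=\ket{0}\bra{0}$, $M_2=\ket{1}\bra{1}$ one finds $E(\mathbf{u})=u_{00}+\zeta u_{01}+\zeta^2 u_{02}$ and $F(\mathbf{u})=u_{20}+u_{21}+u_{22}$, while substituting $u_{ab}\mapsto\tfrac{1}{q}\sum_{c,d}\zeta^{ad-bc}u_{cd}$ into $E$ yields $u_{10}+u_{11}+u_{12}$, whereas $u_{ab}\mapsto\tfrac{1}{q}\sum_{c,d}\zeta^{bc-ad}u_{cd}$ yields $u_{20}+u_{21}+u_{22}$. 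The two orientations agree exactly when every coefficient $\Tr(D^\dagger M_1)\Tr(D M_2)$ is real (e.g.\ $\D=2$, or $M_1=M_2$), which covers the standard applications but not the statement as given for arbitrary Hermitian $M_1,M_2$. So either carry the exponent $\zeta^{bc-ad}$ dictated by Section~2's convention and note explicitly that the corollary's exponent reflects the opposite ordering of $\omega$, or restrict to the real-coefficient case; the ``convention check'' as you wrote it is not a valid argument.
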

\begin{proof}
    The complete weight function is essentially the identity, hence for $D = D' = X^a Z^b$ we require we require
    \begin{align*}
        &\Phi(\mathbf{u})^{\mathrm{wt}(X^a Z^b, X^a Z^b)} = \Phi_{(a,b)}(\mathbf{u})\\
        &\qquad = \frac{1}{q} \sum_{c,d} \zeta^{ad-bc} \mathbf{u}^{\mathrm{wt}(X^c X^d, X^c x^d)} = \frac{1}{q} \sum_{c,d} \zeta^{ad-bc} u_{cd}.
    \end{align*}
    When $D \not= D'$ we need $0=0$, which is always satisfied.
\end{proof}

Finally we consider the vector and tensor enumerators over any scalar system that satisfies a MacWilliams identity, such as any of those above. 

\begin{corollary}\label{corollary:tensor-macwilliams}
    Let $\mathcal{E}$ be an error basis on $\mathfrak{H} = \mathbb{C}^\D$ and $\widetilde{\mathrm{wt}}$ be any scalar weight function that satisfies condition (\ref{equation:condition}) of the theorem for algebraic map $\Phi(\mathbf{u})$. Let $J = \{j_1, \dots, j_m\} \subseteq \{1, \dots, n\}$ and define $\mathbf{wt}$ on $(\mathcal{E}\times \mathcal{E})^n$ implicitly by
    \begin{align*}
        &(\mathbf{u},\mathbf{e}_1, \dots, \mathbf{e}_m)^{\mathbf{wt}(E,E')} \\
        &\quad = \left(\prod_{k\not\in J} \mathbf{u}^{\widetilde{\mathrm{wt}}(E_k,E'_k)}\right)e_{E_{j_1},E'_{j_1}} \cdots e_{E_{j_m},E'_{j_m}},
    \end{align*}
    where each set of variables $\mathbf{e}_j = \{e_{E,E'}\}_{E,E'\in\mathcal{E}}$. Define enumerators $\mathbf{A}^{(J)}$ and $\mathbf{B}^{(J)}$ by this weight function. Then for any Hermitian operators $M_1,M_2$ on $\mathfrak{H}^{\otimes n}$,
    $$\mathbf{B}^{(J)}(\mathbf{u}; M_1,M_2) = \Psi[\mathbf{A}^{(J)}(\Phi(\mathbf{u}); M_1,M_2)]$$
    where $\Psi(e_{E,E'}) = \frac{1}{q^{2m}} \sum_{F,F'\in\mathcal{E}^m} \Tr(F^\dagger E F' (E')^\dagger)e_{F,F'}$.
\end{corollary}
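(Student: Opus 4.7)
The strategy is to apply the general MacWilliams theorem proved earlier in this section; all the content reduces to verifying its hypothesis (\ref{equation:condition}) separately on each of the $n$ per-leg weight functions that make up $\mathbf{wt}$. These split into two classes: the ``scalar'' legs $k \notin J$ whose per-leg weight function is $\widetilde{\mathrm{wt}}$ on the alphabet of variables $\mathbf{u}$, and the ``tensor'' legs $k \in J$ whose per-leg weight function has monomial $e_{E,E'}$ in its own copy $\mathbf{e}_\ell = \{e_{E,E'}\}_{E,E'\in\mathcal{E}}$ of the tensor alphabet.

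For $k \notin J$ the condition (\ref{equation:condition}) with map $\Phi$ is precisely the hypothesis placed on $\widetilde{\mathrm{wt}}$ and $\Phi$, so nothing needs to be checked. For $k \in J$ the condition demands an algebraic map $\psi$ on the variables $\{e_{E,E'}\}$ that for each $D,D'\in\mathcal{E}$ satisfies $\psi(e_{D,D'}) = \tfrac{1}{q^2}\sum_{F,F'\in\mathcal{E}}\Tr(F^\dagger D F'(D')^\dagger)\,e_{F,F'}$. I would simply \emph{take this as the definition of $\psi$}; (\ref{equation:condition}) then holds tautologically in these legs. So condition (\ref{equation:condition}) is satisfied on every leg and the theorem delivers $\mathbf{B}^{(J)}(\mathbf{u};M_1,M_2) = \mathbf{A}^{(J)}(\Phi(\mathbf{u}),\psi(\mathbf{e}_1),\ldots,\psi(\mathbf{e}_m);M_1,M_2)$.

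The final task is to recognize that the $m$-fold coordinatewise application of $\psi$ to a basis element $e_{D_1,D'_1}\otimes\cdots\otimes e_{D_m,D'_m}$ of $V^{\otimes m}$ coincides with $\Psi$ as defined in the statement. This is a direct expansion using multiplicativity of trace across tensor factors:
$$\prod_{\ell=1}^m \tfrac{1}{q^2}\sum_{F_\ell,F'_\ell\in\mathcal{E}}\Tr(F_\ell^\dagger D_\ell F'_\ell (D'_\ell)^\dagger)\,e_{F_\ell,F'_\ell} \;=\; \tfrac{1}{q^{2m}}\sum_{F,F'\in\mathcal{E}^m}\Tr(F^\dagger D F'(D')^\dagger)\,e_{F,F'},$$
after identifying $F = F_1\otimes\cdots\otimes F_m$ and similarly for $F',D,D'$. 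Nothing here is technically hard; the only item that needs a little care is the bookkeeping that turns the product of per-leg normalizations $q^2$ into the global $q^{2m}$ and reassembles the per-leg traces as a single trace on $\mathfrak{H}^{\otimes m}$, which is the only place the plan could plausibly be mis-stated.
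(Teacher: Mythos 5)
Your proposal is correct and follows essentially the same route as the paper: invoke the general MacWilliams theorem, noting that condition (\ref{equation:condition}) holds by hypothesis on the legs outside $J$ and holds by the very definition of the transform on each leg in $J$. The only addition is your explicit check that the per-leg maps assemble (with normalization $q^{2m}$ and multiplicativity of the trace over tensor factors) into the stated $\Psi$ on $V^{\otimes m}$, which the paper leaves implicit.
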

\begin{proof}
    This is immediate. For the legs not in $J$, condition (\ref{equation:condition}) of the theorem is satisfied by assumption. For the legs of $J$, we have $\Psi(e_{E,E'}) = \frac{1}{q^{2m}} \sum_{F,F'\in\mathcal{E}^m} \Tr(F^\dagger E F' (E')^\dagger)e_{F,F'}$ is condition (\ref{equation:condition}) of the theorem.
\end{proof}

\section{Tensor networks}\label{section:networks}

\begin{figure}[b]
    \begin{center}
        \includegraphics[scale=0.75]{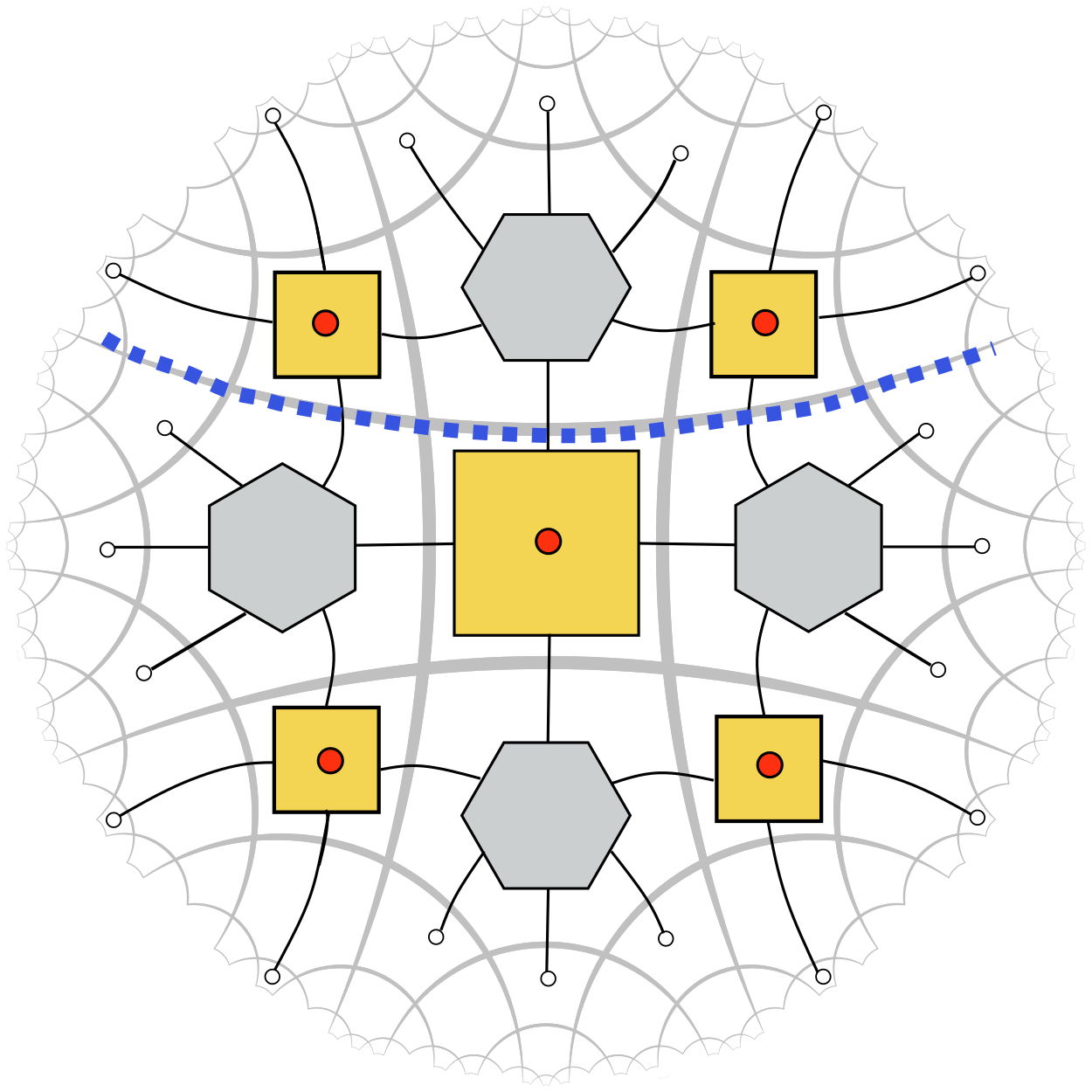}
        \caption{Holographic code from \cite{ABSC} with a cut line. Only three legs cross the cut, hence the weight enumerators of this code can be computed by tracing two tensor enumerators with three legs.}
    \end{center}
\end{figure}

Given a tensor network state expressed in the computational basis of $(\mathbb{C}^q)^{\otimes n}$,
$$\ket{T} = \sum_{j_1,\cdots, j_n} c_{j_1,\cdots, j_n} \ket{j_1, \dots, j_n},$$
we define its trace on, say, legs $1$ and $2$ to be
$$\wedge^{1,2}\ket{T} \propto \sum_j \sum_{j_3,\cdots, j_n} c_{j,j,j_3\cdots, j_n} \ket{j_3, \dots, j_n}.$$
Note that we may need to renormalize to get a state of norm one. Writing $\ket{\beta} = \frac{1}{\sqrt{q}}\sum_j\ket{jj}$,
we can informally write
$$\wedge^{1,2}\ket{T} = (\bra{\beta} \otimes I^{\otimes n-2}) \ket{T}$$
or more generally $\wedge^{j,k}\ket{T} = (\bra{\beta} \otimes_{j,k} I^{\otimes n-2}) \ket{T}$ using the notation of the previous sections.

For a general Hermitian operator $M$ on $(\mathbb{C}^q)^{\otimes n}$, and legs $1 \leq j < k \leq n$ implicitly define $\wedge^{j,k} M$ on $(\mathbb{C}^q)^{\otimes (n-2)}$ by
$$\Tr(A(\wedge^{j,k} M)) = \Tr((\ket{\beta}\bra{\beta} \otimes_{j,k} A) M)$$
for any operator $A$ on $(\mathbb{C}^q)^{\otimes (n-2)}$.

Now consider any error basis $\mathcal{E}$ on $\mathfrak{H} = \mathbb{C}^\D$. Note that for any $E,F$ we have that $E\otimes E^*$ commutes with $F \otimes F^*$ since $EF = \omega(E,F) FE$ implies that $E^*F^* = \omega(E,F)^{-1} F^* E^*$ and so
\begin{align*}
    &(E\otimes E^*)(F\otimes F^*)\\
    &\qquad = \omega(E,F)\omega(E,F)^{-1} (F\otimes F^*)(E\otimes E^*).
\end{align*}

\begin{lemma}
    For any error basis $\mathcal{E}$ on $\mathfrak{H} = \mathbb{C}^\D$ we have
    $$\ket\beta\bra\beta = \frac{1}{\D^2} \sum_{E\in\mathcal{E}} E \otimes E^*.$$
\end{lemma}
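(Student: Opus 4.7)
The plan is to verify the identity entrywise in the computational basis of $\mathfrak{H}\otimes\mathfrak{H}$, leveraging the fact that $\mathcal{E}$ is a complete orthonormal basis for $L(\mathfrak{H})$ under the Hilbert--Schmidt inner product. This turns the claimed operator identity into the statement that a certain reshuffling of the matrix entries of elements of $\mathcal{E}$ gives a Kronecker delta, which is exactly the ``dual completeness'' that comes for free once one recognizes that $|\mathcal{E}|=q^2=\dim L(\mathfrak{H})$.

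First I would unpack both sides in matrix coordinates. The left-hand side has
$\bra{i,j}\ket{\beta}\bra{\beta}\ket{k,l}=\tfrac{1}{q}\delta_{ij}\delta_{kl}$.
Writing $E_{ik}=\bra{i}E\ket{k}$, the $((i,j),(k,l))$ entry of $E\otimes E^{*}$ is $E_{ik}\,\overline{E_{jl}}$, so the right-hand side has entries
$\tfrac{1}{q^{2}}\sum_{E\in\mathcal{E}} E_{ik}\,\overline{E_{jl}}$.
Thus the lemma reduces to showing
$$\sum_{E\in\mathcal{E}} E_{ik}\,\overline{E_{jl}} \;=\; q\,\delta_{ij}\delta_{kl}.$$

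Second, I would view this as the dual of the orthogonality relation $\mathrm{Tr}(E^{\dagger}F)=q\,\delta_{E,F}$. Arrange the elements of $\mathcal{E}$ into a $q^{2}\times q^{2}$ matrix $U$ with rows indexed by $E\in\mathcal{E}$, columns indexed by pairs $(i,k)\in\{1,\dots,q\}^{2}$, and entries $U_{E,(i,k)}=E_{ik}/\sqrt{q}$. The orthogonality relation, rewritten as $\sum_{i,k}\overline{E_{ik}}\,F_{ik}=q\,\delta_{E,F}$, says precisely that $UU^{\dagger}=I$. Since $|\mathcal{E}|=\dim L(\mathfrak{H})=q^{2}$ (which is part of the definition of a unitary basis, as $\mathcal{E}$ is a basis of $L(\mathfrak{H})$), the matrix $U$ is square, hence also satisfies $U^{\dagger}U=I$. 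This yields $\sum_{E}\overline{E_{jl}}\,E_{ik}=q\,\delta_{ij}\delta_{kl}$, exactly the identity required.

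Combining the two observations gives $\sum_{E\in\mathcal{E}} E\otimes E^{*}=q^{2}\,\ket{\beta}\bra{\beta}$, and dividing through yields the lemma. The only ``obstacle'' is really bookkeeping: tracking that $E^{*}$ denotes the entrywise complex conjugate (so $(E^{*})_{jl}=\overline{E_{jl}}$) and that $|\mathcal{E}|=q^{2}$ is what makes the passage from $UU^{\dagger}=I$ to $U^{\dagger}U=I$ legitimate. Once those two points are noted, the proof is a half-line identification of matrix entries.
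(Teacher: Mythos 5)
Your proof is correct, and it takes a genuinely different route from the paper's. The paper expands $\ket\beta\bra\beta$ in the operator basis $\{E\otimes F\}_{E,F\in\mathcal{E}}$ of $L(\mathfrak{H}\otimes\mathfrak{H})$ and computes the Hilbert--Schmidt coefficients $c_{E,F}=\tfrac{1}{q^2}\Tr\bigl(\ket\beta\bra\beta\,(E\otimes F)^\dagger\bigr)=\tfrac{1}{q^3}\Tr(E^*F^\dagger)$, which vanish unless $F=E^*$; you instead verify the identity entrywise in the computational basis and reduce it to the column orthogonality $\sum_{E\in\mathcal{E}} E_{ik}\overline{E_{jl}}=q\,\delta_{ij}\delta_{kl}$, which you derive from the row orthogonality $\Tr(E^\dagger F)=q\,\delta_{E,F}$ by noting that the $q^2\times q^2$ matrix $U_{E,(i,k)}=E_{ik}/\sqrt q$ is square, so $UU^\dagger=I$ forces $U^\dagger U=I$. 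Both arguments rest only on $\mathcal{E}$ being a unitary basis (orthogonality plus $|\mathcal{E}|=q^2$) and never invoke the commutation structure $\omega$. Your version has the small advantage of sidestepping the need to interpret $\tfrac{1}{q}\sum_{F}\Tr(E^*F^\dagger)\,F$ as reconstructing $E^*$ (or to tacitly assume $E^*\in\mathcal{E}$, as the paper's ``$F=E^*$'' phrasing suggests), since the passage from row to column orthonormality packages that reconstruction automatically; the paper's version stays entirely in operator language, which meshes directly with how $\ket\beta\bra\beta$ is then used in the trace theorem that follows. The two bookkeeping points you flag, namely $(E^*)_{jl}=\overline{E_{jl}}$ and the dimension count $|\mathcal{E}|=q^2$, are indeed the only places where care is needed, and your argument is complete.
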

\begin{proof}
    As $\ket\beta\bra\beta$ is an operator on $\mathfrak{H}\otimes\mathfrak{H}$ we can write
    $$\ket\beta\bra\beta = \sum_{E,F} c_{E,F} E \otimes F.$$
    But,
    \begin{align*}
        c_{E,F} &= \tfrac{1}{\D^2} \Tr(\ket\beta\bra\beta (E\otimes F)^\dagger)\\
        &= \tfrac{1}{\D^3} \sum_{j,k} \bra{j,j} E^\dagger \otimes F^\dagger \ket{k,k}\\
        &= \tfrac{1}{\D^3} \sum_{j,k} \bra{k} E^* \ket{j} \bra{j} F^\dagger \ket{k}\\
        &= \tfrac{1}{\D^3} \Tr(E^* F^\dagger) = \left\{\begin{array}{cl} \frac{1}{\D^2} & \text{ if $F = E^*$}\\ 0 & \text{otherwise.}\end{array}\right.
    \end{align*}
\end{proof}

Recall that tensor enumerators lie in 
$$\mathbb{R}[z]\otimes_\mathbb{R} V^{\otimes m} = \mathrm{span}_{\mathbb{R}[z]}\{ e_{E,E'} \::\: E,E'\in \mathcal{E}^{\otimes m}\}.$$
We define the trace on legs $j,k$ on this space as
$$\wedge^{j,k} e_{E,E'} = e_{E\setminus\{E_j,E_k\}, E'\setminus\{E'_j,E'_k\}}$$
if $E_k = E_j^*$ and $E'_k = (E'_j)^*$, and zero otherwise.
We extend linearly to all of $\mathbb{R}[z]\otimes_\mathbb{R} V^{\otimes m}$.

\begin{theorem}\label{theorem:enumerator-trace}
    Suppose $j,k \in J \subseteq \{1,\dots,m\}$. Then
    $$\wedge^{jk}\mathbf{A}^{(J)}(z;M_1,M_2) = \mathbf{A}^{(J\setminus\{j,k\})}(z;\wedge^{j,k}M_1,\wedge^{j,k}M_2),$$
    and similarly for $\mathbf{B}^{(J)}$.
\end{theorem}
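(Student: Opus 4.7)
The plan is to unpack both sides via their definitions and reduce the equality to the preceding Lemma \(\ket\beta\bra\beta = \tfrac{1}{\D^2}\sum_{E\in\mathcal{E}} E\otimes E^*\), which bridges the ``retention rule'' implicit in the enumerator trace \(\wedge^{jk}\) (keep only basis vectors with \(E_k = E_j^*\) and \(E'_k = (E'_j)^*\)) and the ``Bell sandwich'' defining the operator trace \(\wedge^{j,k}M\).

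First I would expand the LHS. By linearity, applying \(\wedge^{jk}\) to the formal basis \(\{e_{E,E'}\}\) kills every term except those with \(E_k = E_j^*\) and \(E'_k = (E'_j)^*\), and strips those factors from the basis element. Writing \(\tilde E, \tilde E' \in \mathcal{E}^{m-2}\) for the remaining index tuples, \(D, D' \in \mathcal{E}\) for the two contracted indices, and \(J' = J\setminus\{j,k\}\), this gives
\begin{align*}
& \wedge^{jk}\mathbf{A}^{(J)}(z;M_1,M_2) \\
&\quad = \sum_{\tilde E,\tilde E',D,D',F} \Tr\bigl((E\otimes_J F)^\dagger M_1\bigr)\Tr\bigl((E'\otimes_J F)M_2\bigr) z^{\mathrm{wt}(F)}\, e_{\tilde E,\tilde E'},
\end{align*}
where \(E\in\mathcal{E}^m\) is \(\tilde E\) augmented with \(D\) at position \(j\) and \(D^*\) at position \(k\), and similarly for \(E'\).

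Next I would evaluate the inner sums over \(D\) and \(D'\). In the first trace the outer \(\dagger\) places \(D^\dagger\) at position \(j\) and \((D^*)^\dagger = D^T\) at position \(k\). By the Lemma and Hermiticity of the Bell projector,
\[
\sum_{D\in\mathcal{E}} D^\dagger\otimes D^T \;=\; \Bigl(\sum_{D\in\mathcal{E}} D\otimes D^*\Bigr)^\dagger \;=\; \D^2\ket\beta\bra\beta.
\]
Substituting this into the first trace and invoking the defining property \(\Tr(X\wedge^{j,k}M_1)=\Tr((\ket\beta\bra\beta\otimes_{j,k}X)M_1)\) with \(X=(\tilde E\otimes_{J'}F)^\dagger\) converts the \(D\)-sum into a multiple of \(\Tr((\tilde E\otimes_{J'}F)^\dagger\wedge^{j,k}M_1)\). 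The \(D'\)-sum in the second trace is handled by the Lemma directly (no adjoint needed), producing a multiple of \(\Tr((\tilde E'\otimes_{J'}F)\wedge^{j,k}M_2)\). Summed over \(\tilde E,\tilde E',F\), what remains is precisely the defining expansion of \(\mathbf{A}^{(J')}(z;\wedge^{j,k}M_1,\wedge^{j,k}M_2)\).

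For \(\mathbf{B}^{(J)}\) the strategy is the same, but the \(D\)- and \(D'\)-summations now sit inside a single trace \(\Tr((E\otimes_J F)^\dagger M_1 (E'\otimes_J F)M_2)\). After substituting both Bell-projector identities on legs \(j,k\), I would factor \(\ket\beta\bra\beta = \ket\beta\cdot\bra\beta\) and use cyclicity of the trace to identify \((\bra\beta\otimes I)M_i(\ket\beta\otimes I) = \wedge^{j,k}M_i\) independently for each \(M_i\), arriving at \(\Tr(X\wedge^{j,k}M_1\,Y\wedge^{j,k}M_2)\) and thus the reduced \(\mathbf{B}\)-enumerator on \(J'\). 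The main obstacle throughout is keeping the adjoint structure consistent: position \(k\) carries \(D^T\) rather than \(D^*\) on the LHS, so the matching to the Lemma relies essentially on Hermiticity of \(\ket\beta\bra\beta\); for \(\mathbf{B}^{(J)}\) there is the additional subtlety that two Bell projectors inside a single operator product must be disentangled into two independent invocations of the defining property of \(\wedge^{j,k}M\) via cyclicity of the trace.
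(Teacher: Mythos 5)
Your argument is correct, and for the $\mathbf{A}$-enumerator it is essentially the paper's own proof: expand, let $\wedge^{j,k}$ enforce $E_k=E_j^*$, $E'_k=(E'_j)^*$, resum the contracted legs via $\sum_E E\otimes E^* = \D^2\,\ket\beta\bra\beta$ (with the adjoint handled by Hermiticity of the Bell projector), and invoke the defining property of $\wedge^{j,k}M_i$. Where you genuinely diverge is the $\mathbf{B}$ case. The paper does not redo the computation inside the single trace; instead it appeals to the tensor MacWilliams identity, proving that the enumerator trace commutes with the generalized discrete Wigner transform, $\wedge^{j,k}\circ\Psi^{(m)} = \Psi^{(m-2)}\circ\wedge^{j,k}$, and then transfers the already-proved $\mathbf{A}$ statement to $\mathbf{B}$. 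Your route — inserting both Bell projectors on legs $j,k$, factoring $\ket\beta\bra\beta\otimes X = (\ket\beta\otimes I)X(\bra\beta\otimes I)$, and using cyclicity to recognize $(\bra\beta\otimes I)M_i(\ket\beta\otimes I) = \wedge^{j,k}M_i$ separately for $M_1$ and $M_2$ — is valid and more elementary, since it never invokes the MacWilliams machinery; the paper's route buys the commutation relation between $\wedge^{j,k}$ and $\Psi$ as a reusable by-product. One shared caveat: both arguments are loose about the powers of $\D$ coming from $\sum_E E\otimes E^*=\D^2\ket\beta\bra\beta$ (a factor $\D^2$ per contracted Bell pair); you at least flag this by saying ``a multiple of,'' and you should state once and for all which normalization of $\wedge^{j,k}M$ you adopt so that the two sides match exactly rather than up to proportionality.
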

\begin{proof}
    For $\textbf{A}$ we directly compute,
    \begin{align*}
        &\wedge^{jk}\mathbf{A}^{(J)}(z;M_1,M_2)\\
        &\ = \sum_{E,E',F} \left[\Tr((E\otimes_J F)^\dagger M_1)\Tr((E'\otimes_J F) M_2\right]\\
        &\qquad \qquad \cdot z^{\mathrm{wt}(F)}\left[\wedge^{j,k}e_{E,\bar{E}}\right]\\
        &\ = \sum_F\sum_{\tilde{E},\tilde{E}'} \sum_{G,G'} \left\{ \Tr([((G\otimes G^*) \otimes_{j,k} \tilde{E}) \otimes_J F]^\dagger M_1) \right.\\
        &\qquad \qquad \cdot \left. \Tr([((G'\otimes (G')^*) \otimes_{j,k} \tilde{E}') \otimes_J F] M_2)\right\}\\
        &\qquad \qquad \cdot z^{\mathrm{wt}(F)} e_{\tilde{E},\tilde{E}'}\\
        &\ = \sum_{\tilde{E},\tilde{E}',F} \Tr((\tilde{E}\otimes_{J\setminus\{j,k\}} F)^\dagger (\wedge^{j,k} M_1))\\
        &\qquad \qquad \cdot \Tr((\tilde{E}'\otimes_{J\setminus\{j,k\}} F) (\wedge^{j,k} M_2)) z^{\mathrm{wt}(F)} e_{\tilde{E},\tilde{E}'}\\
        &\ = \mathbf{A}^{(J\setminus\{j,k\})}(z; \wedge^{jk}M_1, \wedge^{jk}M_2)
    \end{align*}
where we write $\tilde{E} = E\setminus\{E_j,E_k\}$ and similarly for $\tilde{E}'$.

For $\textbf{B}$ we appeal to the MacWilliams identity. Recall the generalized discrete Wigner transform:
$$\Psi^{(m)}(e_{E,E'}) = \frac{1}{q^{2m}} \sum_{F,F'}\Tr(F^\dagger E F' (E')^\dagger) e_{F,F'}.$$
Here we decorate the transform as we will be working in multiple degrees. We have
\begin{align*}
    &\wedge^{j,k}[\Psi^{(m)}(e_{E,E'})] = \frac{1}{q^{2m}} \sum_{F,F'}\Tr(F^\dagger E F' (E')^\dagger) (\wedge^{j,k} e_{F,F'})\\
    &\qquad = \frac{1}{\D^{2m}}\sum_{F,F'} \Tr(F^\dagger E_j F' (E'_j)^\dagger) \Tr(F^\mathrm{tr} E_k (F')^* (E'_k)^\dagger)\\
    & \qquad \qquad \qquad \cdot \sum_{\tilde{F},\tilde{F}'} \Tr(\tilde{F}^\dagger \tilde{E} \tilde{F}' (\tilde{E}')^\dagger) e_{\tilde{F},\tilde{F}'}\\
    &\qquad = \frac{1}{\D^4} \sum_{F,F'} \Tr(F^\dagger E_j F' (E'_j)^\dagger) \Tr(F^\mathrm{tr} E_k (F')^* (E'_k)^\dagger)\\
    & \qquad \qquad \qquad \cdot \Psi^{(m-2)}(e_{\tilde{E},\tilde{E}'}),
\end{align*}
where like above we have written $\tilde{E} = E\setminus\{E_j,E_k\}$, and similarly for $F,E',F'$. 

Now,
\begin{align*}
    &\sum_{F,F'} \Tr(F^\dagger E_j F' (E'_j)^\dagger) \Tr(F^\mathrm{tr} E_k (F')^* (E'_k)^\dagger)\\
    &\quad = \sum_{F,F'} \Tr((F\otimes F^*)^\dagger (E_j\otimes E_k) (F'\otimes (F')^*) (E'_j \otimes E'_k)^\dagger)\\
    &\quad = \D^4 \Tr(\ket\beta\bra\beta (E_j\otimes E_k) \ket\beta\bra\beta (E'_j \otimes E'_k)^\dagger)\\
    &\quad = \D^4 \bra\beta E_j\otimes E_k\ket\beta \bra\beta E'_j \otimes E'_k \ket\beta.
\end{align*}
But just as we have above
\begin{align*}
    \bra\beta E_j\otimes E_k\ket\beta &= \tfrac{1}{\D} \sum_{r,s} \bra{r,r} E_j \otimes E_k \ket{s,s}\\
    &= \tfrac{1}{\D} \sum_{r,s} \bra{s} E_j^\mathrm{tr} \ket{r}\bra{r} E_k \ket{s}\\
    &= \tfrac{1}{\D}\Tr(E_j^\mathrm{tr} E_k) = \left\{\begin{array}{cl} 1 & \text{ if $E_k = E_j^*$}\\ 0 & \text{otherwise.}\end{array}\right.
\end{align*}
Thus, $e_{\tilde{E},\tilde{E}'} = \wedge^{j,k}(e_{E,E})$ and
\begin{align*}
    \wedge^{j,k}[\Psi^{(m)}(e_{E,E'})] &= \Psi^{(m-2)}(e_{\tilde{E},\tilde{E}'})\\
    &= \Psi^{(m-2)}(\wedge^{j,k}(e_{E,E'})).
\end{align*}

Therefore,
\begin{align*}
    &\wedge^{j,k}\mathbf{B}^{(J)}(w,z;M_1,M_2)\\
    &\ = \wedge^{j,k}\Psi^{(m)}\left[\mathbf{A}^{(J)}(\tfrac{w+(q^2-1)z}{q},\tfrac{w-z}{q};M_1,M_2)\right]\\
    &\ = \Psi^{(m-2)}\left[\mathbf{A}^{(J\setminus\{j,k\})}(\tfrac{w+(q^2-1)z}{q},\tfrac{w-z}{q};\wedge^{j,k}M_1,\wedge^{j,k}M_2)\right]\\
    &\ = \mathbf{B}^{(J\setminus\{j,k\})}(w,z;\wedge^{j,k}M_1,\wedge^{j,k}M_2).
\end{align*}
\end{proof}

\begin{figure*}        
    \begin{center}
        \includegraphics[width=0.95\linewidth]{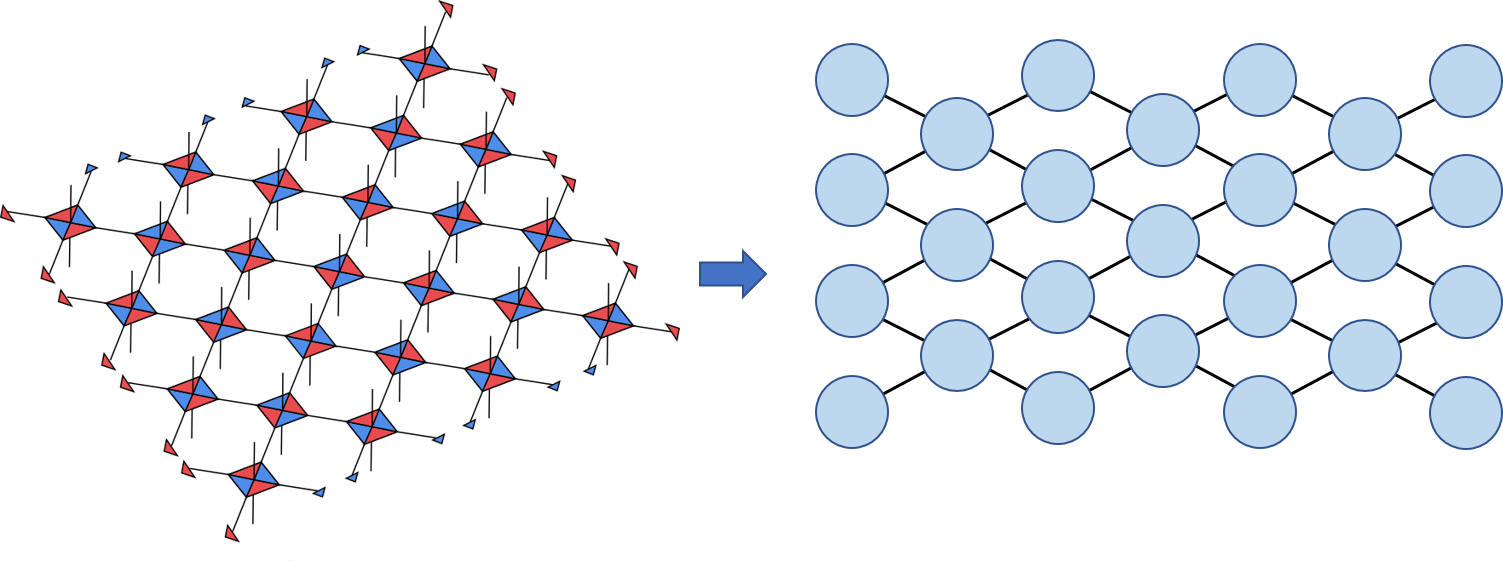}
        \caption{Left: the encoding map of a $[[25,1,4]]$ surface code written as a tensor network. Right: contraction of tensor enumerators. Each tensor $\mathbf{A}^{(J)}(z)$ is a tensor enumerator constructed from the corresponding lego on the left.}
    \end{center}
\end{figure*}

\begin{example}
Consider the holographic code from \cite{ABSC} as shown in Figure~1 above. This five logical qubit shallow holographic code only has 20 physical qubits, and so it would not be numerically challenging to compute its enumerators directly from the definitions. However, it is an informative example that we can illustrate much more easily using tensor tracing. Namely, there are only three legs across the blue cut in the figure and hence we can compute the three index tensor enumerator above the cut separately. For example, above the cut the left Bacon-Shor code (say in the $Y$-gauge) has tensor enumerator 
\begin{align*}
    \mathbf{A}^{(3,4)}_{[[4,1,2]]} &= e_{I,I} + z^2 e_{I,X} + z^2 e_{X,I} + z^2 e_{X,X}\\
    & \qquad + z^2 e_{Z,Z} + z e_{Z,Y} + z e_{Y,Z} + z^2 e_{Y,Y}.
\end{align*}
Note that the two legs on the boundary will never be traced, and so we need only track the tensor for remaining two legs, legs 3 and 4 by our counting. The tensor enumerator for the perfect tensor with legs 4, 5, and 6 has 64 nonzero terms and so we will not write out $\mathbf{A}^{(4,5,6)}$ explicitly, however each coefficient is simply a single monomial. Tracing leg 3 of $\mathbf{A}^{(3,4)}_{[[4,1,2]]}$ with leg 6 of $\mathbf{A}^{(4,5,6)}$ produces a new tensor enumerator with 3 legs and 64 terms. Tracing the result of this with the tensor enumerator of the right Bacon-Shor code again produces a tensor enumerator with 3 legs and 64 terms.

When we trace in the perfect tensor on the left (or right) of lower wedge we will expand to four legs across the cut, and therefore will need to store a tensor enumerator with 256 terms. But this is the largest cut needed and so the enumerator of the whole code can be computing by manipulating tensors consisting of 256 enumerators. In some sense, this is optimal as tensor enumerator of the middle Bacon-Shor code has 4 legs itself, albeit only eight of its 256 terms are nonzero.
\end{example}

\begin{example}
    Recall that the surface code can be built from contracting encoding isometries of the $[[5,1,2]]$ codes \cite{cao2022quantum}. Its encoding map has a non-trivial kernel and can be represented graphically by Figure~2 (left). We take the upward pointing legs as inputs while the downward pointing ones as outputs (physical qubits). The blue and red rank-1 tensors are $|+\rangle$ and $|0\rangle$ respectively. 
    
    By taking each lego on the left diagram, we may construct its corresponding tensor enumerator polynomial, represented by a blue tensor in Figure~2 (right). The network contraction yields the overall scalar enumerator of the entire code. 
    
    The examples we construct so far have a small number qubits such that their enumerators can also be computed by brute force. However, we should keep in mind that the tensor network method is far more scalable \cite{followup}. For example, the surface code tensor network may be easily extended to larger sizes. In Fig.~3, we obtain the coefficients of the double enumerator $C(x,z)$ of a surface code on a strip with $748$ qubits. Recall that it captures the $X,Z$ weight distribution of the stabilizer elements. To contrast with the brute force method, computing the same enumerator polynomial will involve checking the weights of at least $2^{747}$ terms, which is clearly infeasible. 
\end{example}

\begin{figure}
    \centering
    \includegraphics[width=\linewidth]{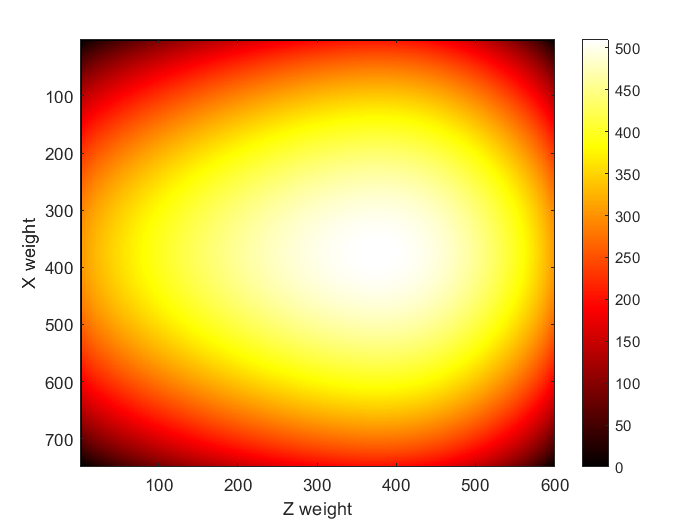}
    \caption{Plot of $\log C[d_x,d_z]$, where $C[d_x,d_z]$ are the coefficients of the double enumerator of a 3-by-150 surface code. $d_x,d_z$ are the $X$ and $Z$ weights respectively.}
\end{figure}

\end{document}